\newcounter{mynumcounter}
\newenvironment{mynum}%
	       {%
		 \begin{list}{(\roman{mynumcounter})\hspace*{\fill}}%
		   {
		     \setlength{\topsep}{0cm}
		     \setlength{\partopsep}{0cm}
		     \setlength{\itemsep}{0ex}
		     \setlength{\parsep}{0cm}
		     \setlength{\leftmargin}{0cm}
		     \setlength{\itemindent}{8mm}		   
		     \setlength{\labelsep}{3mm}
		     \setlength{\labelwidth}{5mm}
		     \usecounter{mynumcounter}
		   }%
	       }
	       {\end{list}}
\newcommand{\eps}{\varepsilon}
\newcommand{\del}{\partial}
\newcommand{\dd}[2]{\frac{\del #1}{\del #2}}
\newcommand{\ddeval}[3]{\left.\dd{#1}{#2}\right|_{#3}}
\newcommand{\DD}[2]{\frac{d #1}{d #2}}
\newcommand{\Vol}{\operatorname{Vol}}
\newcommand{\graph}{\operatorname{graph}}
\newcommand{\tr}{\operatorname{tr}}
\renewcommand{\div}{\operatorname{div}}
\newcommand{\IR}{\mathbf{R}}
\newcommand{\CH}{\mathcal{H}}
\newcommand{\CJ}{\mathcal{J}}
\newcommand{\CP}{\mathcal{P}}
\newcommand{\CT}{\mathcal{T}}
\newcommand{\id}{\operatorname{id}}
\newcommand{\rmd}{\mathrm{d}}
\newcommand{\dist}{\mathrm{dist}}
\newcommand{\dmu}{\,\rmd\mu}
\newcommand{\Dist}[1]{\smash{\sideset{^{#1}}{}{\mathop\mathrm{dist}\nolimits}}}
\newcommand{\distM}{\!\Dist{M}}
\newcommand{\distSig}{\!\Dist{\Sigma}}
\newcommand{\ra}{\rangle}
\newcommand{\la}{\langle}
\newcommand{\inj}{\mathrm{inj}}
\newcommand{\lap}{\Delta}
\newcommand{\Laplacian}[1]{\smash{\sideset{^{#1}}{}{\mathop\lap\nolimits}}}
\newcommand{\lapSig}{\Laplacian{\Sigma}}
\newcommand{\Connection}[1]{\smash{\sideset{^{#1}}{}{\mathop\nabla\nolimits}}}
\newcommand{\nabSig}{\!\Connection{\Sigma}}
\newcommand{\Divergence}[1]{\smash{\sideset{^{#1}}{}{\mathop\mathrm{div}\nolimits}}}
\newcommand{\divSig}{\!\Divergence{\Sigma}}
\newcommand{\divM}{\!\Divergence{M}}
\newcommand{\Trace}[1]{\smash{\sideset{^{#1}}{}{\tr}}}
\newcommand{\trSig}{\Trace{\Sigma}}
\newcommand{\trM}{\Trace{M}}
\newcommand{\Riemann}[1]{\smash{\sideset{^{#1}}{}{\mathop\mathrm{Rm}\nolimits}}}
\newcommand{\RiemM}{\!\Riemann{M}}
\newcommand{\Scalarcurv}[1]{\smash{\sideset{^{#1}}{}{\mathop\mathrm{Sc}\nolimits}}}
\newcommand{\ScalM}{\!\Scalarcurv{M}}
\newcommand{\ScalSig}{\!\Scalarcurv{\Sigma}}
\newcommand{\Acircbar}%
{\hspace*{4pt}\raisebox{8.5pt}{\makebox[-4pt][l]{$\scriptstyle\circ$}}\bar A}
\newcommand{\Kcircbar}%
{\hspace*{4pt}\raisebox{8.5pt}{\makebox[-4pt][l]{$\scriptstyle\circ$}}\bar K}
\newcommand{\half}{\tfrac{1}{2}}
\newcommand{\mnote}[1]{}     
\newcommand{\two}{\mathrm{I\!I}}
\newcommand{\todef}[1]{\emph{#1}}
\title{\titlefamily\Huge 
The area of horizons and the trapped region
}
\author{\titlefamily Lars Andersson
  \thanks{Supported in part by the NSF, under contract no. DMS 0407732 with the University of Miami.}\\
  \titlefamily\small lars.andersson@aei.mpg.de\\
  \titlefamily\small Albert-Einstein-Institut,
  \titlefamily\small Am M\"uhlenberg 1, 14476 Potsdam, Germany.\\[0.5ex]
  \titlefamily\small Department of Mathematics, University of Miami,
  \titlefamily\small Coral Gables, FL 33124, USA.\\  
  \and
  \titlefamily Jan Metzger
  \thanks{Supported in part by a Feodor-Lynen Fellowship of the Humboldt Foundation.}\\
  \titlefamily\small jan.metzger@aei.mpg.de\\
  \titlefamily\small Albert-Einstein-Institut, 
  \titlefamily\small Am M\"uhlenberg 1, D-14476 Potsdam,
  Germany.\\[0.5ex]
  \titlefamily\small Stanford University, Mathematics,
  450 Serra Mall, Stanford, CA 94305, USA.
}
\date{}
\begin{document}
\hyphenation{}
\pagestyle{footnumber}
\maketitle
\thispagestyle{footnumber}
\begin{abst}%
  This paper considers some fundamental questions concerning
  marginally trapped surfaces, or apparent horizons, in Cauchy data
  sets for the Einstein equation.  An area estimate for outermost
  marginally trapped surfaces is proved.  The proof makes use of an
  existence result for marginal surfaces, in the presence of barriers,
  curvature estimates, together with a novel surgery construction for
  marginal surfaces.  These results are applied to characterize the
  boundary of the trapped region.
\end{abst}
\section{Introduction}
\label{sec:introduction}
Trapped and marginally trapped surfaces play a central role in the
analysis of spacetime geometry. By the singularity theorems of Hawking
and Penrose \cite{Hawking-Ellis:1973}, a spacetime which satisfies
suitable energy and causality conditions, and which in addition
contains a trapped surface, must contain a black hole.  Marginally
trapped surfaces, or apparent horizons, serve as the quasi-local
version of black hole boundary. In numerical general relativity, they
are used as excision surfaces for the evolution of black hole initial
data, and approximations to physical characteristics of a black
hole such as linear and angular momentum
\cite{Krishnan:2007pu,Campanelli:2006fy} can be calculated in terms of
data induced on the apparent horizon.

We briefly recall some basic facts. 
A two dimensional spacelike surface $\Sigma$ 
in a 4-dimensional Lorentzian spacetime
has, up to normalization, 
two future pointing null normals. We designate one of these, $\ell^+$, the
outward pointing, and the other $\ell^-$, the inward pointing null normal. 
Corresponding to $\ell^\pm$ we have the null mean 
curvatures or null expansions $\theta^{\pm}$. Let $(M,g,K)$
be a Cauchy data set containing $\Sigma$. Then $\theta^{\pm}$ is given
by
\begin{equation*}
  \theta^{\pm} = P \pm H  
\end{equation*}
where $H$ is the mean curvature of $\Sigma$ in $M$ with respect to the
outward pointing normal, and $P = \tr^\Sigma K$, the trace of the
projection of $K$ to $\Sigma$. The surface $\Sigma$ is said to be
(future) trapped if $\theta^\pm < 0$, and (future) marginally trapped
if $\theta^- < 0$, while $\theta^+ = 0$.  If $\theta^+ < 0$ or
$\theta^+ > 0$, with no condition imposed on $\theta^-$, then $\Sigma$
is called outer trapped or outer untrapped, respectively. Finally, if
the condition $\theta^+ = 0$ holds, with no further condition on
$\theta^-$, then $\Sigma$ is called a marginally outer trapped
surface, or MOTS. We will explicitly review notation and further
conditions needed on $(M,g,K)$ in section~\ref{sec:preliminaries}.

From a mathematical point of view, MOTS are the natural generalization
of minimal surfaces to a Lorentzian setting, see the discussion in
\cite{Andersson-Metzger:2005}. In particular, in the case of
time-symmetric Cauchy data, where $K \equiv 0$, a MOTS is a minimal
surface. However, a fundamental difference between minimal surfaces
and MOTS, is that MOTS are not stationary with respect to an elliptic
functional.
In spite of this, there
is a notion of stability for MOTS 
analogous to the notion of stability for minimal
surfaces, cf. \cite{Andersson-Mars-Simon:2005,andersson-mars-simon:2007}.
Although the 
stability operator in the case of MOTS fails to be
self-adjoint, many of the results and ideas generalize from the case of
stable minimal surfaces to  the case of stable MOTS. In particular, a
curvature estimate, 
generalizing the classical result of
\cite{Schoen-Simon-Yau:1975}  was proved in
\cite{Andersson-Metzger:2005} for the case of stable MOTS. 

The so-called Jang's equation \cite{Jang:1978} is closely related to
the equation $\theta^+ = 0$. Both are prescribed mean curvature
equations, where the right hand side depend on the normal. A careful
study of Jang's equation is a crucial ingredient in the positive mass
proof of Schoen and Yau \cite{Schoen-Yau:1981}.  Among other things,
their argument makes use of the fact that the boundary of the blowup
set for Jang's equation consists of marginal surfaces. This means that
the question of existence of MOTS may be approached by studying the
existence of blowup solutions to Jang's equation. This observation was
used by Yau \cite{Yau:2001} to give a criterion for a Cauchy data set
to contain a marginal surface.

A consequence of the fact that MOTS are not critical points for a
variational principle is that the familiar barrier arguments for the
existence of minimal surfaces do not generalize to MOTS.  However, as
was pointed out by Schoen in a talk given at the Miami Waves
conference in 2004 \cite{Schoen:2004}, the fact that blowup surfaces
for Jang's equation are marginal surfaces actually provides a result
which replaces the above mentioned barrier arguments.
\begin{theorem}
  \label{thm:schoen-intro}
  Let $(M,g,K)$ be a Cauchy data set. Assume that $M$ is compact with
  two boundary components, an inner and an outer boundary and assume
  that the inner boundary is outer trapped and the outer boundary is
  outer untrapped. Then $M$ contains a stable MOTS.
\end{theorem}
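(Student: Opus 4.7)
\emph{Proof plan.}  Following the suggestion of Schoen referenced above, I would produce $\Sigma$ as the cylindrical blowup locus of a regularized Jang equation.  For $\tau>0$ consider the Dirichlet problem
\begin{equation*}
  H(u) + P(u) - \tau u = 0 \text{ on } M, \qquad u = 0 \text{ on } \partial M,
\end{equation*}
where, writing $G_u \subset M\times\IR$ for the graph of $u$ in the Riemannian product, $H(u)$ denotes the mean curvature of $G_u$ with respect to the downward normal and $P(u)$ is the trace over $G_u$ of the trivial extension of $K$ to $M\times\IR$.  This is a quasilinear elliptic equation of prescribed mean curvature type; the regularization $-\tau u$ makes it coercive, so standard elliptic theory yields a smooth solution $u_\tau$ for each $\tau>0$ as soon as uniform $C^{1}$ a priori estimates are available.

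The trapping hypotheses on $\partial M$ turn the two boundary components into a pair of barriers for the unregularized Jang equation: near the inner boundary, the condition $\theta^+ < 0$ prevents a Jang graph from being tangent to $\partial M$ from inside with a bounded normal, while near the outer boundary $\theta^+ > 0$ provides the corresponding barrier from the opposite side.  These barriers, together with the maximum principle applied to $\tau u_\tau$, force $\|u_\tau\|_{C^0}\to\infty$ as $\tau\to0$.  Choosing base points $x_\tau$ in the region where $|u_\tau|$ is maximal and taking subsequential limits of the vertical translates $u_\tau - u_\tau(x_\tau)$, the interior $C^{1,\alpha}$ estimates for graphs of prescribed mean curvature from \cite{Schoen-Yau:1981} show that the translated graphs subconverge on compact subsets of $M\times\IR$ to a properly embedded vertical cylinder $\Sigma \times \IR$.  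Writing the limit equation $H+P=0$ on this cylinder produces $H_\Sigma + \tr_\Sigma K = \theta^+(\Sigma) = 0$, so $\Sigma$ is a MOTS, and the curvature estimate of \cite{Andersson-Metzger:2005} ensures that $\Sigma$ is smooth and embedded; the boundary barriers confine $\Sigma$ to the interior of $M$.

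Stability then falls out of the approximation: the level sets $\{u_\tau = c\}$ for $c$ in the blowup range are eventually graphs over $\Sigma$ whose defining equation linearizes at $\Sigma$ to the (non-self-adjoint) MOTS stability operator, and the positive vertical shift functions they inherit from the blowup yield a positive function on $\Sigma$ witnessing a non-negative principal eigenvalue in the sense of \cite{Andersson-Mars-Simon:2005,andersson-mars-simon:2007}.  The main obstacle I anticipate lies in the blowup analysis of the second step: ruling out higher-multiplicity convergence, partial blowup that does not close up into a single compact $\Sigma$, and accumulation of $\Sigma$ at $\partial M$.  These pathologies have to be excluded by combining the boundary barriers coming from the trapping conditions with the Andersson--Metzger curvature estimate applied uniformly to the approximating near-cylindrical pieces of $G_{u_\tau}$.
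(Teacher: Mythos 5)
Your overall strategy (regularize Jang's equation, let $\tau\to 0$, read off the MOTS from the blowup locus) is the one used in the paper, but the specific setup you propose breaks the mechanism that actually produces the surface. With Dirichlet data $u=0$ on $\del M$ there is nothing that forces $\|u_\tau\|_{C^0}\to\infty$: the maximum principle gives only the \emph{upper} bound $\sup|u_\tau|\leq C/\tau$, not a lower bound, and asserting that the trapping conditions force interior blowup is circular, since Jang blowup occurs precisely along MOTS, whose existence is what you are trying to prove. The paper instead solves $\CJ_\tau[f_\tau]=0$ with boundary values $+\delta/(2\tau)$ on $\del^-M$ and $-\delta/(2\tau)$ on $\del^+M$; combined with a boundary gradient estimate this pins $|f_\tau|\geq\delta/(4\tau)$ in a collar, so as $\tau\to 0$ the set $\Omega^+$ where $f_\tau\to+\infty$ contains a neighborhood of $\del^-M$ and $\Omega^-$ contains a neighborhood of $\del^+M$; the nonempty interface $\Sigma=\del\Omega^-\setminus\del^+M$, homologous to $\del^+M$, is the MOTS. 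Moreover, solvability of the regularized Dirichlet problem is not ``standard elliptic theory'': the boundary gradient barriers require the parallel surfaces to $\del M$ to have $H>\delta$ and $K\equiv 0$ in a collar, which the paper arranges by first bending the data toward the null cones at $\del^\pm M$ and cutting off $K$ (section on preparing the data), and then uses the untouched barrier foliations plus the strong maximum principle to keep $\Sigma$ out of the modified region. None of this is replaced by anything in your plan, and your recentering at points of maximal $|u_\tau|$ does not by itself yield a compact cylindrical limit.

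The stability step also has a genuine gap. What the graphical convergence of $\graph f_\tau$ to a cylinder gives (via the vertical translation function) is only the \emph{symmetrized} stability of Galloway--Schoen, i.e.\ non-negativity of $\tilde L_M$, not non-negativity of the principal eigenvalue of the non-self-adjoint operator $L_M$; the paper states this distinction explicitly at the start of section 4 and proves MOTS-stability by a separate argument: assuming a component $\Gamma$ is unstable, flow by the principal eigenfunction to get a foliation with $\theta^+$ of controlled sign, build an explicit subsolution $w$ of Jang's equation with $\CJ[w]\geq\eta>0$ on an annulus around $\Gamma$, and use the comparison principle to bound $f_\tau$ from below near $\Gamma$, contradicting $\Gamma\subset\del\Omega^-$. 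Your sentence about positive vertical shift functions ``witnessing a non-negative principal eigenvalue'' asserts exactly the point that needs proof. Finally, note that invoking the Andersson--Metzger curvature estimate to conclude that $\Sigma$ is smooth and embedded is circular at that stage (it applies to \emph{stable} MOTS); the paper instead uses the Schoen--Yau curvature estimates for the Jang graphs, uniform in $\tau$, and gets embeddedness because $\Sigma$ arises as the boundary of the blowup set $\Omega^-$.
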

This theorem is a consequence of Schoen's original result, stated as
theorem~\ref{thm:schoen} and a closer analysis of the blow-up surface, cf. 
theorem~\ref{thm:schoen-stable}. Unfortunately, a proof of
theorem~\ref{thm:schoen} has not been published. In
section~\ref{sec:theorem-schoen} we therefore provide a detailed proof
of this result, of which we will make use of throughout the present
paper.

We wish to remark here that if the ambient manifold is asymptotically
flat with appropriate fall-off conditions, then spheres near infinity
will be untrapped and can serve as outer barriers in
theorem~\ref{thm:schoen-intro}.

Starting from the curvature estimates for MOTS mentioned above, it is
easy to show that the set of all stable marginally trapped surfaces in
a compact region is compact, given a uniform estimate for the area.
However, such an estimate cannot be expect to hold in general.
Examples due to Colding-Minicozzi and others
\cite{Colding-Minicozzi:1999,Dean:2003} show that for each genus
$g\geq 1$ there is an example of a compact three dimensional manifold containing a
sequence of stable minimal surfaces of genus $g$ with unbounded area.
Recalling that minimal surfaces are MOTS in the special case $K=0$,
this shows that an a priori area estimate for MOTS requires further
conditions.

If we consider surfaces \emph{minimizing} area in a given homology
class, on the other hand, there is no need to prove an area bound to
obtain compactness, as one can assume that the area is bounded by the
area of any comparison surface. For the case of MOTS, the appropriate
analogue of a minimizing surface is an \emph{outermost} MOTS. We say
that a MOTS $\Sigma$ is outermost in $M$ if there is no other MOTS in
the complement of the region which $\Sigma$ bounds with a, possibly
empty, inner boundary. In this respect, the main result of this paper,
cf.\ theorem~\ref{thm:area-bound} is an area estimate for the
outermost MOTS.
\begin{theorem}
  \label{thm:intro-bound}
  There exists a constant $C$ which is an increasing function of
  $\|\RiemM\|_{C^0(M)}$, $\|K\|_{C^1(M)}$, $\inj_\rho(M,g,K;\del
  M)^{-1}$, and $\Vol M$ such that the area of an outermost MOTS
  $\Sigma$ satisfies the estimate
  \begin{equation*}
    |\Sigma|\leq C.
  \end{equation*}
  The quantity $\inj_\rho(M,g,K;\del M)$ is explained in
  definition~\ref{def:injectivity_radius}.
\end{theorem}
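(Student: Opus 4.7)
The plan is to argue by contradiction. Suppose the asserted bound fails, so there is a sequence of outermost MOTS $\Sigma_n \subset M$ with $|\Sigma_n|\to\infty$, all inside a fixed Cauchy data set $(M,g,K)$ satisfying the bounds of the statement.

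First, I obtain uniform local regularity. Outermost MOTS are stable in the MOTS sense, so the curvature estimate of \cite{Andersson-Metzger:2005} provides a uniform $C^0$-bound on the second fundamental form of $\Sigma_n$ in terms of precisely the quantities appearing in the hypothesis. From this follows a scale $\rho_0>0$, depending only on the allowed data, such that in suitable coordinates every connected component of $\Sigma_n\cap B_{\rho_0}(p)$ is a $C^{1,\alpha}$ graph over its tangent plane with controlled norms and with area at most $C\rho_0^2$.

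Second, I invoke a pigeonhole argument. The bounds on $\Vol M$, $\|\RiemM\|_{C^0}$, and $\inj_\rho$ let us cover $M$ with a controlled number $N_0$ of balls $B_{\rho_0}(p_k)$. If $|\Sigma_n|$ is large enough, some ball $B_{\rho_0}(p_{k_n})$ must meet $\Sigma_n$ in arbitrarily many disjoint graphical sheets. Elliptic regularity for the prescribed mean curvature equation $\theta^+=0$ upgrades the $C^{1,\alpha}$ control on these sheets to $C^k$ for any fixed $k$. Choosing the two closest sheets yields a pair whose $C^k$-distance tends to zero as $n\to\infty$ while remaining uniformly bounded. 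I then apply the surgery construction for marginal surfaces developed later in the paper to replace $\Sigma_n\cap B_{\rho_0}(p_{k_n})$ by a new surface $\tilde\Sigma_n$ which (a) coincides with $\Sigma_n$ outside the ball, (b) bounds a region $\tilde\Omega_n$ strictly containing the region $\Omega_n$ bounded by $\Sigma_n$, and (c) is outer trapped in the sense required by Theorem~\ref{thm:schoen-intro}. Applying Theorem~\ref{thm:schoen-intro} to $M\setminus\tilde\Omega_n$, with inner barrier $\tilde\Sigma_n$ and the given outer untrapped barrier as outer boundary, produces a stable MOTS $\Sigma'$ in $M\setminus\tilde\Omega_n$; since $\Omega_n\subsetneq\tilde\Omega_n$, the surface $\Sigma'$ lies strictly outside $\Sigma_n$, contradicting the outermost property of $\Sigma_n$.

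The main obstacle is the surgery step. Because MOTS are not critical points of any elliptic functional, one cannot simply replace the joined region by an area-minimizer and argue by area comparison; both the outward displacement of the glued surface and the sign of $\theta^+ = H+P$ after the cut-and-paste must be controlled explicitly. Achieving this requires the quantitative smooth closeness of the two sheets provided by the curvature estimates, together with a careful choice of neck so that the barrier condition needed by Theorem~\ref{thm:schoen-intro} survives the operation. Packaging the surgery in a way which is both geometrically quantitative and compatible with the existence result is the principal technical work of the argument.
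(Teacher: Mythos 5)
Your core mechanism is the same as the paper's (stability gives the curvature bound, a neck surgery produces a weakly outer trapped surface enclosing strictly more, and the Jang-equation existence theorem then contradicts the outermost property), but the step ``choose the two closest sheets'' has a genuine hole: closeness of two graphical sheets says nothing about which side of $\Sigma_n$ the thin slab between them lies on. If that slab lies on the \emph{inner} side (and nothing prevents the closest pair from being of this type), the neck is glued on the wrong side: the resulting surface does not bound a region strictly outside $\Sigma_n$, and no contradiction with outermostness is obtained. This orientation question is precisely what the paper's argument is built around: instead of a closest pair, it proves a lower bound $i^+(\Sigma)\geq\delta$ on the \emph{outer} injectivity radius (corollary~\ref{thm:outermost_iplus}) and runs a point-selection using the set $U_r$ of points of $\Sigma$ touched by balls $B_r(O)$ with $B_r(O)$ contained in the exterior region, extracting $p,q\in\del B_r(O)$ with angle close to $\pi$, so that the inserted neck is guaranteed to lie outside $\Sigma$. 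Your pigeonhole can be repaired (among many stacked sheets the gaps alternate between interior and exterior, so a small \emph{exterior} gap exists), but as written the decisive point is unaddressed. Note also that, once $i^+(\Sigma)\geq\delta$ is known, the paper gets the area bound without any pigeonhole: proposition~\ref{prop:area-bound} applies the divergence theorem on the embedded outward tube of thickness $\delta$ to give $|\Sigma|\leq c\,(\delta^{-1}+\sup_\Sigma|A|)\Vol(M)$, which also makes the dependence of $C$ on exactly the listed quantities explicit, whereas your sequence-based contradiction yields this only after being recast quantitatively.

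The second gap is your condition (c). Away from the inserted neck the glued surface coincides with $\Sigma_n$, where $\theta^+\equiv 0$; after the Kriele--Hayward smoothing one only has $\theta^+\leq 0$ with strict inequality near the neck. Theorem~\ref{thm:schoen-intro} requires a \emph{strictly} outer trapped inner barrier, so it does not apply directly to this surface. The paper closes this with theorem~\ref{thm:schoen-weak}: components with $\theta^+\leq 0$, $\theta^+\not\equiv 0$ are pushed strictly outward by the flow of lemma~\ref{lemma:flow} to become strictly trapped, while components with $\theta^+\equiv 0$ (the untouched components of $\Sigma_n$) require eigenfunction deformations and, in the marginally stable case, a modification of the data $K$. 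This is genuine additional work that your outline assumes away; without it the existence step, and hence the contradiction with the outermost property, does not go through as stated.
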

This result does not require the MOTS to be connected.  Thus, in
combination with the curvature estimate for stable MOTS we infer an
estimate for the number of components of the outermost MOTS.

Note, even for outward minimizing surfaces the above bound does not
actually follow from the variational principle, as it does not refer
to the area of a comparison surface. In this respect our area estimate
is related to the area estimate in \cite{nabutowsky:rotman:2006} for
minimizing minimal surfaces in terms of volume and the homologial
filling functions of the ambient manifold, which must have simple
enough homology.

To put theorem~\ref{thm:intro-bound} into perspective, recall that the
Penrose inequality is a conjectured relation between the ADM mass and
the area of the horizon. For a general Cauchy data set, the exact
statement of the Penrose inequality is a subtle issue. Although, the
area estimate stated in theorem~\ref{thm:intro-bound} holds for
outermost MOTS, a counter example due to Ben-Dov~\cite{Ben-Dov:2004}
shows that an inequality between the area of the outermost MOTS and
the ADM mass does not hold in general.

One of the main steps in the proof of theorem~\ref{thm:intro-bound} is
a surgery argument, which is given in section~\ref{sec:surgery}. This
argument constructs, given a stable MOTS $\Sigma$ with sufficiently
large area and an outer barrier surface, another stable MOTS outside
$\Sigma$. The two main steps in the argument is to show, using the
curvature estimate, that given a stable MOTS with sufficiently large
area, it is possible to glue in a neck with negative $\theta^+$,
thereby constructing a $\Sigma'$ outside $\Sigma$ with $\theta^+ \leq
0$. Together with theorem~\ref{thm:schoen-intro} this yields a contradiction to the
assumption that $\Sigma$ is outermost.

The surgery argument may also be used to give a replacement for the
strong maximum principle for outermost MOTS. It should be noted that
for general MOTS, the strong maximum principle does not apply in
general, in particular it can not be used to rule out that a surface
touches itself in points where the the normals of the two touching
pieces point into opposite directions. This is the exactly the
situation which we can address with the surgery argument.

Combining the above area estimate for outermost MOTS and the
curvature estimate of \cite{Andersson-Metzger:2005} yields, as already
mentioned, a compactness result for the class of outermost MOTS 
in a compact region. Using this fact in combination with the surgery
technique discussed above enables us to give a characterization of the
boundary of the trapped region. 

The outer trapped region is the union of all domains bounded by a
weakly outer trapped surface and the, possibly empty, interior
boundary of the initial data set. It has been proposed by several
authors that the boundary of the outer trapped region is a smooth
MOTS. However, the arguments put forth to prove this, see for example
\cite{Hawking-Ellis:1973,Kriele-Hayward:1997}, relied on strong extra
assumptions such as a piecewise smoothness of the boundary. Using the
techniques developed in this paper we are able to settle this problem
completely.
\begin{theorem}
  \label{thm:intro-trapped}
  The boundary of the outer trapped region is a smooth outermost MOTS.
  Furthermore, it is the unique outermost MOTS.
\end{theorem}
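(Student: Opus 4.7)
The plan is threefold: (i) construct a stable MOTS $\Sigma_*$ which equals $\partial \mathcal{T}^+$, as a smooth limit of MOTS produced by Theorem~\ref{thm:schoen-intro} just outside an exhausting sequence of strictly outer trapped barriers inside $\mathcal{T}^+$; (ii) observe that $\Sigma_*$ must be outermost, directly from the definition of $\mathcal{T}^+$; and (iii) use the surgery construction of Section~\ref{sec:surgery} together with another application of Theorem~\ref{thm:schoen-intro} to rule out a second outermost MOTS.

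For step (i), I would choose compact domains $\Omega_n \subseteq \mathcal{T}^+$ with smooth strictly outer trapped boundary, exhausting $\mathcal{T}^+$ from the inside---this is possible because $\mathcal{T}^+$ is by definition the union of domains bounded by weakly outer trapped surfaces, and such barriers can be pushed slightly inward to become strict. Applying Theorem~\ref{thm:schoen-intro} in $M \setminus \Omega_n$ yields a stable MOTS $\Sigma_n$, which I would take to be outermost in this subregion; the resulting nested domains satisfy $\Omega_n \subseteq \Omega_n^+ \subseteq \mathcal{T}^+$, where $\Omega_n^+$ is the region bounded by $\Sigma_n$. Theorem~\ref{thm:intro-bound}, applied with $(M \setminus \Omega_n, g, K)$ as ambient data, gives a uniform area bound for the $\Sigma_n$, and combined with the curvature estimate for stable MOTS of \cite{Andersson-Metzger:2005} this yields smooth subsequential convergence to a stable MOTS $\Sigma_*$. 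Monotone convergence of the bounded regions then identifies $\Sigma_* = \partial \mathcal{T}^+$. Step (ii) is immediate: a MOTS strictly outside $\Sigma_*$ would bound a domain strictly containing $\mathcal{T}^+$, which is impossible by definition. For (iii), a second outermost MOTS $\Sigma'$ bounds a domain $\Omega' \subseteq \mathcal{T}^+ = \Omega_*$ and, being outermost itself, cannot lie strictly inside $\Sigma_*$; hence $\Sigma'$ and $\Sigma_*$ must touch from opposite sides. The surgery construction of Section~\ref{sec:surgery}, designed precisely for touching configurations with oppositely oriented normals, then produces a strictly outer trapped surface strictly containing $\Omega' \cup \Omega_*$, whereupon Theorem~\ref{thm:schoen-intro} yields a MOTS strictly outside $\Sigma'$, a contradiction.

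The main obstacle will be the uniform geometric control needed in step (i). The intrinsic bounds on $\|\RiemM\|_{C^0}$, $\|K\|_{C^1}$, and $\Vol M$ are inherited directly from $M$; however, the injectivity-radius quantity $\inj_\rho$ appearing in Theorem~\ref{thm:intro-bound} is computed relative to the boundary of the ambient data set and could a priori degenerate as $\partial \Omega_n$ is pushed toward $\partial \mathcal{T}^+$ and $\Sigma_n$ accumulates near it. One must use the fact that $\Sigma_n$ is outermost in $M \setminus \Omega_n$ together with the curvature estimate to keep $\Sigma_n$ a definite distance from $\partial \Omega_n$, or alternatively to tailor the exhausting sequence so that this degeneracy cannot occur. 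A secondary subtlety is verifying, in step (iii), that $\Sigma_*$ and $\Sigma'$ meet tangentially from opposite sides rather than intersecting transversally, so that the surgery argument---understood as a substitute for the strong maximum principle, exactly as advertised in the discussion preceding the statement---actually applies.
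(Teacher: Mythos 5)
Your overall skeleton---approximate $\del^+ T$ by stable MOTS coming from barrier existence, get compactness from curvature and area bounds, and use surgery as a substitute for the strong maximum principle---is the same as the paper's (section~\ref{sec:trapped-region}), but the two steps you lean on most heavily are not available as stated, and they are exactly where the work lies. You take each $\Sigma_n$ to be \emph{outermost} in $M\setminus\Omega_n$ so that theorem~\ref{thm:intro-bound} applies. Existence of an outermost MOTS is not known at this stage---it is essentially part of what theorem~\ref{thm:intro-trapped} asserts---and theorem~\ref{thm:schoen-intro} only produces \emph{some} stable MOTS, so this is circular. Even granting it, the constant in theorem~\ref{thm:intro-bound} for the data $(M\setminus\Omega_n,g,K)$ involves $\inj_\rho$ computed relative to the new inner boundary $\del\Omega_n$, which by lemma~\ref{lem:barrdist} depends on $\inf|\theta^+[\del\Omega_n]|$ and $\|A\|_{C^0(\del\Omega_n)}$; for an arbitrary, slightly pushed-in trapped boundary these are completely uncontrolled, so the degeneracy you flag as the ``main obstacle'' is fatal to this route, and neither of your proposed fixes supplies the missing estimate. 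The paper never applies the outermost area bound in shrinking subregions: curvature is controlled by stability in the \emph{fixed} data set (theorem~\ref{thm:curv-est}), the lower bound $i^+(\Sigma)\geq\delta$ comes from the iterated surgery of theorem~\ref{thm:uniform_trapped_outside} with $\delta$ depending only on $(M,g,K)$, and the area bound is then proposition~\ref{prop:area-bound}; compactness is applied to the class $\CT$ of trapped sets whose boundaries carry these three uniform bounds.

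The second gap is the exhaustion itself. Weakly outer trapped sets are not closed under union---the boundaries of two such sets will in general intersect---so ``choose an increasing sequence of strictly outer trapped domains exhausting $T$'' and ``monotone convergence identifies $\Sigma_*=\del^+T$'' conceal the main construction: producing a nested approximating family that reaches every point of $\del^+T$ requires combining overlapping trapped sets, which in the paper is lemma~\ref{lemma:no_intersect}, built from lemma~\ref{lemma:transverse_intersect}, the corner smoothing of lemma~\ref{thm:smooth_corner}, lemma~\ref{lemma:flow}, the surgery of proposition~\ref{thm:trapped_outside} and theorem~\ref{thm:schoen-weak}. Nestedness is also what rules out the $C^{1,\alpha}$ limit touching itself from the inside, while $i^+\geq\delta$ rules out touching from the outside; your proposal addresses neither possibility. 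Note too that a boundary component which is already a MOTS cannot in general be ``pushed slightly inward to become strict''; for stable components the paper must deform $K$ or extend the manifold, as in the proof of theorem~\ref{thm:schoen-weak}. By contrast, your steps (ii) and (iii) are easier than you make them: once $\del^+T$ is known to be a smooth stable MOTS, every MOTS bounds a weakly outer trapped set contained in $T$, so the outermost property and uniqueness follow from definition~\ref{def:outermost} together with lemma~\ref{lemma:no_intersect}, without the touching analysis and extra surgery you describe.
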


The boundary of the outer trapped region is defined and examined in
section~\ref{sec:trapped-region}, where
theorem~\ref{thm:trapped-region} is proved, a more precise version of
theorem~\ref{thm:intro-trapped}. The main idea here is that barrier
constructions using a smoothing result from
Kriele-Hayward~\cite{Kriele-Hayward:1997}, cf.\
lemma~\ref{thm:smooth_corner}, and theorem~\ref{thm:schoen-weak} can
be used to prove a replacement for the maximum principle for outermost
MOTS. Together with the compactness properties for stable MOTS, and
the area estimate for outermost MOTS, this gives the result.

Although the presentation here is restricted to the $n=3$ dimensional
case, most of the techniques proposed generalize to higher dimensions.
The points which need to be addressed in the higher dimensional case
are regularity issues for Jang's equation, cf.\
remark~\ref{rem:regularity}, and the a priori curvature estimates for
stable MOTS used in the surgery procedure of
section~\ref{sec:surgery}. See \cite{Eichmair:2007} for a treatment of
these issues in the higher dimensional case.


%
\section{Preliminaries}
\label{sec:preliminaries}
An initial data set for the Einstein equations is a 3-dimensional
Riemannian manifold $(M,g)$ together with a symmetric two-tensor 
$K$ representing the second fundamental form of $M$ viewed as a Cauchy 
hypersurface in a four dimensional spacetime. In this paper we will not make
further use of the spacetime geometry and in particular, energy conditions or
constraint equations on $(g,K)$ are not needed for this paper. 

A surface in $M$ is called \emph{two-sided} if its normal bundle is
orientable, i.e. if it is possible to choose a globally defined
normal.  As there are two such choices we will assume that there is
one distinguished direction which we call the \emph{outer normal}. We
will denote this outer normal vector field by $\nu$.

Given a two-sided surface $\Sigma$ in $M$, we denote its second
fundamental form, defined with respect to it outer normal $\nu$, by
$A$.  Further, we denote by $H, P$ the mean curvature, $H = \divSig \,
\nu$, and the trace of $K^\Sigma = K|_{T\Sigma}$ along $\Sigma$, $P =
\trSig K^\Sigma$, respectively. The outward null expansion of $\Sigma$
is the quantity $\theta^+ = P + H$ and the inward null expansion is
$\theta^- = P - H$. The null expansions $\theta^\pm$ are the traces of
the null second fundamental forms $\chi^\pm = K^\Sigma \pm A$.
\begin{definition}
  A smooth, embedded, compact, two-sided surface $\Sigma$ is
  a \todef{marginal\-ly outer trapped surface (MOTS)} if
  $\theta^+=0$ on $\Sigma$.
\end{definition}
Unless otherwise stated, we shall consider data sets $(M,g,K)$ with the
following properties. We assume $M$ is a compact manifold with
boundary $\del M$ such that $\del M = \del^-M \cup \del^+M$ is the
disjoint union of a possibly empty 
\emph{inner} boundary $\del^-M$, which we endow
with the normal vector field pointing into $M$ and the non-empty
\emph{outer} boundary $\del^+M$ which we endow with the normal vector
field pointing out of $M$. We assume the outer boundary is a barrier, i.e. 
$\theta^+[\del^+ M] > 0$. All fields are assumed to be smooth up to
boundary. 
\begin{definition}
  \label{def:bounding}
  We say that $\Sigma$ bounds a region $\Omega\subset M$ with respect
  to $\del^+M$, if the boundary $\del\Omega$ is the disjoint
  union $\del \Omega = \Sigma \cup \del^+ M$. 

  In this case, the normal pointing into $\Omega$ will be used as the
  outer normal for $\Sigma$.
\end{definition}
Note that if $\Sigma$ bounds with respect to $\del^+M$, then $\Sigma$
is homologous to $\del^+M$.

For the existence results, theorems~\ref{thm:schoen}
and~\ref{thm:schoen-weak}, we need a non-empty $\del^-M$ with
$\theta^+[\del^- M] < 0$ as inner barrier surface. On the other
hand, for the area bound, theorem~\ref{thm:area-bound}, and
theorem~\ref{thm:trapped-region}, which shows regularity of the
trapped region, we allow $\del^-M$ to be empty, and assume that
$\del^- M$ is a weak barrier, $\theta^+[\del^- M] \leq 0$, if
nonempty.
\begin{definition}
  \label{def:outermost}
  If $(M,g,K)$ is as before, with $\del^-M$ possibly empty, then an
  \todef{outermost MOTS} is a MOTS $\Sigma$ which bounds a region
  $\Omega$ with respect to $\del^+ M$ as in
  definition~\ref{def:bounding} with the following properties.  If
  $\Sigma'$ is a MOTS bounding a set $\Omega'$ with respect to
  $\del^+M$ with $\Omega'\subset\Omega$, then $\Omega'=\Omega$.
\end{definition}
We recall the strong maximum principle for MOTS. Note that it is only
valid if two surfaces touch with the normals pointing in the same
direction, as the surfaces have to be oriented the same way to use the
maximum principle for quasilinear elliptic equations of second order
\cite{Ashtekar-Galloway:2005,Gilbarg-Trudinger:1998}.
\begin{proposition}
  \label{prop:max_princ}
  Let $(M,g,K)$ be an initial data set and let $\Sigma_i\subset M$,
  $i=1,2$ be two connected $C^2$-surfaces touching at one point $p$,
  such that the outer normals of $\Sigma_i$ agree at $p$.  Assume
  furthermore that $\Sigma_2$ lies to the outside of $\Sigma_1$, that
  is in direction of its outer normal near $p$, and that
  \begin{equation*}
    \sup_{\Sigma_1} \theta^+[\Sigma_1] \leq \inf_{\Sigma_2} \theta^+[\Sigma_2].
  \end{equation*}
  Then $\Sigma_1=\Sigma_2$.
\end{proposition}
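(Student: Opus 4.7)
The plan is to work locally near the touching point $p$, represent each $\Sigma_i$ as a graph over the common tangent plane, reduce the statement to a linear elliptic inequality for the difference of the two graph functions, and then close the argument with the Hopf strong maximum principle and a connectedness argument.

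First I would choose coordinates $(x^1,x^2,x^3)$ centered at $p$ so that the common tangent plane $T_p\Sigma_1=T_p\Sigma_2$ is the $\{x^3=0\}$-plane and the common outer normal $\nu(p)$ points in the positive $x^3$ direction. Since both surfaces are $C^2$ and tangent to this plane at $p$, near $p$ each $\Sigma_i$ is the graph of a function $u_i\colon U\to\IR$ with $u_i(0)=0$ and $Du_i(0)=0$. The hypothesis that $\Sigma_2$ lies on the outer side of $\Sigma_1$ in the direction of the common outer normal translates to $u_2\ge u_1$ on $U$, with equality at $0$.

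Next I would express the outward null expansion $\theta^+[\Sigma_u]$ of the graph of an arbitrary function $u$ as a second-order partial differential operator $F(x,u,Du,D^2u)$ acting on $u$. The mean curvature $H$ of a graph is a standard quasilinear elliptic operator whose principal symbol is the mean-curvature symbol $(1+|Du|^2)^{-1/2}(\delta^{ij}-D^iu\,D^ju/(1+|Du|^2))$, while $P=\tr^\Sigma K$ depends on the position and the tangent plane of the graph, hence only on $u$ and $Du$. Therefore $F$ is quasilinear and uniformly elliptic in a neighborhood of $0$.

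Then I would linearize the difference. Setting $w:=u_2-u_1\ge 0$ with $w(0)=0$, the mean-value identity
\begin{equation*}
F(x,u_2,Du_2,D^2u_2)-F(x,u_1,Du_1,D^2u_1)
=\int_0^1\tDD{}{t}F(x,u_1+tw,D(u_1+tw),D^2(u_1+tw))\,\rmd t
\end{equation*}
produces, after differentiating under the integral, a linear operator $Lw=a^{ij}\partial_{ij}w+b^i\partial_i w+c\,w$ with bounded coefficients, where $a^{ij}(x)$ is positive definite by ellipticity of $F$. The pointwise form of the hypothesis $\sup_{\Sigma_1}\theta^+\le\inf_{\Sigma_2}\theta^+$ then gives a one-signed inequality of the form $Lw\ge 0$ (with the sign adjusted by orientation conventions).

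Finally I would invoke the Hopf strong maximum principle. Since $w\ge 0$, attains its minimum value $0$ at the interior point $0$, and satisfies a linear elliptic differential inequality, the standard strong maximum principle applied to the comparison with zero (which does not require a sign on the zeroth-order coefficient, or alternatively after the exponential substitution $w=e^{\alpha x^1}\tilde w$) yields $w\equiv 0$ in a neighborhood of $0$, i.e., $\Sigma_1$ and $\Sigma_2$ coincide near $p$. A standard open-closed argument on $\Sigma_1$ then upgrades this local coincidence to a global one: the set of points in $\Sigma_1$ at which $\Sigma_1$ and $\Sigma_2$ agree (and have matching outer normals and tangent planes) is closed by continuity and open by the local argument just given, and nonempty since $p$ lies in it; connectedness of $\Sigma_1$ and $\Sigma_2$ then forces $\Sigma_1=\Sigma_2$.

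The main obstacle is purely bookkeeping: making sure the sign of the null expansion conventions, the choice of outer normal, and the sign of the ellipticity of $F$ are tracked consistently so that the linearized inequality $Lw\ge 0$ (or $Lw\le 0$) is of the correct type to trigger the strong maximum principle against the boundary condition $w\ge 0$, $w(p)=0$. The use of a maximum principle which compares only with zero, rather than requiring $c\le 0$, is essential since the coefficient $c$ arising in the linearization need not be sign-definite.
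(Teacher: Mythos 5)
The paper never actually proves this proposition --- it only recalls it, citing the maximum principle for quasilinear elliptic equations of second order (Gilbarg--Trudinger, Ashtekar--Galloway) --- and your argument is precisely the standard proof behind that citation: write both surfaces as graphs $u_1\le u_2$ over the common tangent plane at $p$, note that $\theta^+$ is a quasilinear elliptic operator in the graph function, linearize the difference to get a one-signed elliptic inequality for $w=u_2-u_1\ge 0$ with $w(p)=0$, apply the Hopf strong maximum principle in the form valid for nonnegative solutions vanishing at an interior point, and conclude globally by an open--closed argument using connectedness; this is essentially correct and matches the route the paper points to. The only caveat is your parenthetical alternative: the substitution $w=e^{\alpha x^1}\tilde w$ does not repair a sign-indefinite zeroth-order coefficient (it shifts $c$ by $\alpha b^1+\alpha^2 a^{11}$, which cannot be made nonpositive), so the correct tool is the one you name first, namely the comparison-with-zero version obtained by replacing $c$ with $-c^-$, which is legitimate exactly because $w\ge 0$.
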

If $\theta^+[\del^-M]<0$ and $\theta^+[\del^+ M]>0$ then by continuity
the parallel surfaces to $\del^\pm M$, i.e. 
the level sets of the distance $\dist(\cdot,\del^{\pm} M)$, 
will satisfy the same inequality if the distance is sufficiently small. 
For later use we formalize this in the
following definition. 
\begin{definition}
  \label{def:boundary_as_barrier}
  Assume $\theta^+[\del^-M] < 0$ and $\theta^+[\del^+ M] > 0$. Denote
  by $\Sigma^\pm_s$ the parallel surface to $\del^\pm M$ at distance
  $s$. Let
  \begin{equation*}
    \rho^+(M,g,K;\del^+ M)
    :=
    \sup \big\{ s : \Sigma^+_s\ \text{is smooth, embedded and}\
    \theta^+[\Sigma^+_s] >0 \big\}
  \end{equation*}
  and
  \begin{equation*}
    \rho^-(M,g,K;\del^- M)
    :=
    \sup \big\{ s : \Sigma^-_s\ \text{is smooth, embedded and}\
    \theta^+[\Sigma^-_s] < 0 \big\}
  \end{equation*}
  where we set $\rho^-(M,g,K;\del^- M) = \infty$ if $\del^-M
  =\emptyset$. Let
  \begin{equation*}
    \rho (M,g,K; \del M)
    :=
    \min \big\{ \rho(M,g,K;\del^+ M), \rho^-(M,g,K;\del^- M) \big\}.
  \end{equation*}
\end{definition}
Note that $\rho (M,g,K; \del M)$ only depends on the geometry of
$(M,g,K)$. In fact we have 
\begin{lemma}
  \label{lem:barrdist} 
  Assume $\theta^+[\del^-M] < 0$ and $\theta^+[\del^+ M] > 0$. 
  Let $\|A\|_{C^0(\del M)}$ be the norm of
  the second fundamental form of the boundary. 
  There is a constant $C$ depending only on $\inf_{\del M} |\theta^+[\del M]|$,
  $\|K\|_{C^1(M)}$, $\|\RiemM\|_{C^0(M)}$,
  and $\|A\|_{C^0(\del M)}\,$ such that 
  \begin{equation*}
    \rho (M,g,K; \del M)^{-1} \leq C.
  \end{equation*}
\end{lemma}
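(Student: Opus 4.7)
The plan is to parametrize the parallel surfaces via the normal exponential map and derive ordinary differential equations along the unit-speed geodesics issuing normally from $\del M$. I focus on the outer boundary (the argument for the inner boundary is analogous with signs reversed): write $F(p,s) = \exp_p(-s\nu(p))$ for the inward flow by the outer normal $\nu$ of $\del^+M$, and let $\tilde\nu$ denote the unit normal to $\Sigma^+_s$ chosen as in definition~\ref{def:bounding}. Two quantities need to be controlled on the range of $s$ of interest: the second fundamental form $A_s$ of $\Sigma^+_s$, which governs both smoothness and embeddedness of the parallel surfaces via the focal radius and the tubular neighborhood theorem, and $\theta^+[\Sigma^+_s]$ itself.

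For $A_s$ I would invoke the Riccati-type evolution equation along the flow, in which $\nabM_{\tilde\nu} A_s$ is quadratic in $A_s$ plus a term controlled by $\RiemM$. Since $\|A_0\|_{C^0}$ is bounded by $\|A\|_{C^0(\del M)}$ and the curvature term by $\|\RiemM\|_{C^0(M)}$, an elementary ODE comparison produces $s_0>0$ and $C_1$, depending only on these two quantities, such that $\|A_s\|_{C^0(\Sigma^+_s)}\le C_1$ and no focal point develops for $s\in[0,s_0]$. Together with the tubular neighborhood theorem this guarantees that each $\Sigma^+_s$ is smooth and embedded on this range.

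Next I would compute the evolution of $\theta^+[\Sigma^+_s]$. Writing $\theta^+ = H+P$ with $P = \trM K - K(\tilde\nu,\tilde\nu)$, the standard variation formula for the mean curvature produces a term quadratic in $A_s$ and a $\RicM$ correction, while the evolution of $P$ reduces to first derivatives of $K$ by using that $\tilde\nu$ is parallel along the flow. Combining and applying the bound on $A_s$ yields $|\tfrac{d}{ds}\theta^+[\Sigma^+_s]|\le C_2$ on $[0,s_0]$, with $C_2$ depending only on $C_1$, $\|\RiemM\|_{C^0(M)}$, and $\|K\|_{C^1(M)}$. Integrating, $\theta^+[\Sigma^+_s] \ge \theta^+[\del^+M] - C_2 s$ remains strictly positive whenever $s < \inf_{\del M}|\theta^+[\del M]|/C_2$; the symmetric argument preserves the sign $\theta^+[\Sigma^-_s]<0$.

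Setting $s_\ast := \min\{s_0,\, \inf_{\del M}|\theta^+[\del M]|/C_2\}$ then gives $\rho(M,g,K;\del M) \ge s_\ast$, which depends only on the listed quantities and yields the claimed bound on $\rho^{-1}$. The only point requiring care is a consistent choice of outer normal for $\Sigma^\pm_s$ in the sense of definition~\ref{def:bounding} when flowing inward from each boundary component, so that the variations of $H$ and $P$ feed into the definition of $\theta^+$ with the correct sign; otherwise the argument is a routine combination of Riccati comparison and a Gronwall-type integration, with no serious obstacle.
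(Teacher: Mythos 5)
The paper itself states lemma~\ref{lem:barrdist} without proof, so there is nothing to compare your argument against line by line; the two ODE ingredients you set up are certainly the intended core, and they are fine: the Riccati comparison bounds $A_s$ and the focal radius in terms of $\|A\|_{C^0(\del M)}$ and $\|\RiemM\|_{C^0(M)}$, and since the unit normal is parallel along the normal geodesics, $\tfrac{d}{ds}\theta^+[\Sigma^\pm_s]$ is controlled by $|A_s|^2$, $\|\RiemM\|_{C^0}$ and $\|K\|_{C^1}$, so the sign of $\theta^+$ persists for a controlled time.

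The genuine gap is the sentence ``no focal point develops \dots together with the tubular neighborhood theorem this guarantees that each $\Sigma^+_s$ is smooth and embedded.'' The surfaces $\Sigma^\pm_s$ in definition~\ref{def:boundary_as_barrier} are level sets of $\dist(\cdot,\del^\pm M)$, and these remain smooth embedded only up to the \emph{cut locus} of the boundary, not up to the focal radius: the cut locus can occur strictly before any focal point, namely where $\del M$ nearly meets itself through the interior of $M$, and the tubular neighborhood theorem only supplies some uncontrolled $\varepsilon>0$ there, not one depending on the listed quantities. So embeddedness at scale $s_\ast$ does not follow from your local ODE analysis; it is a global statement and needs a separate argument. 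For a thin gap between $\del^+M$ and $\del^-M$ such an argument is available and is quantitatively of the same nature as your step two: across a gap of width $d$ the quantity $\theta^+$ would have to pass from $\geq \inf|\theta^+|$ to $\leq -\inf|\theta^+|$, which your derivative bound (run as a comparison/maximum-principle argument between the two sheets) forbids once $d < 2\inf|\theta^+|/C_2$. But for two sheets of $\del^+M$ facing each other the hypotheses give no obstruction at all, because $P=\trSig K^\Sigma$ does not change sign with the normal: for example $T^2\times[0,\varepsilon]$ with the flat metric and $K=g$ (or, if one insists on connected boundary, the twisted interval bundle over the Klein bottle) has $\theta^+=2$ on all of $\del M$ for either normal, $A\equiv 0$, $\RiemM\equiv 0$ and $\|K\|_{C^1}$ bounded, yet the level sets of $\dist(\cdot,\del M)$ cease to form a collar after depth $\varepsilon/2$. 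Hence the ``embedded'' clause cannot be extracted from the listed constants by your argument as written; what your ODE analysis genuinely proves is the one-sided statement that the flowed parallel surfaces stay smooth, immersed, with bounded $A$ and the correct sign of $\theta^+$ up to a controlled depth (which is what the later barrier applications, e.g.\ lemma~\ref{thm:boundary_as_barrier}, actually use), while ruling out early self-approach of the boundary requires the additional maximum-principle comparison above in the $\del^+$--$\del^-$ case and, for $\del^+$ facing $\del^+$, does not follow from the stated hypotheses at all.
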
 
The significance of definition \ref{def:boundary_as_barrier} lies in
the following lemma, which is an immediate consequence of the strong
maximum principle.
\begin{lemma}
  \label{thm:boundary_as_barrier}
  If $(M,g,K)$ is as before, with $\del^-M$ possibly empty, and
  $\Sigma\subset M$ is a smooth MOTS homologous to $\del^+M$, then
  \begin{equation*}
    \dist(\Sigma,\del M) \geq \rho (M,g,K; \del M).
  \end{equation*}
\end{lemma}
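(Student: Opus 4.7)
The strategy is a textbook application of the strong maximum principle (Proposition~\ref{prop:max_princ}) to $\Sigma$ and the parallel surfaces $\Sigma^{\pm}_s$ of the boundary components. I will argue by contradiction and push parallel surfaces of the boundary inward until they first touch $\Sigma$; the normals will be aligned, and the strict sign of $\theta^+$ on $\Sigma^{\pm}_s$ will clash with $\theta^+[\Sigma]=0$, yielding a contradiction unless $\dist(\Sigma,\del M)\geq\rho$.

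In detail, suppose first that $\dist(\Sigma,\del^+M)<\rho^+(M,g,K;\del^+ M)$, and put $s^{*}:=\dist(\Sigma,\del^+M)$. Since $\Sigma$ is smooth, compact and homologous to $\del^+M$, it separates $\del^+M$ from $\del^-M$ (or from $\emptyset$), so $\Sigma$ bounds a region $\Omega$ with $\del\Omega=\Sigma\cup\del^+M$ as in Definition~\ref{def:bounding}, and the outer normal of $\Sigma$ points into $\Omega$, i.e.\ toward $\del^+M$. For $s<s^{*}$, the parallel surface $\Sigma^+_s$ lies strictly between $\Sigma$ and $\del^+M$ and is disjoint from $\Sigma$; at $s=s^{*}<\rho^+$ the surface $\Sigma^+_{s^{*}}$ is still smooth and embedded with $\theta^+[\Sigma^+_{s^{*}}]>0$, and it touches $\Sigma$ at some point $p$. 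By construction $\Sigma^+_{s^{*}}$ lies on the $\del^+M$-side of $\Sigma$ near $p$, so it is outside $\Sigma$ in the sense of Proposition~\ref{prop:max_princ}, and its outer normal (pointing toward $\del^+M$) agrees with the outer normal of $\Sigma$ at $p$. Since
\[
\sup_{\Sigma}\theta^+[\Sigma]=0<\inf_{\Sigma^+_{s^{*}}}\theta^+[\Sigma^+_{s^{*}}],
\]
the strong maximum principle forces $\Sigma=\Sigma^+_{s^{*}}$, which contradicts $\theta^+[\Sigma^+_{s^{*}}]>0$. Hence $\dist(\Sigma,\del^+M)\geq\rho^+(M,g,K;\del^+M)$.

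If $\del^-M\neq\emptyset$, the analogous argument works with the parallel surfaces $\Sigma^-_s$ of $\del^-M$. For $s<\rho^-$ these are smooth embedded surfaces with $\theta^+[\Sigma^-_s]<0$, and their outer normal points away from $\del^-M$, i.e.\ toward $\del^+M$, hence in the same direction as the outer normal of $\Sigma$. If $\dist(\Sigma,\del^-M)<\rho^-$, expanding $\Sigma^-_s$ from $\del^-M$ until it first touches $\Sigma$ at some $s^{*}<\rho^-$ places $\Sigma^-_{s^{*}}$ on the inside of $\Sigma$ near the touching point, so in Proposition~\ref{prop:max_princ} we take $\Sigma_1=\Sigma^-_{s^{*}}$ and $\Sigma_2=\Sigma$; the inequality $\sup_{\Sigma^-_{s^{*}}}\theta^+<0=\inf_{\Sigma}\theta^+$ holds, and the maximum principle again yields $\Sigma=\Sigma^-_{s^{*}}$, contradicting $\theta^+[\Sigma^-_{s^{*}}]<0$. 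Combining the two inequalities gives $\dist(\Sigma,\del M)\geq\rho(M,g,K;\del M)$.

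There is no real obstacle beyond bookkeeping. The only subtle point is checking that the two outer normals agree at the first touching point in each case, which is why homology of $\Sigma$ with $\del^+M$ (so that $\Sigma$ separates the two boundary components and carries the orientation of Definition~\ref{def:bounding}) is used; once this is clarified, the strictness of the parallel inequalities on $\Sigma^\pm_s$ inside the ranges defining $\rho^\pm$ immediately furnishes the contradiction.
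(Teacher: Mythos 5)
Your proof is correct and is essentially the paper's own argument: the paper dismisses the lemma as ``an immediate consequence of the strong maximum principle,'' meaning precisely the first-touching comparison with the parallel surfaces $\Sigma^{\pm}_s$ that you carry out (the same argument appears explicitly in Section~3, where the blow-up surface is excluded from the boundary collars). Your orientation bookkeeping --- the outer normal of $\Sigma$ pointing into $\Omega$ agreeing with the continued boundary normals at the touching point, with the roles of $\Sigma_1,\Sigma_2$ swapped between the two boundary components --- is exactly the check the paper leaves implicit.
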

Later, we will need the injectivity radius of $(M,g)$, restricted to
MOTS. By the previous lemma these surfaces cannot enter a collar
neighborhood of $\del M$ if $\del M$ is a barrier, and 
thus we only need to consider the
injectivity radius of points at least distance $\rho (M,g,K; \del M)$
away from $\del M$.
\begin{definition}
  \label{def:injectivity_radius}
  For $p\in M$ let $\inj (M,g; p)$ be the injectivity radius of
  $(M,g)$ at $p$. Then denote
  \begin{equation*}
    \inj_\rho(M,g,K;\del M)
    :=
    \inf\big\{ \inj (M,g; p) : \dist(p,\del M) \geq \rho (M,g,K; \del
    M) \big\}.
  \end{equation*}
\end{definition}
Let $\Sigma$ be a MOTS and let $F: \Sigma\times(\eps,\eps)\to M$ be a
normal variation of $\Sigma$, that is $F(\cdot, 0) = \id_\Sigma$ and
$\ddeval{F}{s}{s=0} = f\nu$ for a function $f\in C^\infty(\Sigma)$.
Then the variation of $\theta^+$ at $\Sigma$ is given by the operator
\begin{equation*}
  \begin{split}
    &\ddeval{\theta^+[F(\Sigma,s)]}{s}{s=0}
    =
    L_M f\\
    &\quad= -\lapSig f + 2 S(\nabSig f) + f\big(\divSig S
    -\tfrac{1}{2}|\chi^+|^2 - |S|^2 + \tfrac{1}{2}\ScalSig - \mu + J(\nu) \big).
  \end{split}
\end{equation*}
Here $\lapSig$, $\nabSig$ and $\divSig$ are the Laplace-Beltrami
operator, the tangential gradient and the divergence along $\Sigma$.
Furthermore $S(\cdot) = K(\nu,\cdot)^T$, where $(\cdot)^T$ denotes
orthogonal projection to $T\Sigma$. $\ScalSig$ is the scalar curvature
of $\Sigma$, $\mu = \tfrac{1}{2}\big( \ScalM - |K|^2 + (\tr K)^2
\big)$, and $J = \div K -\nabla\tr K$. This operator is not
self-adjoint. However, the general theory for elliptic operators of second
order implies that $L_M$ has a unique eigenvalue $\lambda$ with
minimal real part. This eigenvalue is real, and the corresponding
eigenfunction does not change sign. It is called the \emph{principal
  eigenvalue} of $L_M$. In
\cite{Andersson-Mars-Simon:2005,andersson-mars-simon:2007} the
following notion was introduced:
\begin{definition}
  \label{def:stable}
  A MOTS is called \todef{stable} if the principal eigenvalue of
  $L_M$ is non-negative.
\end{definition}
A strictly stable MOTS, that is with $\lambda>0$, can be deformed in
the direction of the outer normal such that $\theta^+>0$ on the
deformed surfaces. To see this simply use the principal eigenfunction
with the positive sign as the lapse of a normal
deformation. Analogously, unstable surfaces can be deformed in the
direction of the outer normal such that $\theta^+<0$ on the deformed
surface.

For a further discussion on stability see
\cite{Andersson-Mars-Simon:2005,andersson-mars-simon:2007,Andersson-Metzger:2005}.
We shall need theorem 1.2 from \cite{Andersson-Metzger:2005}.
\begin{theorem}
  \label{thm:curv-est} 
  Suppose $\Sigma$ is a stable MOTS in $(M,g,K)$ homologous to
  $\del^+M$.  Then the second fundamental form $A$ satisfies the
  inequality
  \begin{equation*}
    \|A\|_\infty
    \leq
    C\big(
    \|K\|_{C^1(M)},
    \|\RiemM\|_{C^0(M)},
    \inj_\rho(M,g,K; \del M)^{-1}\big)\,.    
  \end{equation*}
\end{theorem}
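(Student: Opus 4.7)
The plan is to argue by contradiction via a Schoen-style blow-up, using a symmetrized stability inequality as the main analytical tool. Suppose the estimate fails: there is a sequence of data sets $(M_i,g_i,K_i)$ with uniformly bounded $\|\RiemM\|_{C^0}$, $\|K\|_{C^1}$, and $\inj_\rho^{-1}$, containing stable MOTS $\Sigma_i$ (homologous to $\del^+M_i$) such that $\|A_i\|_\infty\to\infty$. Choose points $p_i\in\Sigma_i$ with $|A_i|(p_i)\geq\tfrac{1}{2}\|A_i\|_\infty$, and rescale the ambient metric by $\lambda_i:=|A_i|(p_i)$. In the rescaled data, $K_i$ and $\RiemM_i$ tend to zero, while lemma~\ref{thm:boundary_as_barrier} together with the injectivity radius assumption ensures the rescaled $\Sigma_i$ remain a definite distance from $\del M_i$. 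The standard pointwise curvature bound in a neighborhood of $p_i$ then lets one extract a smoothly converging subsequence to a complete, two-sided immersed surface $\Sigma_\infty\subset\mathbf{R}^3$ with $|A_\infty|(0)=1$. Since $\theta^+=P+H$ and the $K$-contribution $P$ scales to zero, $\Sigma_\infty$ is a complete minimal surface in $\mathbf{R}^3$.

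The crucial step is to transport the stability of $\Sigma_i$ to the limit. Because $L_M$ is not self-adjoint, one first passes to a symmetric stability inequality: let $\phi>0$ be the principal eigenfunction with $L_M\phi=\lambda\phi$, $\lambda\geq 0$; substituting $f=\phi\psi$ into $\int f\,L_M f\,\rmd\mu_\Sigma\geq 0$ and integrating by parts absorbs the first-order drift term $2S(\nabSig f)$ into the gradient, yielding an inequality of the form
\begin{equation*}
  \int_\Sigma\bigl(|\nabSig\psi|^2 + \CQ\,\psi^2\bigr)\dmu\ \geq\ 0
\end{equation*}
for all $\psi\in C^\infty(\Sigma)$, where $\CQ$ involves $|\chi^+|^2$, $|S|^2$, $\ScalSig$, and bulk curvature terms. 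This inequality is scale-covariant in the correct way, so after rescaling by $\lambda_i$ and passing to the limit, $\Sigma_\infty$ inherits the Euclidean stability inequality $\int(|\nabla\psi|^2 -|A_\infty|^2\psi^2)\geq 0$ for a complete two-sided minimal surface in $\mathbf{R}^3$.

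By the Bernstein-type theorem of Fischer-Colbrie--Schoen and do Carmo--Peng, any such surface must be a plane, giving $|A_\infty|\equiv 0$, which contradicts $|A_\infty|(0)=1$. To make the blow-up step rigorous one actually applies a point-selection lemma (a Choi--Schoen type argument) to guarantee that, after a mild adjustment of $p_i$, the curvature is controlled on $\lambda_i$-scale balls, so that the local area of the rescaled $\Sigma_i$ is bounded (this is where the injectivity radius assumption and monotonicity-type considerations enter).

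The main obstacle is precisely the passage of stability through the blow-up in the non-self-adjoint setting: one must keep track of the drift term $S(\nabSig\cdot)$ during rescaling, verify that the symmetrizing substitution produces a quadratic form whose leading bad term is $-|\chi^+|^2\psi^2$ (dominant over the $K$-induced pieces as $\lambda_i\to\infty$), and confirm that the principal eigenfunction $\phi_i$ does not degenerate in a way that spoils the substitution. Once this is accomplished, the Bernstein rigidity in $\mathbf{R}^3$ closes the contradiction and the quantitative bound on $\|A\|_\infty$ in terms of the listed geometric quantities follows by the usual compactness/contradiction translation.
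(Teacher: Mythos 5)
Your strategy is viable, but it is genuinely different from the proof this paper relies on: theorem~\ref{thm:curv-est} is not reproved here, it is quoted from \cite{Andersson-Metzger:2005}, where the argument is of Schoen--Simon--Yau type \cite{Schoen-Simon-Yau:1975} --- a Simons-type identity for MOTS, the symmetrized stability inequality tested with functions of the form $|A|^{1+q}f$, an iteration giving $L^p$ bounds on $|A|$, and a mean-value step based on the Hoffman--Spruck Sobolev inequality; the only addition made in this paper is that the relevant Sobolev/injectivity input is $\inj_\rho(M,g,K;\del M)$ once lemma~\ref{thm:boundary_as_barrier} keeps $\Sigma$ away from $\del M$. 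That route is direct and in principle effective, whereas your blow-up/contradiction argument is non-effective but avoids the iteration machinery by outsourcing rigidity to the Fischer-Colbrie--Schoen/do Carmo--Peng theorem. Both hinge on the same symmetrization of stability (cf.\ \cite{Galloway-Schoen:2006} and the operator $\tilde L_M$ in section~\ref{sec:stab-constr-mots}), and your scaling bookkeeping is right: with $(g,K)\mapsto(\lambda_i^2g,\lambda_i K)$ MOTS stay MOTS, $P\to 0$, $\chi^+\to A$, and by the Gauss equation with $H=0$ the potential $\tfrac12\ScalSig-\tfrac12|\chi^+|^2$ tends to $-|A|^2$, so the limit inequality is the classical one in $\IR^3$.

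Two steps need repair. First, you derive the symmetrized inequality by ``substituting $f=\phi\psi$ into $\int_\Sigma f\,L_Mf\dmu\geq0$'', but that inequality is not available: $L_M$ is not self-adjoint, and non-negativity of its principal eigenvalue does not imply non-negativity of its quadratic form (integrating the drift by parts leaves an extra $-|S|^2$ in the form). The correct argument uses the positive principal eigenfunction $\phi$, sets $u=\log\phi$, and completes a square in $\nabSig u-S$, yielding $\int_\Sigma|\nabSig\psi|^2+\big(\tfrac12\ScalSig-\tfrac12|\chi^+|^2-\mu+J(\nu)\big)\psi^2\dmu\geq0$ for all $\psi$; with this, the principal eigenfunctions $\phi_i$ never enter the limit and cannot ``degenerate''. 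Second, the compactness step: monotonicity gives \emph{lower}, not upper, local area bounds, and $\inj_\rho$ does not bound the number of sheets, so as written the extraction of $\Sigma_\infty$ is not justified. Either take pointed intrinsic limits of the immersions of the balls $B^{\Sigma_i}_{R_i}(p_i)$ --- the point-selection curvature bound plus Schauder estimates for $H=-P$ give local $C^{2,\alpha}$ control, no area bound is needed, and the Bernstein theorem applies to complete immersed two-sided stable surfaces --- or allow lamination limits and transplant compactly supported test functions to the leaf through the marked point. Finally, note that the claim that blow-up points stay a definite (unrescaled) distance from $\del M_i$ uses lemma~\ref{thm:boundary_as_barrier}, whose distance $\rho$ is not among the constants in the statement; you must either invoke the convention that the injectivity radius at $p$ is at most $\dist(p,\del M)$, so that $\inj_\rho\leq\rho$ and the hypotheses bound $\rho$ from below, or accept the dependence recorded in lemma~\ref{lem:barrdist} as an additional constant. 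With these corrections the argument closes and gives the stated dependencies.
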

Note that in the reference \cite{Andersson-Metzger:2005} this theorem
is proven for $M$ without boundary. The same method gives the estimate
where the dependency $\inj(M,g)$ in the original statement is replaced
by $\inj_\rho(M,g,K;\del M)$, as this is the quantity which needs to
be controlled to apply the Hoffman-Spruck Sobolev inequality.

Subsequently we denote by $B^M_r(O)$ the open ball in $M$ with radius
$r$ around $O$, and by $B^\Sigma_r(p)$ the intrinsic open ball in
$\Sigma$.

Let $M$ be as above and let $\Sigma\subset M$, be a compact smooth
embedded two-sided surface, and let $G_\Sigma$ be the normal exponential
map of $\Sigma$:
\begin{equation}
  \label{eq:24}
  G_\Sigma
  :
  \Sigma \times \big(-\dist(\Sigma,\del M), \dist(\Sigma,\del M)\big) \to M
  :
  (p,r) \mapsto \exp^M_p(r\nu)
\end{equation}
where $\exp^M_p : T_p M \to M$ is the exponential map of $M$ at $p$.
Locally $G_\Sigma$ is injective and well behaved, this is the content
of the following well-known lemma. We shall focus on the local outer
injectivity in the following sense. We denote by $\inj(M,g;\Sigma)$
the injectivity radius on $(M,g)$ restricted to $\Sigma$.
\begin{lemma}
  \label{lemma:local_inj}
  If $\Sigma\subset M$ is as above with bounded curvature, there
  exists $0<i^+_0(\Sigma) < \inj(M,g;\Sigma)$, depending only on
  $\inj(M,g;\Sigma)$, $\|\RiemM\|_{C^0}$, and $\sup_\Sigma |A|$, such
  that for all $x\in\Sigma$ the map
  \begin{equation*}
    G_\Sigma|_{B^\Sigma_{i_0^+(\Sigma)}(x)\times [0,i_0^+(\Sigma))} :
    B^\Sigma_{i_0^+(\Sigma)}(x)\times [0,i_0^+(\Sigma)) \to M
  \end{equation*}
  is a diffeomorphism on its image, and such that the sheets
  \begin{equation*}
    \Sigma^s_{x,i_0^+(\Sigma)}:= G_\Sigma\big(B^\Sigma_{i_0^+(\Sigma)}(x), s\big)
  \end{equation*}
  are discs with bounded curvature $\sup_{\Sigma^s}|A| \leq 2\sup_\Sigma|A|$, 
  for $s\in [0,i_0^+(\Sigma))$.
\end{lemma}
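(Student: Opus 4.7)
The plan is to define $i_0^+(\Sigma)$ as the minimum of three radii $r_1,r_2,r_3$, each of which handles one of the claimed properties: local invertibility of $dG_\Sigma$, injectivity of $G_\Sigma$ on the half-tube, and preservation of the curvature bound on the parallel sheets. Each $r_i$ will depend only on $\inj(M,g;\Sigma)$, $\|\RiemM\|_{C^0}$, and $\sup_\Sigma|A|$.

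At any $(p,0)$ the differential $dG_\Sigma$ is the identity on $T_p\Sigma\oplus\IR\nu_p$, so $G_\Sigma$ is a local diffeomorphism near the zero section. To make this uniform in $p$, I would evolve an orthonormal $\Sigma$-frame along the normal geodesic $\gamma(s)=\exp^M_p(s\nu_p)$ by parallel transport and express $dG_\Sigma$ at $(p,s)$ through Jacobi fields $J$ with $J(0)\in T_p\Sigma$ and $J'(0)=-A_p(J(0),\cdot)^\sharp$, together with the radial field $\partial_s\mapsto\dot\gamma(s)$. A Gronwall estimate applied to the Jacobi equation, whose forcing is controlled by $\|\RiemM\|_{C^0}$ and whose initial data by $\sup_\Sigma|A|$, yields a radius $r_1$ on which $\|(dG_\Sigma)^{-1}\|$ is uniformly bounded over $B^\Sigma_{r_1}(x)\times(-r_1,r_1)$.

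For genuine injectivity I would pass to $M$-normal coordinates centered at $x$, which are available on $B^M_{\inj(M,g;\Sigma)}(x)$. There $\Sigma$ is a graph over $T_x\Sigma$ whose first derivative is controlled by $\sup_\Sigma|A|$, and the normal geodesic $s\mapsto\exp^M_p(s\nu_p)$ differs from the Euclidean affine segment $p+s\nu(p)$ by an error of order $s^2\|\RiemM\|_{C^0}$. The affine map $(p,s)\mapsto p+s\nu(p)$ is manifestly injective on the half-tube over a sufficiently flat graph, so a $C^1$-perturbation argument comparing $G_\Sigma$ with this flat model yields a radius $r_2$ on which $G_\Sigma$ is injective. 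I expect this step to be the main obstacle: a local diffeomorphism need not be globally injective on a tube, and one has to quantify how far two normal geodesics issuing from distinct nearby points of $\Sigma$ can bend before the curvature bounds force them apart.

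Finally, the shape operator $W^s$ of the parallel sheet $\Sigma^s_{x,r}$ satisfies the standard Riccati equation
\begin{equation*}
  \tfrac{d}{ds} W^s \;=\; -(W^s)^2 \;-\; \RiemM(\cdot,\dot\gamma)\dot\gamma,
\end{equation*}
so its pointwise norm obeys
\begin{equation*}
  \tfrac{d}{ds}|W^s|\;\leq\;|W^s|^2 \;+\; \|\RiemM\|_{C^0}.
\end{equation*}
ODE comparison with the initial bound $|W^0|\leq\sup_\Sigma|A|$ produces a radius $r_3$ such that $|A^s|\leq 2\sup_\Sigma|A|$ throughout $[0,r_3)$. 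Setting $i_0^+(\Sigma):=\min\bigl(r_1,r_2,r_3,\tfrac12\inj(M,g;\Sigma)\bigr)$ then establishes all three assertions of the lemma simultaneously, with the stated dependencies.
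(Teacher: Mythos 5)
The paper does not actually prove this lemma --- it is invoked as ``well-known'' and used as a black box --- so there is no internal proof to compare against; your proposal supplies the standard argument, and it is the right one. The three-radius decomposition is sound: Jacobi-field/Gronwall control of $dG_\Sigma$ (with initial data $J(0)=v$, $J'(0)=-W(v)$, and the observation that $\langle J,\dot\gamma\rangle\equiv 0$ so only the perpendicular part matters), a graph-plus-flat-model argument for injectivity, and Riccati comparison for the sheet curvature; the graph radius and gradient bound needed in the second step depend only on $\sup_\Sigma|A|$, $\|\RiemM\|_{C^0}$ and $\inj(M,g;\Sigma)$, which matches the stated dependencies, and the same graph representation gives that the sheets are discs.

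Two points deserve care when you write the injectivity step out. First, the comparison with the flat model $G_0(p,s)=p+s\nu(p)$ must be a genuine $C^1$ comparison, as you indicate: since no bounds on $\nabla A$ or $\nabla\RiemM$ are assumed, $dG_\Sigma$ has no uniform modulus of continuity, so you cannot compare $dG_\Sigma$ to a fixed linear map; instead prove a two-sided bound $|G_0(p,s)-G_0(q,t)|\geq c\,(|p-q|+|s-t|)$ for the flat model over a small-gradient graph with $\delta\,\sup_\Sigma|A|$ small, establish $\|dG_\Sigma-dG_0\|\leq C\delta$ on the half-tube (Jacobi versus straight-line fields, plus Christoffel errors of size $O(\delta\,\|\RiemM\|_{C^0})$ in normal coordinates), and integrate along a path in the product domain; a merely $C^0$ estimate $|G_\Sigma-G_0|\leq Cs^2$ does not close the argument. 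Second, the doubling bound from the Riccati inequality $\tfrac{d}{ds}|W^s|\leq|W^s|^2+\|\RiemM\|_{C^0}$ only reaches the literal conclusion $|A^s|\leq 2\sup_\Sigma|A|$ in a time that degenerates as $\sup_\Sigma|A|\to 0$ (and fails altogether if $\sup_\Sigma|A|=0$ in a curved ambient manifold); this is an artifact of the lemma as stated rather than of your argument, and is harmless for the way the lemma is used, but you should either note it or replace the conclusion by $|A^s|\leq 2\sup_\Sigma|A|+1$, say. With these caveats your plan is correct and is, in effect, the proof the authors are implicitly relying on.
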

This lemma reflects the local well-behavedness of the distance
surfaces to $\Sigma$, in particular including the curvature bound. In
contrast the next definition aims at the global behavior. Again, we
only focus on the outward injectivity.
\begin{definition}
  \label{def:global_inj}
  The \todef{outer injectivity radius} of $\Sigma$ is
  \begin{equation*}
    i^+(\Sigma):= \sup \big\{ \delta : G_\Sigma|_{\Sigma\times [0,\delta)} \to M
    \text{\ is injective\ } \big\}.
  \end{equation*}
\end{definition}
It is intuitively clear that if  
$i^+(\Sigma)$ is smaller than $i_0^+(\Sigma)$, then the surface
nearly meets itself on the outside. A precise formulation is given by the
following lemma. 
\begin{figure}[Ht!]
  \centering
  \resizebox{.6\linewidth}{!}{\input{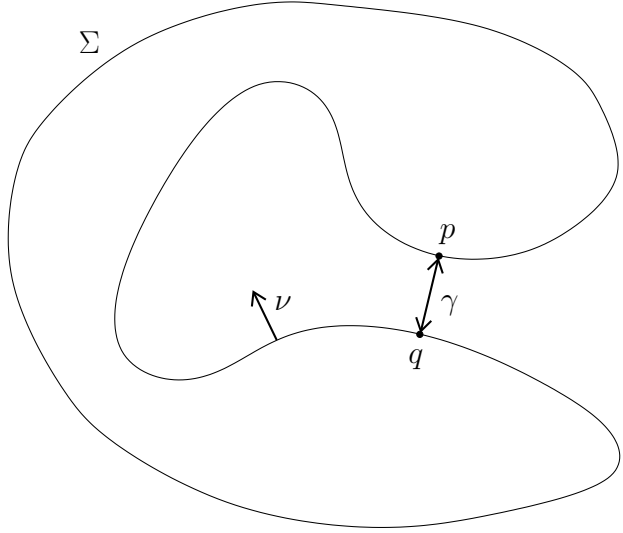}}
  \caption{A surface that nearly meets itself.}
\label{fig:touch}
\end{figure}
\begin{lemma}
  \label{lemma:almost_touch}
  Let $\Sigma$ be a compact, embedded and two-sided surface with
  $i^+(\Sigma) < \tfrac{1}{2}i_0^+(\Sigma)$. Then there exist two
  points $p,q\in\Sigma$ with $\distM(p,q) = 2i^+(\Sigma)$ but
  $\distSig(p,q) \geq i_0^+(\Sigma) > 2i^+(\Sigma)$.
    
  The points $p$ and $q$ can be joined by a geodesic segment $\gamma$
  in $M$, which is orthogonal to $\Sigma$ at $p$ and $q$ and as a set
  \begin{equation*}
    \gamma
    =
    G_\Sigma|_{B^\Sigma_{i_0^+(\Sigma)}(p)\times[0,i_0^+(\Sigma))}
    (p, [0,2 i^+])
    = G_\Sigma|_{B^\Sigma_{i_0^+(\Sigma)}(q)\times[0,i_0^+(\Sigma))}
    (q, [0,2 i^+]).  
  \end{equation*}  
\end{lemma}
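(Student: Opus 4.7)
The plan is to use the supremum definition of $i^+(\Sigma)$ together with compactness of $\Sigma$ to extract the pair $(p,q)$ as a limit of sequences witnessing failure of injectivity of $G_\Sigma$ just past $i^+(\Sigma)$, and then exploit the local injectivity from Lemma~\ref{lemma:local_inj} (which applies on $\Sigma\times[0,i_0^+(\Sigma))$, and in particular for all parameter values up to $2i^+(\Sigma)<i_0^+(\Sigma)$) to pin the limit down and promote the two normal arcs into a single smooth geodesic.

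For each $n$, $G_\Sigma$ fails to be injective on $\Sigma\times[0,i^+(\Sigma)+1/n)$, so I can choose distinct $(p_n,s_n)\neq(p_n',s_n')$ with $G_\Sigma(p_n,s_n)=G_\Sigma(p_n',s_n')$ and, after ordering, $s_n\le s_n'$; injectivity of $G_\Sigma$ on $\Sigma\times[0,i^+(\Sigma))$ then forces $s_n'\to i^+(\Sigma)$. Compactness of $\Sigma$ yields a subsequence with $p_n\to p$, $p_n'\to q$, $s_n\to s\in[0,i^+(\Sigma)]$ and $G_\Sigma(p,s)=G_\Sigma(q,i^+(\Sigma))$. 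Neither $p=q$ nor $s<i^+(\Sigma)$ can occur: the first forces $(p_n,s_n)=(p_n',s_n')$ for large $n$ via the single-chart diffeomorphism of Lemma~\ref{lemma:local_inj} applied at $p$, contradicting distinctness; and if $s<i^+(\Sigma)$, the local diffeomorphism near $(p,s)$ lifts $G_\Sigma(q,i^+(\Sigma)-\eps)$ for small $\eps>0$ to a preimage $(p_\eps,s_\eps)$ with $s_\eps<i^+(\Sigma)$ and $p_\eps\neq q$, producing together with $(q,i^+(\Sigma)-\eps)$ a violation of injectivity on $\Sigma\times[0,i^+(\Sigma))$. So $s=i^+(\Sigma)$ and $p\neq q$; writing $x:=G_\Sigma(p,i^+(\Sigma))=G_\Sigma(q,i^+(\Sigma))$, the outward normal geodesics $\gamma_1(r)=G_\Sigma(p,r)$ and $\gamma_2(r)=G_\Sigma(q,r)$ both meet at $x$ at parameter $i^+(\Sigma)$.

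The intrinsic bound $\distSig(p,q)\ge i_0^+(\Sigma)$ is immediate: if $q\in B^\Sigma_{i_0^+(\Sigma)}(p)$, then $(p,i^+(\Sigma))$ and $(q,i^+(\Sigma))$ would be distinct points of the single chart on which $G_\Sigma$ is a diffeomorphism, contradicting $G_\Sigma(p,i^+(\Sigma))=G_\Sigma(q,i^+(\Sigma))$. The upper bound $\distM(p,q)\le 2i^+(\Sigma)$ is the length of the concatenation of $\gamma_1|_{[0,i^+]}$ with the reverse of $\gamma_2|_{[0,i^+]}$. For the lower bound, suppose $\ell:=\distM(p,q)<2i^+(\Sigma)$ and let $\beta$ be a minimizing geodesic from $p$ to $q$; by first variation with endpoints constrained to $\Sigma$, $\beta'(0)=\pm\nu_p$. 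In the outward case $\beta'(0)=\nu_p$, uniqueness of geodesics gives $\beta=\gamma_1$ on $[0,\ell]$, so $G_\Sigma(p,\ell)=q$; matching with $\gamma_2$ at $q$ (both extend smoothly through $x$ within the chart since $2i^+(\Sigma)<i_0^+(\Sigma)$) and invoking injectivity of the geodesic $\gamma_1$ on $[0,i_0^+(\Sigma))$ forces $\ell=2i^+(\Sigma)$, a contradiction. The inward case $\beta'(0)=-\nu_p$ is reduced to the outward case by taking the last crossing of $\beta$ with $\Sigma$ and applying the outward analysis to the resulting final outward subarc. Once $\distM(p,q)=2i^+(\Sigma)$ is established, the concatenation is a length-minimizing curve, hence a smooth geodesic at $x$, so $\gamma_1$ extends through $x$ to reach $q$ at parameter $2i^+(\Sigma)$ and the displayed identification $\gamma=G_\Sigma(p,[0,2i^+])=G_\Sigma(q,[0,2i^+])$ follows. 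The main obstacle is the inward case of the lower bound, which requires a careful analysis of crossings of $\beta$ with $\Sigma$ to rule out a shortcut through the inner side.
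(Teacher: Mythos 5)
Your extraction of the pair $p\neq q$ with $G_\Sigma(p,i^+(\Sigma))=G_\Sigma(q,i^+(\Sigma))$ at parameter exactly $i^+(\Sigma)$ is correct, and is in fact more careful than the paper, which simply asserts that $G_\Sigma(\cdot,i^+(\Sigma))$ fails to be injective. The intrinsic bound $\distSig(p,q)\geq i_0^+(\Sigma)$, the upper bound $\distM(p,q)\leq 2i^+(\Sigma)$, and the final step (once equality is known, the concatenation is minimizing, hence smooth at the meeting point, which gives the displayed identification of $\gamma$) all agree with the paper's argument.

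The genuine gap is in the lower bound $\distM(p,q)\geq 2i^+(\Sigma)$. The claim that a minimizing geodesic $\beta$ from $p$ to $q$ satisfies $\beta'(0)=\pm\nu_p$ ``by first variation with endpoints constrained to $\Sigma$'' is a non sequitur: first variation gives orthogonality only when the endpoint pair is (at least locally) minimizing for the ambient distance among pairs of points of $\Sigma$, and no such variational property of $(p,q)$ has been established --- these points were produced purely from the failure of injectivity of $G_\Sigma$, and the minimizing property of $(p,q)$ is essentially what is being proved (the paper asserts it only in the discussion \emph{after} the lemma, so you may not assume it here). Consequently the outward case does not get started, and the inward case is not reduced to it by ``taking the last crossing'': at a crossing point $\beta$ is in general transversal to $\Sigma$, so the orthogonality-based analysis does not apply to the final subarc. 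Moreover this inward case is exactly where the hypothesis $2i^+(\Sigma)<i_0^+(\Sigma)\leq \inj(M,g;\Sigma)$ must be used: attaching a short handle joining the inner sides of $\Sigma$ near $p$ and $q$ leaves $G_\Sigma$ below parameter $i^+(\Sigma)$, and hence $i^+(\Sigma)$, unchanged while making $\distM(p,q)$ arbitrarily small; such configurations are excluded only because the handle drives $\inj(M,g;\Sigma)$, and with it $i_0^+(\Sigma)$, below $2i^+(\Sigma)$. So the inside shortcut has to be ruled out by an argument that genuinely invokes this hypothesis, which you acknowledge as ``the main obstacle'' but do not supply. For comparison, the paper does not look at a minimizing geodesic between $p$ and $q$ at all in this step: it argues that $\distM(p,q)<2i^+(\Sigma)$ would force a parallel surface $G_\Sigma(\cdot,d)$ with $d<i^+(\Sigma)$ to intersect itself, contradicting injectivity of $G_\Sigma$ at levels below $i^+(\Sigma)$. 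As written, your proof of the equality $\distM(p,q)=2i^+(\Sigma)$, and therefore also of the characterization of $\gamma$ that you deduce from it, is incomplete.
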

\begin{proof}
  From the definition of $i^+$ we know that 
  \begin{equation*}
    G_\Sigma (\cdot,  i^+(\Sigma) ) : \Sigma \to M    
  \end{equation*}
  is not injective. Thus there exist two points $p,q\in\Sigma$ which
  map to the same point $O\in M$. By lemma~\ref{lemma:local_inj}
  $\distSig(p,q)\geq i_0^+(\Sigma)$. Furthermore $O$ has distance
  $i^+(\Sigma)$ to $\Sigma$ and to $p,q$ so $\dist(O,\Sigma)
  =\dist(O,p)$ and hence the geodesic segment $\gamma_p$ joining $O$
  to $p$ is perpendicular to $\Sigma$. Similarly the geodesic segment
  $\gamma_q$ joining $O$ and $q$ is perpendicular to $\Sigma$. Thus
  $\dist (p,q) \leq 2i^+(\Sigma)$. If $\dist(p,q) < 2i^+(\Sigma)$ then
  there would be a parallel surface to $\Sigma$ at distance
  $d<i^+(\Sigma)$ which intersects itself, which is not possible as
  $G_\Sigma(\cdot,d)$ is injective. Thus $\dist(p,q) = 2d$ and
  $\gamma_p$ and $\gamma_q$ must form a smooth geodesic, as otherwise
  the angle at $O$ could be smoothed out to yield a shorter geodesic.
\end{proof}
Figure \ref{fig:touch} shows the situation in the lemma. 
It follows from the definition of $i^+(\Sigma)$ that 
the points $p,q$ minimize the distance between the sheets
$B^\Sigma_{i_0^+(\Sigma)}(p)$ and $B^\Sigma_{i_0^+(\Sigma)}(q)$, and hence 
$\gamma$ is orthogonal 
to $\Sigma$ at $p$ and $q$.  In addition $\gamma$ does not
intersect $\Sigma$ in any other points except $p$ and $q$. If we parameterize
$\gamma$ by arc 
length as a curve joining $p$ to $q$, the tangent to
$\gamma$ at $p$ coincides with the normal $\nu$ to $\Sigma$. Similarly, with 
$\gamma$ arc 
length parameterized as a curve joining $q$ to $p$, the tangent to
$\gamma$ at $q$ coincides with the normal $\nu$ to $\Sigma$ at $q$.
This means that $\gamma$ lies completely on the outside of $\Sigma$.

For later reference, we need the following smoothing result from
\cite[Lemma 6]{Kriele-Hayward:1997}.
\begin{lemma}
  \label{thm:smooth_corner}
  Let $\Sigma_1,\Sigma_2\subset M$ be smooth two-sided surfaces which
  intersect transversely in a smooth curve $\gamma$. Choose one
  connected component $\Sigma^\pm$ of each set $\Sigma_i\setminus
  \gamma$ such that the outer normals $\nu^\pm$ of these components
  satisfy $g(\nu^+,\nu^-) \leq 0$ along $\gamma$. Then for any
  neighborhood $U$ of $\gamma$ there exists a smooth surface $\Sigma$
  and a continuous and piecewise smooth bijection $\Phi: \Sigma^+ \cup
  \Sigma^-\cup \gamma\to\Sigma$ such that
  \begin{enumerate}
  \item $\Phi(x) = x$ for all $x\in(\Sigma^+\cup\Sigma^-)\setminus U$,
  \item ($\Sigma^+\cup\Sigma^-)\setminus U = \Sigma \setminus U$, and
  \item $\theta^+[\Sigma](x) \leq \theta^+[\Sigma^+](x)$ for $x\in
    \Sigma^+$ and $\theta^+[\Sigma](x) \leq \theta^+[\Sigma^-](x)$ for
    $x\in \Sigma^-$.
  \end{enumerate}
  Moreover $\Sigma$ lies in the connected component of $U\setminus
  (\Sigma^+\cup\Sigma^-\cup\gamma)$ into which $\nu^\pm$ point.
\end{lemma}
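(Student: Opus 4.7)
The plan is to round off the corner $\gamma$ on the side into which $\nu^\pm$ point, in such a way that the rounding produces a sharply non-positive contribution to the mean curvature of $\Sigma$, dominating all other changes in $\theta^+$. The hypothesis $g(\nu^+,\nu^-)\leq 0$ is essential here: it means that the dihedral angle of $\Sigma^+\cup\Sigma^-$ measured from the connected component $\Omega\subset U\setminus(\Sigma^+\cup\Sigma^-\cup\gamma)$ into which $\nu^\pm$ point is at most $\pi/2$, so the corner is genuinely concave as seen from $\Omega$ and admits a rounding of arbitrarily small transverse radius that lies in $\Omega$ and is tangent to both $\Sigma^\pm$.

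To carry this out, I would first introduce Fermi coordinates along $\gamma$, so that $\gamma$ is a coordinate curve and the planes orthogonal to $\gamma$ meet $\Sigma_1$ and $\Sigma_2$ in smooth curves crossing transversely. For a small parameter $\eps>0$, in each such normal plane I replace the corner by a $C^\infty$ curve that lies in $\Omega$, is tangent to $\Sigma^\pm$ outside a disk of radius $\eps$ centred on $\gamma$, and has principal curvature (with outer normal chosen to point into $\Omega$) that is non-positive throughout and of order $-1/\eps$ on an inner arc. Assembling these planar constructions along $\gamma$ and matching to $\Sigma^\pm$ outside a narrow neighbourhood $U_\eps\Subset U$ of $\gamma$ produces a smooth surface $\Sigma$ coinciding with $\Sigma^+\cup\Sigma^-$ outside $U_\eps$. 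Conclusions (i), (ii), and the location statement follow directly from the construction; the bijection $\Phi$ is taken to be the identity outside $U_\eps$, inside $U_\eps$ it maps $\Sigma^\pm\cap U_\eps$ onto the two halves of the rounded bridge, and it sends $\gamma$ to the apex curve of the bridge.

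For the inequality (iii), split $\theta^+=H+P$ with $H=\divSig\nu$ and $P=\trSig K$. In $U_\eps$, $\Sigma$ can be realised as the graph of a function $u\geq 0$ over $\Sigma^\pm$ in the direction $\nu^\pm$, and to leading order in $u$ the change in mean curvature is $H[\Sigma]-H[\Sigma^\pm]=-\lapSig u+O(u)$; the construction is arranged so that $u$ is convex in the direction transverse to $\gamma$ (the only direction in which $u$ varies appreciably), giving a non-positive contribution at each point. The change in $P$ is bounded by $2\|K\|_{C^0}$ times the rotation of the normal along the rounding, which in turn is controlled by the integral of the transverse principal curvature. On the inner arc the bound $H[\Sigma]\leq -1/\eps+C$ immediately dominates the bounded change in $P$ as soon as $\eps$ is small. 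The principal technical obstacle is then the transition zone, where the gain in $H$ and the possibly adverse change in $P$ are both small; it is resolved by taking the normal-plane profile of the rounding sufficiently steep on a scale small compared with $\|K\|_{C^0}^{-1}$ and the geometric scale on which $\Sigma^\pm$ varies, so that $-\lapSig u$ dominates the $P$-change uniformly. Combining these estimates yields $\theta^+[\Sigma]\leq\theta^+[\Sigma^\pm]$ pointwise under $\Phi$.
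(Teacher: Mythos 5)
This lemma is nowhere proved in the paper: it is quoted verbatim from Kriele--Hayward (their Lemma 6), and the paper's only commentary is precisely your mechanism --- the inward corner is a concentration of negative mean curvature, hence of negative $\theta^+$, and is replaced by a smooth patch lying on the side into which $\nu^\pm$ point. Your Fermi-coordinate rounding, with the inner arc carrying $H\sim -1/\eps$ and a convexity-dominated profile in the transition zone so that $-\lapSig u$ beats the $O(|u|+|\nabla u|)$ errors from $P$ and the ambient geometry, is essentially the same (correct) construction as the cited source, the transition-zone estimate being the only step that would need to be written out in full detail.
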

Briefly stated, this procedure works by replacing the inward corner
near $\gamma$ by a smooth patch with $\theta^+$ very negative. The
reason why this procedure works is that the corner is a concentration
of negative mean curvature, that is negative $\theta^+$.


%
\section{Existence of MOTS}
\label{sec:theorem-schoen}
This section is devoted to a proof of Schoen's existence theorem for
MOTS \cite{Schoen:2004} in the presence of barrier surfaces.
\begin{theorem}
  \label{thm:schoen}
  Let $(M,g,K)$ be a smooth, compact initial data set with $\del M$ the
  disjoint union $\del M = \del^-M \cup \del^+ M$ such that $\del^\pm
  M$ are non-empty, smooth, compact surfaces without boundary and
  $\theta^+[\del^-M]< 0$ with respect to the normal pointing into
  $M$ and $\theta^+[\del^+M]>0$ with respect to the normal pointing
  out of $M$.
  Then there exists an non-empty, smooth, embedded MOTS $\Sigma$
  homologous to $\del^+ M$.
\end{theorem}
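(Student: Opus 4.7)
The plan is to follow Schoen's strategy, converting the existence of a MOTS into a blow-up problem for a regularized Jang equation. Recall Jang's equation
\begin{equation*}
  \CJ(u) := H(\graph u) - P(\graph u) = 0
\end{equation*}
for $u : M \to \IR$, where $\graph u \subset M\times\IR$ carries the induced metric from $g + \rmd t^2$ and $P$ is the trace along $\graph u$ of the translation-invariant extension of $K$ to $M\times\IR$. The key fact, due to Schoen--Yau, is that each smooth component of $\del\{u=+\infty\}$ projects from a vertical cylinder over a MOTS in $M$. To gain compactness I would consider, for $\tau>0$, the regularized Dirichlet problem
\begin{equation*}
  \CJ(u_\tau) = \tau u_\tau \quad \text{in } M, \qquad u_\tau = 0 \quad \text{on } \del M.
\end{equation*}
The $\tau u$ term breaks translation invariance in the $t$-direction and provides the coercivity needed to obtain a smooth solution $u_\tau$ for each $\tau>0$ via a Leray--Schauder/continuity method for quasilinear equations of mean-curvature type.

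Next I would use the barrier conditions to locate the blow-up. Working with the parallel surfaces $\Sigma_s^\pm$ of Definition~\ref{def:boundary_as_barrier}, the strict inequality $\theta^+[\del^+ M] > 0$ yields a supersolution of the regularized equation in a collar of $\del^+ M$, giving a $\tau$-independent upper bound $u_\tau \leq C_+$ there. Symmetrically, $\theta^+[\del^- M] < 0$ yields a subsolution of order $1/\tau$ in a collar of $\del^- M$, forcing $u_\tau \geq c/\tau$ near $\del^- M$. Hence as $\tau\to 0$ the set where $u_\tau \to +\infty$ is nonempty, contains a collar of $\del^- M$, and is separated from $\del^+ M$.

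Then pass to the limit in the graphs $\Gamma_\tau = \graph u_\tau$. On any region where $u_\tau$ remains bounded the $\Gamma_\tau$ have locally uniformly bounded prescribed mean curvature (since $|H(\Gamma_\tau)| = |P(\Gamma_\tau) + \tau u_\tau|$) together with uniform $C^1$-bounds, and therefore subconverge smoothly to a graphical solution of Jang's equation on an open set $\Omega\subset M$. Approaching $\del\Omega$ the $\Gamma_\tau$ develop vertical cylindrical ends; the Schoen--Yau blow-up analysis, combined with the stable-MOTS curvature estimate of Theorem~\ref{thm:curv-est}, shows that each component of $\del\Omega$ on which $u\to +\infty$ is a smooth, embedded, stable MOTS. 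Since the supersolution at $\del^+ M$ rules out blow-up there, the component of $\del\Omega$ separating $\del^+ M$ from the blow-up region exists, is nonempty, and is homologous to $\del^+ M$, which is exactly the surface we want.

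The main obstacle is the blow-up analysis at $\del\Omega$. One must show that near a blow-up component the rescaled graphs converge to a single vertical cylinder of multiplicity one, rather than to two nearby sheets meeting tangentially or to a higher-multiplicity configuration, so that the limiting MOTS is a smoothly embedded submanifold and not merely a stationary varifold with singularities. This is precisely where the curvature estimate of Theorem~\ref{thm:curv-est} is indispensable: the graphs $\Gamma_\tau$ near $\del\Omega$ are almost-MOTS, and one needs uniform curvature control on any such limiting stable configuration in order to pass from weak subsequential convergence in the varifold sense to smooth convergence of embedded surfaces. Monotonicity and area estimates then exclude the multiplicity $\geq 2$ scenario and close the argument.
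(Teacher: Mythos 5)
Your overall strategy is the one the paper follows (regularize Jang's equation by a capillarity term, solve a Dirichlet problem, let $\tau\to 0$ and read off the MOTS from the blowup set), but your choice of boundary data $u_\tau=0$ on $\del M$ destroys the mechanism that forces blowup, and this is a genuine gap. Consider time-symmetric data $K\equiv 0$ satisfying the hypotheses, e.g.\ the piece of the Riemannian Schwarzschild slice containing the neck, bounded by a coordinate sphere on the far side (where $\theta^+=H<0$ w.r.t.\ the normal pointing into $M$) and one on the near side (where $\theta^+=H>0$). Then $\CP\equiv 0$ and $u_\tau\equiv 0$ solves $\CH[u_\tau]-\CP[u_\tau]=\tau u_\tau$ with zero boundary values; by the comparison principle for the operator $\CJ-\tau\,(\cdot)$, which is strictly decreasing in $u$, it is the unique solution, so nothing blows up as $\tau\to 0$ and no MOTS is produced. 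Your claim that $\theta^+[\del^-M]<0$ yields a subsolution of size $c/\tau$ forcing $u_\tau\geq c/\tau$ in a collar cannot be right: any comparison function must lie below $u_\tau$ on the whole boundary of the comparison region, in particular below $0$ on $\del^-M$ itself. The paper forces blowup precisely by prescribing divergent boundary values, $f_\tau=+\delta/(2\tau)$ on $\del^-M$ and $-\delta/(2\tau)$ on $\del^+M$, and proving $|f_\tau|\geq\delta/(4\tau)$ in a fixed collar (lemma~\ref{lem:bdry_blowup}); only then are $\Omega^+$ and $\Omega^-$ nonempty, with $\Sigma=\del\Omega^-\setminus\del^+M$ nonempty and homologous to $\del^+M$.

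A second omission is the solvability of the Dirichlet problem itself. Boundary gradient estimates for this prescribed-mean-curvature-type equation require geometric control of $\del M$; the hypotheses only fix the sign of $\theta^+=P\pm H$, not of $H$, so barriers of the form $\psi(\dist(\cdot,\del M))$ need not exist for the given data, and Leray--Schauder theory cannot be invoked directly. The paper devotes section~\ref{sec:preparing-data} to this: it embeds $(M,g,K)$ in an auxiliary spacetime, bends $M$ along the null cones at $\del^\pm M$ so the parallel surfaces acquire $H>\delta$ while the sign of $\theta^+$ is preserved, and cuts $K$ off to zero in a collar; proposition~\ref{p:boundary-est} then provides the boundary gradient estimate, and afterwards the retained foliations with $\theta^+[\Sigma^-_s]<0$, $\theta^+[\Sigma^+_s]>0$ plus the strong maximum principle push the resulting MOTS back into the unmodified region. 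Finally, a smaller point: the curvature control needed to pass to the limit comes from the Schoen--Yau estimates for the graphs (proposition~\ref{p:curv_est}), not from the stable-MOTS estimate of theorem~\ref{thm:curv-est}; stability of the blowup surface is a separate statement (theorem~\ref{thm:schoen-stable}) and is not an ingredient of theorem~\ref{thm:schoen}.
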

\begin{remark}
  \label{rem:regularity}
  The proof presented here readily carries over to $n$ dimensional $M$
  with $3\leq n\leq 5$. The dimensional restriction is due to the
  method used for the curvature estimates in
  proposition~\ref{p:curv_est} in \cite{Schoen-Yau:1981}. Higher
  dimensional replacements for this proposition are accessible via
  methods from geometric measure theory, cf.\ \cite{Eichmair:2007}.
\end{remark}
\subsection{Setup and Outline}
Consider $\bar M := M\times\IR$ equipped with the metric $\bar g = g
+ dz^2$, and define $\bar K$ on $\bar M$ as the pull-back of $K$
under the projection $\pi: M\times\IR \to M :(p,z)\mapsto p$.  For a
function $f$ on $M$ we consider $N = \graph f := \{ (p, f(p)) : p\in
M \}$, with induced metric $\bar g$, which is of the form
\begin{equation*}
  \bar g_{ij} = g_{ij} + \nabla_i f \nabla_j f,\qquad
  \bar g^{ij} = g^{ij} - \frac{\nabla^i f \nabla^j f}{1 + |\nabla f |^2}.
\end{equation*}
The mean curvature of $N$ with respect to the downward normal is
\begin{equation*}
  \CH[f]
  =
  \div \left( \frac{\nabla f}{\sqrt{1+|\nabla f|^2}} \right).
\end{equation*}
Furthermore let
\begin{equation*}
  \CP[f] = \tr_N \bar K
\end{equation*}
be the trace of $\bar K$ taken along $N$. Now we can write Jang's
equation as
\begin{equation}
  \label{eq:Jang}
  \CJ[f] = \CH[f] - \CP[f] = 0.
\end{equation}
We shall consider the Dirichlet problem for this equation with
boundary values $f \big{|}_{\del^\pm M} = \mp Z$, for constants
$Z > 0$.

Equation \eqref{eq:Jang} is a quasilinear elliptic equation of
divergence form. In particular, it is a prescribed mean curvature
equation with gradient dependent lower order term. For such equations
the strong maximum principle does not apply directly to give upper and lower
bounds for the solution, without assuming extra conditions for example
on the size of the domain. Further, the boundary gradient estimates
needed for the proof of existence of classical solutions typically
require restrictions on the geometry of the boundary. Therefore we
cannot prove existence of solutions to the Dirichlet problem directly
for equation \eqref{eq:Jang}. In general it is to be expected that
solutions to the Dirichlet problem blow up in the interior.

We follow the approach of \cite{Schoen-Yau:1981} and 
regularize Jang's equation 
by adding a capillarity term. Thus we consider instead of  \eqref{eq:Jang},
the equation
\begin{equation} \label{eq:Jang-mod}
\CJ_\tau[f] = \CJ[f] - \tau f =0
\end{equation}
for $\tau > 0$. After suitably modifying the data, we are able to
apply Leray-Schauder theory \cite{Gilbarg-Trudinger:1998} 
to prove existence of solutions to the
Dirichlet problem.  Letting $\tau \to 0$ gives a sequence of solutions
which by uniform curvature estimates for $\graph f_\tau$ has a
subsequence which converges to a solution of Jang's equation (which in
general may have blowups).

The goal is in fact to prove existence of MOTS by constructing a
blowup solution to Jang's equation. For this purpose, we set $Z =
\delta/\tau$ for a suitable $\delta$ and let $\tau \to 0$.

A key observation of \cite{Schoen-Yau:1981} is that solutions to
\eqref{eq:Jang-mod} 
satisfy interior estimates for the second fundamental
form, uniformly in $\tau$. These estimates allow us to pick out a
subsequence of solutions which converges to a blowup solution of Jang's
equation. After
applying a sequence of renormalizations using the fact that Jang's
equation is translation invariant, we get a vertical solution, which
projects to a MOTS on $M$.

The last part of the argument proceeds exactly as in
\cite{Schoen-Yau:1981}, and therefore the only thing which needs to be
discussed here is the Dirichlet problem.

\subsection{Preparing the Data}
\label{sec:preparing-data}
We will assume that $(M,g,K)$ is embedded into a four-dimensional
Lorentz manifold $(L,h)$ such that $g$ and $K$ are the first and
second fundamental forms of $M$ induced by $h$. As we do not require
the dominant energy condition to hold, it is rather simple to produce
an extension $(L,h)$ of $(M,g,K)$. To this end extend $g$ to
$M\times\IR$ by setting $g_t = g + tK$ on the slice $M\times{t}$. As
$K$ is symmetric, so is $g_t$ and there exists $t_0>0$ such that $g_t$
is positive definite for $t\in (-t_0,t_0)$. Then define $h$ on
$L:=M\times(-t_0,t_0)$ to be
\begin{equation*}
  h = -dt^2 + g_t.
\end{equation*}
This is a Lorentz metric and obviously induces $g$ as first
fundamental form on the slice $M_0 = M\times\{0\}$. That $K$ is the
second fundamental form follows from the second variation formula,
which implies that the second fundamental form of $M_0$ is given by
\begin{equation*}
  \ddeval{}{t}{t=0} g_t = K.
\end{equation*}
Let $t$ be a time function on $L$ with $M=\{t=0\}$ and
$s^+(x):=\dist(x,\del^+M)$ the distance function to $\del^+M$. For
small $s,t$, let $\Sigma^+_{s,t}$ be the surface given by the
intersection of the level sets of $s^+$ and $t$.  Let $n$ be the
timelike normal of the $t$-level sets and let $\nu$ be the spacelike
normal of the $s^+$-level sets, inside the $t$-levels, extending the
outward pointing normal on $\del^+M$.  This defines normal fields
$n,\nu$ at the surfaces $\Sigma^+_{s,t}$ as well as the corresponding
null normals $l^{\pm} = n \pm \nu$. For small $s,t$, we have
$\theta^+[\Sigma^+_{s,t}] > 0$.

Now perform a Lorentz rotation of the normals $n,\nu$ to get
\begin{equation*}
  \tilde \nu = \cosh \alpha \nu + \sinh \alpha n,
  \quad
  \tilde n = \sinh \alpha \nu + \cosh \alpha n.
\end{equation*}
Let $\two_{ab}^\mu$ be the second fundamental form of the surfaces
$\Sigma^+_{s,t}$ so that $H = h^{ab} \la \two_{ab}, \nu\ra$ and $P =
h^{ab} \la\two_{ab}, n \ra$, where $h_{ab}$ is the metric on
$\Sigma^+_{s,t}$.  Then with respect to the normals $\tilde \nu,
\tilde n$ we have
\begin{equation*}
  \tilde H = \cosh \alpha H + \sinh \alpha P,
  \quad
  \tilde P = \sinh \alpha H + \cosh \alpha P
\end{equation*}
and the corresponding null expansions 
\begin{equation*}
\tilde \theta^{\pm} = \tilde P \pm \tilde H
\end{equation*}
are given by 
\begin{equation*}
\tilde \theta^{\pm} =  e^{\pm \alpha} \theta^{\pm}.
\end{equation*}
Further we note 
\begin{align*} 
\tilde H &= \half e^\alpha \theta^+ - \half e^{-\alpha} \theta^-, \\
\tilde P &= \half e^\alpha \theta^+ + \half e^{-\alpha} \theta^-.
\end{align*} 
Deform $M$ to $\tilde M$ by bending up along the outgoing future light
cone at $\del^+M$. By doing so, we get the spacelike and timelike
normals to agree with $\tilde \nu, \tilde n$ for any $\alpha$. As the
deformed $\tilde M$ approaches the light cone, we have $\alpha \to
\infty$.  Therefore there is an $\alpha$ such that $\tilde H, \tilde
P$ are arbitrarily close to $\half e^{\alpha} \theta^+$. In
particular, if $\theta^+ > 0$, we can achieve that both $\tilde H$ and
$\tilde P$ are positive near the outer boundary of $\tilde M$.

We can proceed similarly at the inner boundary $\del^-M$, where
$\theta^+<0$ with respect to the \emph{inward pointing} normal. This means
that $\theta^-<0$ with respect to the \emph{outward pointing} normal.  Then we
can proceed as above, bending along the {\em past inward} lightcone.
This will result in $\tilde H > 0, \tilde P < 0$ (where now $\tilde H$
is defined with respect to the outward normal of $M$ as usual).

This constructs a deformed Cauchy data set $(\tilde M, \tilde g,
\tilde K)$. Let $\del \tilde M$ be the boundary of $\tilde M$
constructed by bending as above. Clearly the boundary $\del\tilde M$
is the union $\del\tilde M =\del^-\tilde M \cup\del^+\tilde M$, with
$\tilde H > 0$ on $\del\tilde M$ and $\tilde P>0$ on $\del^+\tilde M$,
$\tilde P<0$ on $\del^-\tilde M$. Let
\begin{equation*}
  \Sigma^\pm_s
  :=
  \big\{ x\in\tilde M : \dist(x,\del^\pm\tilde M) = s \big\}
\end{equation*}
the parallel surfaces to $\del^\pm\tilde M$ and
\begin{equation*}
  U^\pm_s
  :=
  \big\{ x\in\tilde M : \dist(x,\del^\pm\tilde M) < s \big\}
\end{equation*}
be the respective tubular neighborhoods. Given $\eps>0$, there exists
$\delta>0$ such that we can ensure the following properties:
\begin{equation}
  \label{eq:10}
  \begin{split}    
    &
    \begin{aligned}
      \theta^+[\Sigma^-_s] &< 0
      \\
      H[\Sigma^-_s] &> \delta
      \\
      P[\Sigma^-_s] &\leq 0
    \end{aligned}
    \quad
    \begin{aligned}
      \text{and}
      \\
      \text{and}
      \\
      \text{and}
    \end{aligned}
    \quad
    \begin{aligned}
      \theta^+[\Sigma^+_s] &> 0
      \\
      H[\Sigma^+_s] &> \delta
      \\
      P[\Sigma^+_s] &\geq 0
    \end{aligned}
    \quad
    \begin{aligned}
      &\text{for}\quad s\in [0,4\eps],
      \\
      &\text{for}\quad s\in[0,2\eps],
      \\
      &\text{for}\quad s\in[0,2\eps],
    \end{aligned}
    \\
    &
    \text{the data is unchanged in}\quad M_{3\eps}.
  \end{split}
\end{equation}
We abuse notation here by computing $H$ with respect to the outward
pointing normal for $\del\tilde M$, but compute $\theta^+$ still with
respect to the inward pointing normal near $\del^-\tilde M$, which makes
$\theta^+ = P - H$ near $\del^-\tilde M$.

Fix such an $\eps > 0$ and let $\zeta(s)$ be a non-negative cutoff
function on $s \geq 0$, such that $\zeta(s) = 0$ for $s \in [0,\eps]$,
$\zeta(s) > 0$ for $s > \eps$, and $\zeta(s) = 1$ for $s \geq 2\eps$.
Now define $\zeta(x) = \zeta(d(x,\partial \tilde M))$, and consider
the data set $(\tilde g, \zeta \tilde K)$. From now on we
denote this data set by $(M, g, K)$. The important point to
note here is that this final cut-off does not affect the first
property of \eqref{eq:10}, so that we still retain the barrier effect
of the boundary. We find that with respect to the cut-off data we have
the following properties near the boundary:
\begin{equation}
  \label{eq:14}
  \begin{split}    
    &
    \begin{aligned}
      \theta^+[\Sigma^-_s] &< 0
      \\
      H[\Sigma^-_s] &> \delta
    \end{aligned}
    \quad
    \begin{aligned}
      \text{and}
      \\
      \text{and}
    \end{aligned}
    \quad
    \begin{aligned}
      \theta^+[\Sigma^+_s] &> 0
      \\
      H[\Sigma^+_s] &> \delta
    \end{aligned}
    \quad
    \begin{aligned}
      &\text{for}\quad s\in [0,4\eps],
      \\
      &\text{for}\quad s\in[0,2\eps],
    \end{aligned}
    \\
    &
    K\equiv 0
    \quad\text{in}\quad U_\eps,
    \quad\text{and}
    \\
    &
    \text{the data is unchanged in}\quad M_{3\eps}.
  \end{split}
\end{equation}
\subsection{Existence Proof}
In order to construct solutions to the Dirichlet problem for 
\eqref{eq:Jang-mod}, we consider,
following \cite{Schoen-Yau:1981}, the family of equations
\begin{equation}
  \label{eq:stjang}
  \CH[f] - \sigma \CP[f] = \tau f, \quad f \big{|}_{\partial M} = \sigma \phi
\end{equation} 
for $\sigma \in [0,1]$ and $\tau \in [0,1]$. We need the following
estimates.
\begin{proposition}
  \label{p:curv_est}
  Let $N$ be the graph of a function $f$ satisfying the
  equation
  \begin{equation*}
    \CH[f]-\sigma \CP[f] = F \quad\text{in}\quad M
  \end{equation*}
  with $F\in C^1(\bar M)$, then the second fundamental form $A$ of $N$
  satisfies the estimate
  \begin{equation*}
    |A|(p,f(p)) \leq C\big( \|\RiemM\|_{C^0}, \|K\|_{C^1},
    \dist(p,\del M)^{-1}, \inj(M,g,p)^{-1}, \|F\|_{C^1} \big).
  \end{equation*}
  In fact, if we extend the normal $\bar\nu$ of $N$ to $M\times\IR$,
  then
  \begin{equation*}
    |\bar \nabla \bar \nu|(p,t) \leq C\big( \|\RiemM\|_{C^0}, \|K\|_{C^1},
    \dist_M(p,\del M)^{-1}, \inj(M,g,p)^{-1}, \|F\|_{C^1} \big).
  \end{equation*}
\end{proposition}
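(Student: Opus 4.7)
The plan is to follow the Schoen--Yau strategy from~\cite{Schoen-Yau:1981}, adapted to incorporate the right-hand side $F$ and the parameter $\sigma$. View $N = \graph f$ as a hypersurface in the Riemannian product $\bar M = M\times\IR$ with metric $\bar g = g + \rmd z^2$, which admits the Killing field $\del_z$. With respect to the downward unit normal $\bar\nu$, the hypothesis rewrites as the prescribed mean curvature equation
\begin{equation*}
  \CH[f] = \sigma\,\tr_N\bar K + F\circ\pi \quad\text{on } N,
\end{equation*}
placing $N$ inside the framework of hypersurfaces of bounded mean curvature in an ambient manifold of uniformly bounded geometry.

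The first substantial step is to derive a Simons-type differential inequality for $|A|^2$ on $N$. Tangentially differentiating the prescribed mean curvature equation, combining with the Codazzi identity in $\bar M$, and using that the curvature of $\bar M$ is controlled by $\|\RiemM\|_{C^0}$ since $\bar g$ is a product, one finds that the contributions from $\bar\nabla\bar K$ are bounded by $\|K\|_{C^1}$ and the contributions from $\bar\nabla(F\circ\pi)$ are bounded by $\|F\|_{C^1}$. The standard Simons manipulation then yields
\begin{equation*}
  \lapN|A|^2 \geq 2|\nabN A|^2 - c_1|A|^4 - c_2(|A|^2+1),
\end{equation*}
with constants $c_1, c_2$ depending on $\|\RiemM\|_{C^0}$, $\|K\|_{C^1}$, and $\|F\|_{C^1}$.

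The main step is a De Giorgi--Moser iteration on this inequality. Set
\begin{equation*}
  r_0 := \tfrac{1}{2}\min\bigl\{\dist(p,\del M),\, \inj(M,g;p)\bigr\},
\end{equation*}
so that the ambient geodesic ball $\bar B_{r_0}((p,f(p)))\subset\bar M$ lies in the interior of $\bar M$ with uniformly controlled geometry. With a cutoff $\eta$ supported in $\bar B_{r_0}$ and equal to one on $\bar B_{r_0/2}$, inserting $\eta^2|A|^{2q}$ into the above inequality and applying the Michael--Simon Sobolev inequality on $N\subset\bar M$ produces, via iteration as in~\cite{Schoen-Simon-Yau:1975}, the desired $C^0$-bound on $|A|$ at $(p,f(p))$. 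The main obstacle is the local area bound $|N\cap\bar B_{r_0}|\leq C$ needed to start the iteration; this follows from the extrinsic monotonicity formula for hypersurfaces of bounded mean curvature in $\bar M$, using $|\CH[f]|\leq \sigma\|K\|_{C^0}+\|F\|_{C^0}$. It is exactly at this step that the dimensional restriction of Remark~\ref{rem:regularity} enters.

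Finally, the bound on $|\bar\nabla\bar\nu|(p,t)$ at arbitrary $t$ follows from the vertical translation symmetry of $\bar M$. Extend $\bar\nu$ to $M\times\IR$ by $\del_z$-invariance, so that its value at $(p,t)$ is obtained from the value at $(p,f(p))$ by push-forward along the isometric flow of $\del_z$. Since $\bar g$ is a product and $\del_z$ is Killing, $|\bar\nabla\bar\nu|$ is itself $\del_z$-invariant, and the bound at $(p,f(p))$ propagates to every $(p,t)$.
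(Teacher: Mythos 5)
The paper disposes of this proposition by citing Propositions 1 and 2 of \cite{Schoen-Yau:1981}, and the decisive ingredient there is one your outline omits entirely: a stability-type inequality coming from the vertical translation invariance of the equation. Because $N$ is a graph and the operator $\CH-\sigma\CP$ is invariant under translation along $\partial_z$, the angle function $w=\la\bar\nu,\partial_z\ra$ is positive on $N$ and satisfies a Jacobi-type equation whose inhomogeneity is controlled by $\|K\|_{C^1}$ and $\|F\|_{C^1}$; positivity of $w$ then yields a stability inequality for $N$, and it is this inequality, tested with powers of $|A|$ in the Schoen--Simon--Yau fashion, that produces the $L^p$ bound on $|A|$ for some $p$ larger than the dimension. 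Only with that $L^p$ bound in hand can the Simons inequality $\lapN|A|^2\geq 2|\nabN A|^2-c_1|A|^4-c_2(|A|^2+1)$ be fed into a Moser iteration: the term $c_1|A|^4=c_1|A|^2\cdot|A|^2$ has a critical coefficient, and iteration with only a local area bound and the Michael--Simon Sobolev inequality cannot absorb it. Indeed, rescaled catenoids are minimal, have uniformly bounded area ratios and satisfy the Simons inequality, yet their curvature blows up, so the scheme as you describe it would prove something false. (Two smaller points: the dimensional restriction of remark~\ref{rem:regularity} enters through the admissible exponent range in the stability/SSY integral estimates, not through the area bound; and the local area bound itself should come from the graphical, divergence-form structure --- e.g.\ a calibration-type comparison in the cylinder --- since the monotonicity formula only transfers upper area-ratio bounds from larger scales to smaller ones and there is no a priori bound at the larger scale.)

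The final paragraph also has a gap. Translation invariance does give $|\bar\nabla\bar\nu|(p,t)=|\bar\nabla\bar\nu|(p,f(p))$, but the bound at points of $N$ is \emph{not} a consequence of the $|A|$ estimate: writing $\partial_z=w\bar\nu+Z$ with $Z$ tangent to $N$ and using $\bar\nabla_{\partial_z}\bar\nu=0$, one gets $\bar\nabla_{\bar\nu}\bar\nu=-w^{-1}\bar\nabla_{Z}\bar\nu$, so $|\bar\nabla\bar\nu|^2=|A|^2+|\bar\nabla_{\bar\nu}\bar\nu|^2$ contains a term of size $|A|\sqrt{1-w^2}/w$, which can blow up as the graph becomes vertical --- precisely the regime relevant here, since the whole construction aims at blow-up solutions. (Vertical translates of a unit circular arc near its point of vertical tangency give $|A|=1$ but unbounded $|\bar\nabla\bar\nu|$.) Thus the second estimate is genuinely stronger than the first and requires its own argument --- a gradient estimate for $w$, or running the curvature estimate directly for the extended normal --- which is exactly the content of Proposition 2 in \cite{Schoen-Yau:1981}.
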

\begin{proof}
  This is analogous to \cite[Proposition 1 and Proposition 2]{Schoen-Yau:1981}.
\end{proof}
\begin{proposition}
  \label{p:sup-est}
  Let $f_{\sigma,\tau}$ be a solution to \eqref{eq:stjang} with
  parameters $\Sigma$ and $\tau$. Then $f_{\sigma,\tau}$ satisfies the estimates
  \begin{equation*}
    \sup_M |f_{\sigma,\tau}| \leq \max \big\{ 3\|K\|_{C^0}/\tau,
    \sup_{\del M}|\phi| \big\},
  \end{equation*}
  and
  \begin{equation*}
    \sup_M |\nabla f_{\sigma,\tau}| \leq \max \big\{
    c(\|\RiemM\|_{C^0}+\|\nabla K\|_{C^0})/\tau, \sup_{\del M}|\nabla
    f_{\sigma,\tau}| \big\}
  \end{equation*}
\end{proposition}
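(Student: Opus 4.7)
The plan is to treat the two bounds separately, each by a maximum-principle argument on $M$ using the structure of equation \eqref{eq:stjang}. The $C^0$ estimate is essentially immediate; the gradient estimate requires a Bernstein-type computation on the graph $N=\graph f_{\sigma,\tau}$ that mirrors Propositions~1--2 of \cite{Schoen-Yau:1981} with only cosmetic modifications for the extra $\sigma$ and $\tau f$ terms.

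\textbf{The $C^0$ bound.} I would apply the maximum principle at interior extrema of $f_{\sigma,\tau}$. Suppose the maximum of $f_{\sigma,\tau}$ is attained at an interior point $p_0$. Then $\nabla f(p_0)=0$, so the tangent plane to $\graph f$ at $(p_0,f(p_0))$ is horizontal and
\begin{equation*}
\CP[f](p_0) \;=\; \tr_{T_{p_0}M}\bar K \;=\; \tr K(p_0),
\qquad
\CH[f](p_0) \;=\; \lap f(p_0) \;\leq\; 0,
\end{equation*}
where the last inequality is because $\mathrm{Hess}\,f(p_0)$ is negative semi-definite. Substituting into \eqref{eq:stjang} yields $\tau f(p_0) \leq -\sigma\tr K(p_0) \leq \sqrt{3}\,\|K\|_{C^0} \leq 3\|K\|_{C^0}$. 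The analogous argument at an interior minimum produces the matching lower bound, and at a boundary extremum $|f_{\sigma,\tau}|\leq\sigma|\phi|\leq\sup_{\partial M}|\phi|$. Taking the max of the two contingencies yields the first asserted inequality.

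\textbf{The gradient bound.} For the interior gradient estimate I would work on the graph $N$ via the Bernstein technique. Set $v := \sqrt{1+|\nabla f|^2} = -\bar g(\bar\nu,\partial_z)^{-1}$, where $\bar\nu$ is the downward unit normal to $N$. Since $\bar g = g+dz^2$ is translation invariant in $z$ and $\bar K = \pi^*K$ is the pull-back of $K$, the only $z$-dependence in the equation $\CH[f]-\sigma\CP[f]-\tau f = 0$ comes from the coercive term $\tau f$. Computing the Jacobi-type identity for the lapse $u := \bar g(\partial_z,\bar\nu)$ on $N$ gives a relation of the schematic form
\begin{equation*}
\bar\lap_N u + \bigl(|A_N|^2 + \overline{\mathrm{Ric}}(\bar\nu,\bar\nu)\bigr)u \;=\; \bar g\bigl(\partial_z,\bar\nabla_N H_N\bigr),
\end{equation*}
and substituting $H_N = \sigma\CP[f] + \tau f$ produces, after using the translation invariance of $\bar g$ and $\bar K$, a right-hand side whose leading contribution in the regime $|\nabla f|\gg 1$ is proportional to $\tau$, while the remaining terms are controlled by $\|\RiemM\|_{C^0}$ and $\|\nabla K\|_{C^0}$ times $|u|$. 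Applying the maximum principle to $u$ (or, to avoid indefiniteness in the sign, to $u\,e^{\alpha f}$ for a suitable $\alpha$ chosen using the just-proved $C^0$ bound) at an interior extremum then delivers $|u|^{-1}=v\leq c(\|\RiemM\|_{C^0}+\|\nabla K\|_{C^0})/\tau$; if instead the extremum of $v$ is attained on $\partial M$, the boundary value $\sup_{\partial M}|\nabla f_{\sigma,\tau}|$ dominates.

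\textbf{Where the work lies.} The $C^0$ bound is routine once one notices that $\CP$ collapses to $\tr K$ at a horizontal tangent plane. The main obstacle is organising the Bernstein computation so that (i) the non-variational term $\sigma\CP$ contributes only $\nabla K$-controlled quantities rather than uncontrolled $|\nabla f|^2$ terms, and (ii) the differentiation of $\tau f$ along the translation $\partial_z$ is correctly identified as the $\tau$-coercive contribution that drives the estimate. Both points follow precisely the strategy of \cite[Propositions~1--2]{Schoen-Yau:1981}; the presence of $\tau f$ with $\tau>0$ only strengthens their argument, and the $\sigma$ factor enters linearly and is absorbed in the final constant.
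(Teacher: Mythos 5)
Your $C^0$ bound is complete and is exactly the maximum-principle argument the paper intends (its proof simply cites Section~4 of Schoen--Yau): at an interior extremum $\nabla f=0$, so $\CP[f]$ collapses to $\tr K$ and $\CH[f]$ has a sign, while a boundary extremum gives $\sup_{\del M}|\phi|$. The route you propose for the gradient bound (a Bernstein/maximum-principle argument for $u=\bar g(\del_z,\bar\nu)$ on the graph, exploiting translation invariance) is also in the same spirit as the cited argument.

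However, the central claim of your gradient step is asserted rather than proved, and as stated it is false pointwise. Differentiating $\sigma\CP=\sigma\tr_N\bar K$ along $\del_z^T=\del_z-u\bar\nu$ produces, besides the $\bar\nabla\bar K$--terms (which indeed carry a factor $|u|$, since $\bar\nabla_{\del_z}\bar K=0$ and only the $-u\bar\nu$ component of $\del_z^T$ contributes), the term $2\sigma\bar K(\bar\nabla_{\del_z^T}\bar\nu,\bar\nu)$, which is bounded only by $\|K\|_{C^0}\,|A_N|$: it is controlled neither by $\|\nabla K\|_{C^0}$ nor by a factor $|u|$, and $|A_N|$ is not available at this stage, since proposition~\ref{p:curv_est} depends on $\|F\|_{C^1}$ with $F=\tau f$, i.e.\ on the very gradient you are estimating; nor do Propositions~1--2 of Schoen--Yau supply the missing structure (those are the curvature estimates, not the gradient estimate -- the relevant part is their Section~4). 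What rescues the argument, and what you must say, is that the identity is used only at an interior extremum of $u$, where $\nabla_N u=0$; since $\nabla_Y u=\pm A_N(\del_z^T,Y)$ for all tangential $Y$, this forces $A_N(\del_z^T,\cdot)=0$ there, so the dangerous $K\cdot A$ term vanishes exactly at the point of application, and the extremum inequality reduces to $\tau(1-u^2)\le c\,\big(\|\RiemM\|_{C^0}+\|\nabla K\|_{C^0}\big)|u|$, which is the stated bound. (Equivalently, working on $M$: differentiate \eqref{eq:stjang} covariantly, contract with $\nabla f$, and note that at a maximum of $|\nabla f|^2$ every term in which a derivative falls on the gradient-dependence of the coefficients is a multiple of $\nabla|\nabla f|^2$ and drops out, leaving only the Riemann commutator and $\nabla K$ terms.) Finally, the proposed substitute $u\,e^{\alpha f}$ is unnecessary -- the $|A_N|^2u$ term already has the favourable sign at the extremum -- and it would destroy the exact cancellation just described, so it should be dropped.
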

\begin{proof}
  This follows from the maximum principle, as in
  \cite[Section 4]{Schoen-Yau:1981}.
\end{proof}
Hence we can estimate the gradient once we have a boundary gradient
estimate.
\begin{proposition}
  \label{p:boundary-est}
  Let $(M,g,K)$ be a data set such that there are $\eps > 0$, $\delta
  > 0$, such that for $s \in [0,\eps]$ the surfaces
  \begin{equation*}
    \Sigma_s := \{ p\in M : \dist (p,\del M)= s \}
  \end{equation*}
  satisfy $H>\delta$. Further, assume that $K\equiv 0$ in $\{ p: \
  \dist(p,\del M) < \eps\}$. Let $f_{\tau,\sigma}$ be a solution of
  \begin{equation*}
    \CJ_{\tau,\sigma}[f_{\tau,\sigma}] = \CH[f_{\tau,\sigma}] - \sigma
    \CP[f_{\tau,\sigma}] - \tau  f_{\tau,\sigma} = 0,
  \end{equation*}
  such that $f_{\tau,\sigma}$ is constant on each component of $\del
  M$. Suppose that
  \begin{equation*}
    \sup_M |f_{\tau,\sigma}| = m < \infty \quad \text{and}\quad
    \sup_{\del M} |f_{\tau,\sigma}| \leq \tfrac{\delta}{2\tau}.    
  \end{equation*}
  Then
  \begin{equation*}
    \sup_{\del M} |\nabla f_{\tau,\sigma}| \leq  \max\{
    \tfrac{1}{\sqrt{3}}, 2\eps^{-1} m \}.
  \end{equation*}
\end{proposition}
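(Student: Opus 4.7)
The plan is a standard Serrin--Jenkins style boundary gradient argument, adapted to the modified Jang operator $\CJ_{\tau,\sigma}$ and carried out separately on each component $\Gamma$ of $\del M$ using a barrier that is linear in the distance to the boundary. Let $c_0$ denote the constant value of $f_{\tau,\sigma}$ on $\Gamma$, so by hypothesis $|c_0|\leq\delta/(2\tau)$. In the collar $U_\eps=\{s<\eps\}$, with $s(x):=\dist(x,\Gamma)$, the assumption $K\equiv 0$ forces $\CP[w]=\tr_N\bar K=0$ for every graph $w$ over $U_\eps$, so the equation collapses to $\CH[\cdot]-\tau(\cdot)=0$ there and the full machinery of quasilinear elliptic comparison is available.

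I would try the two--sided ansatz $w^{\pm}(x):=c_0\pm A\,s(x)$ with
\[
A:=\max\bigl\{\tfrac{1}{\sqrt{3}},\;2m/\eps\bigr\}.
\]
Using $|\nabla s|\equiv 1$ and the identity $\lap s=-H[\Sigma_s]$ (with $H$ taken with respect to the outward normal $-\nabla s$, consistent with the convention of \eqref{eq:14}), a direct computation gives
\[
\CH[w^{+}]=\frac{A}{\sqrt{1+A^{2}}}\,\lap s=-\frac{A\,H[\Sigma_s]}{\sqrt{1+A^{2}}}\leq -\frac{A\delta}{\sqrt{1+A^{2}}}.
\]
Combined with $-\tau(c_0+As)\leq -\tau c_0\leq\delta/2$ and the elementary inequality $A/\sqrt{1+A^{2}}\geq 1/2$ (equivalent to $A\geq 1/\sqrt{3}$), this yields $\CJ_{\tau,\sigma}[w^{+}]\leq 0$ on $U_\eps$; an analogous computation gives $\CJ_{\tau,\sigma}[w^{-}]\geq 0$. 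On $\Gamma$ both barriers agree with $f_{\tau,\sigma}$, while on $\{s=\eps\}$ the choice $A\eps\geq 2m$ together with $|c_0|\leq m$ forces $w^{-}\leq -m\leq f_{\tau,\sigma}\leq m\leq w^{+}$. Since $\del_{f}(-\tau f)=-\tau\leq 0$, the quasilinear comparison principle \cite{Gilbarg-Trudinger:1998} applies and yields $w^{-}\leq f_{\tau,\sigma}\leq w^{+}$ throughout $U_\eps$. Equality on $\Gamma$ then forces $|\del_\nu f_{\tau,\sigma}|\leq A$ at $\Gamma$ with $\nu:=\nabla s$, and since $f_{\tau,\sigma}$ is constant along $\Gamma$ its tangential gradient vanishes, giving $|\nabla f_{\tau,\sigma}|\leq A$ on $\Gamma$, which is the claimed bound.

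The delicate point---and the place where the constant $1/\sqrt{3}$ is forced---is the case where $c_0$ has the unfavorable sign for the barrier under test (for $w^{+}$ this is $c_0<0$, and symmetrically for $w^{-}$). Then $-\tau c_0$ contributes a positive error of size up to $\delta/2$ to $\CJ_{\tau,\sigma}[w^{+}]$, and this must be absorbed by the geometric gain $A\delta/\sqrt{1+A^{2}}$ coming from mean convexity of the level sets $\Sigma_s$. The hypothesis $|c_0|\leq\delta/(2\tau)$ is precisely what caps the error at $\delta/2$, and the requirement $A/\sqrt{1+A^{2}}\geq 1/2$ then forces $A\geq 1/\sqrt{3}$. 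The alternative term $2m/\eps$ in the $\max$ is the purely geometric requirement that a linear barrier of slope $A$ traverse the height $m$ of $f_{\tau,\sigma}$ across a collar of width $\eps$. Nothing in the argument uses the parameter $\sigma$ beyond the vanishing of $\CP$ in $U_\eps$, so the bound is uniform in $\sigma\in[0,1]$.
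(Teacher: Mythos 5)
Your argument is correct and is essentially the paper's own proof: the same linear barriers $c_0\pm A s$ in the distance to the boundary with slope $A=\max\{1/\sqrt{3},2\eps^{-1}m\}$, the same use of $H[\Sigma_s]>\delta$ together with $\tau|c_0|\leq\delta/2$ to get $\CJ_{\tau,\sigma}[w^\pm]\lessgtr 0$, and the same comparison-principle conclusion at the boundary. The only differences are presentational: you treat both boundary components uniformly via $\lap s=-H[\Sigma_s]$ and spell out the vanishing of $\CP$ in the collar and the final step from equality on $\del M$ to the normal-derivative bound, which the paper leaves implicit.
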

\begin{proof}
  We proceed by constructing a barrier near $\del^- M$. Consider
  functions $w$ of the form
  \begin{equation*}
    w = \psi(s) \qquad s = \dist(\cdot, \del^- M).
  \end{equation*}
  where $\psi:[0,\eps]\to \IR$ is a scalar function. For functions of
  this form we have
  \begin{equation}
    \label{eq:9}
    \CJ_{\tau,\sigma}[w] = - \frac{\psi'}{(1 + (\psi')^2)^{1/2}}
    H[\Sigma_s] + \frac{\psi''}{(1 + (\psi')^2)^{1/2}} - \tau \psi
  \end{equation}
  in the neighborhood where $K\equiv 0$. To construct an upper barrier
  near one component $\Sigma$ of $\del^- M$, set $w^+ := \psi^+(s)$ with
  $\psi^+(s) = a + bs$, where $a$ is the value of $f_{\tau,\sigma}$ on
  $\Sigma$. We can then pick $b$ so large that
  $\tfrac{b}{(1+b^2)^{1/2}}\geq \tfrac{1}{2}$, that is
  $b\geq\tfrac{1}{\sqrt{3}}$. Then \eqref{eq:9} yields that
  \begin{equation*}
    \begin{split}
      \CJ_{\tau,\sigma}[w^+]
      &\leq
      -\tfrac{\delta}{2} + \tau |a| - \tau b s
      \\
      &\leq
      -\tfrac{\delta}{2} + \tau \sup_{\del M}|f| - \tau b s
      \leq -\tau b s \leq 0.
    \end{split}
  \end{equation*}
  We can then choose $b$ so large that $a+b\eps\geq m$, that is
  $b\geq 2\eps^{-1}m$. Thus we have constructed an upper barrier, the
  construction of the lower barrier is analogous.

  The barrier near $\del^+ M$ can be constructed analogously, using
  the expression
  \begin{equation}
    \label{eq:2}
    \CJ_{\tau,\sigma}[w] = \frac{\psi'}{(1 + (\psi')^2)^{1/2}}
    H[\Sigma_s] + \frac{\psi''}{(1 + (\psi')^2)^{1/2}} - \tau \psi
  \end{equation}
  for $\CJ_{\tau,\sigma}$ near $\del^+ M$.
\end{proof}
As a corollary, we find that given suitable boundary data, equation
\eqref{eq:stjang} is uniformly elliptic, where the ellipticity
constant does not depend on $\sigma\in[0,1]$. Thus we conclude that there
exists a solution to \eqref{eq:stjang} with $\sigma=1$ and
$\tau>0$ for such data by applying Leray-Schauder theory.
\begin{corollary}
  \label{c:existence}
  Let $(M,g,K)$ and $\phi\in C^\infty(\del M)$ be as in
  proposition~\ref{p:boundary-est}. Then the equation
  \begin{equation}
    \label{eq:jangt}
    \begin{cases}
      \CH[f_\tau] - \CP[f_\tau] = \tau f_\tau
      \\
      f |_{\del M} = \phi
    \end{cases}
  \end{equation}
  has a solution $f_\tau$ in $C^{2,\alpha}(\bar M)$ with
  \begin{equation*}
    \| f \|_{C^{2,\alpha}(\bar M)} \leq C/\tau,
  \end{equation*}
  where the constant $C=C\big(\|\RiemM\|_{C^{0,\alpha}},
  \|K\|_{C^{1,\alpha}}, \eps^{-1}\big)$.
\end{corollary}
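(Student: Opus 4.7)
The plan is to apply the Leray--Schauder fixed point theorem (e.g.\ Theorem 11.6 of \cite{Gilbarg-Trudinger:1998}) to the homotopy \eqref{eq:stjang} from the trivial problem at $\sigma = 0$ (solved by $f \equiv 0$) to the desired equation \eqref{eq:jangt} at $\sigma = 1$. Fixing $\tau > 0$, the whole task reduces to establishing a uniform a priori bound in, say, $C^{1,\alpha}(\bar M)$ on \emph{every} solution $f_\sigma$ of \eqref{eq:stjang}, valid for all $\sigma \in [0,1]$; together with the standard Schauder compactness this produces the desired fixed point. Equivalently, one can view this as the method of continuity, since the linearization of $\CJ_{\sigma,\tau}$ has zeroth order coefficient containing $-\tau<0$ and is therefore an isomorphism on $\{u \in C^{2,\alpha}(\bar M) : u|_{\partial M}=0\}$ by the maximum principle, giving openness for free.

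The uniform a priori bound I would assemble by chaining the three propositions in the order they were set up. First, Proposition \ref{p:sup-est} gives $\|f_\sigma\|_{C^0} \leq C_1/\tau$, where the constant depends on $\|K\|_{C^0}$ and on $\sup_{\partial M}|\sigma\phi| \leq \|\phi\|_{C^0} \leq \delta/(2\tau)$, the latter by the standing hypothesis of Proposition \ref{p:boundary-est} that is built into the statement of the corollary. This in turn verifies the hypotheses of Proposition \ref{p:boundary-est}, which delivers $\sup_{\partial M}|\nabla f_\sigma| \leq C_2/\tau$. Inserting this boundary gradient bound into the second half of Proposition \ref{p:sup-est} yields the global gradient bound $\|\nabla f_\sigma\|_{C^0} \leq C_3/\tau$. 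At this stage \eqref{eq:stjang} has become uniformly elliptic with coefficients bounded independently of $\sigma$, so Proposition \ref{p:curv_est} controls $|A|$ on $\graph f_\sigma$ in the interior; combined with the gradient bound this supplies a uniform $C^{1,1}$ estimate on $f_\sigma$. Finally, treating \eqref{eq:stjang} as a linear elliptic equation with $C^{0,\alpha}$ coefficients and applying interior and boundary Schauder estimates promotes this to the asserted $\|f_\sigma\|_{C^{2,\alpha}(\bar M)} \leq C/\tau$.

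The main technical subtlety lies in the interplay between interior curvature control and boundary regularity. Proposition \ref{p:curv_est} degenerates as $\dist(p,\partial M)\to 0$, so the resulting $C^{1,1}$ bound is genuinely interior; extending it up to $\partial M$ requires pairing it with the boundary gradient bound and using the linear Schauder boundary theory. Uniformity in $\sigma$ is essentially automatic because $\CP[f]$ depends only on $\nabla f$ and on the $C^1$-bounded tensor $K$ that vanishes in the collar $U_\eps$ by the reductions of Section \ref{sec:preparing-data}; in particular the barrier construction \eqref{eq:9} underlying Proposition \ref{p:boundary-est} is $\sigma$-independent, and the lower-order term $\sigma\CP[f]$ enters with a coefficient bounded uniformly in $\sigma \in [0,1]$, which is exactly what is needed to make the Leray--Schauder hypothesis hold.
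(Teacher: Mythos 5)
Your proposal follows essentially the same route as the paper, which disposes of this corollary by declaring it analogous to Lemma~3 of \cite{Schoen-Yau:1981}: chain proposition~\ref{p:sup-est} (sup bound), proposition~\ref{p:boundary-est} (boundary gradient bound via the barriers built from $H>\delta$ and $K\equiv 0$ in the collar), and the gradient estimate of proposition~\ref{p:sup-est} again to get uniform ellipticity of \eqref{eq:stjang} independent of $\sigma$, then invoke Leray--Schauder theory for the family $\sigma\in[0,1]$, noting that $\sigma=0$ with boundary data $\sigma\phi=0$ is solved by $f\equiv 0$. The one place where your write-up is off the standard path is the final regularity step: obtaining the $C^{1,\alpha}(\bar M)$ bound needed for the Leray--Schauder hypothesis by combining the interior bound from proposition~\ref{p:curv_est} with the boundary gradient bound and ``linear Schauder boundary theory'' is circular near $\del M$, since freezing the coefficients $a^{ij}(\nabla f)$ requires $\nabla f$ to be H\"older continuous up to the boundary, which is exactly what is being sought. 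The clean (and intended) argument dispenses with proposition~\ref{p:curv_est} here altogether: once the global $C^0$ and $C^1$ bounds make \eqref{eq:stjang} uniformly elliptic, the divergence structure of $\CH$ gives the global gradient H\"older estimate of De~Giorgi--Nash--Ladyzhenskaya--Uraltseva type (Chapter~13 of \cite{Gilbarg-Trudinger:1998}), uniformly in $\sigma$, and then Schauder theory bootstraps to the stated $C^{2,\alpha}$ bound of order $C/\tau$; proposition~\ref{p:curv_est} is reserved for the later limit $\tau\to 0$, where no boundary control survives.
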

\begin{proof}
  This is analogous to \cite[Lemma 3]{Schoen-Yau:1981}
\end{proof}
We now specify the precise data on $\del M$. Set
\begin{equation*}
  \phi =
  \begin{cases}
    \phantom{-}\tfrac{\delta}{2\tau} &\quad\text{on}\quad \del^-M \\
    -\tfrac{\delta}{2\tau}&\quad\text{on}\quad \del^+M
  \end{cases},
\end{equation*}
where $\delta$ is as in proposition~\ref{p:boundary-est}.  We then
solve \eqref{eq:jangt} with this data to obtain a family of functions
$f_\tau$. Note that the gradient estimate forces $f_\tau$ to be uniformly
large near the boundary. Denote $M_\eps = \{ p\in M : \dist(p,\del M)
> \eps\}$.
\begin{lemma}
  \label{lem:bdry_blowup}
  There exists an $\eps'>0$ such that the functions $f_\tau$ satisfy
  \begin{equation*}
    |f_\tau| \geq \tfrac{\delta}{4\tau}\qquad\text{in}\qquad
    M\setminus M_{\eps'}.
  \end{equation*}
\end{lemma}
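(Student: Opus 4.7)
The plan is to combine the Dirichlet data $f_\tau|_{\del^\mp M} = \pm\delta/(2\tau)$ with a gradient bound of order $1/\tau$ to produce a $\tau$-independent collar around $\del M$ in which $|f_\tau|$ cannot drop below $\delta/(4\tau)$. Since both the boundary values and the expected gradient scale like $1/\tau$, the collar width should come out to be a purely geometric quantity of order $\delta/C$.

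To obtain the uniform gradient bound, I chain the previously proven estimates. Proposition~\ref{p:sup-est} applied with $\sup_{\del M}|\phi| = \delta/(2\tau)$ gives $\sup_M |f_\tau| \leq C_1/\tau$. Feeding this into Proposition~\ref{p:boundary-est} yields $\sup_{\del M} |\nabla f_\tau| \leq 2\eps^{-1} \sup_M |f_\tau| \leq C_2/\tau$ (the $1/\sqrt{3}$ term being dominated for small $\tau$). The second assertion of Proposition~\ref{p:sup-est} then upgrades this to $\sup_M |\nabla f_\tau| \leq C/\tau$, with $C$ depending only on $\|\RiemM\|_{C^0}$, $\|K\|_{C^1}$, $\eps$, and $\delta$.

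With this in hand, the conclusion follows from the fundamental theorem of calculus along a minimizing geodesic from $x$ to the nearest boundary component. For $x$ within distance $r < \eps$ of $\del^- M$, integration gives $f_\tau(x) \geq \delta/(2\tau) - Cr/\tau$, and symmetrically $f_\tau(x) \leq -\delta/(2\tau) + Cr/\tau$ near $\del^+ M$. Choosing $\eps' := \min\{\eps, \delta/(4C)\}$, sufficiently small that the $\eps'$-collars around $\del^\pm M$ are disjoint, delivers $|f_\tau| \geq \delta/(4\tau)$ throughout $M \setminus M_{\eps'}$. The one point requiring care is to confirm that the composite gradient constant $C$ is genuinely $\tau$-independent; this is clear from tracing the explicit $1/\tau$ factors in the cited propositions (coming from $\|K\|_{C^1}/\tau$ and $\sup_{\del M}|\phi|/\tau$), which isolate cleanly so that the remaining constant is purely geometric.
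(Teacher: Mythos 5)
Your proof is correct and is exactly the argument the paper intends but leaves implicit (the text only remarks that ``the gradient estimate forces $f_\tau$ to be uniformly large near the boundary''): chaining proposition~\ref{p:sup-est} and proposition~\ref{p:boundary-est} to get $\sup_M|\nabla f_\tau|\leq C/\tau$ with $C$ independent of $\tau$, then integrating from the boundary data $\pm\tfrac{\delta}{2\tau}$ over a collar of width $\eps'\sim\delta/(4C)$. Your bookkeeping of the $\tau$-dependence and the choice of $\eps'$ are sound, so nothing is missing.
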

As in~\cite[Section 4]{Schoen-Yau:1981} we can now use the curvature
estimate from proposition~\ref{p:curv_est} to obtain a limit for $\graph
f_\tau$ as $\tau\to 0$. By the previous lemma we can restrict
ourselves to $M_{\eps'}$ away from the boundary, as $f_\tau\to\infty$
uniformly on $M\setminus M_{\eps'}$. 
This gives the following result. 
\begin{proposition}
  \label{p:limit}
  There exists a sequence $\tau_i\to 0$ such that $\graph f_{\tau_i}$
  in $M_{\eps'}$ converges to a smooth manifold $N_0$ satisfying
  $H+P=0$. $N_0$ consists of a disjoint collection of components,
  which are either graphs or cylinders over compact surfaces $\Sigma$.

  Let $\Omega_\pm := \{ p : f_{\tau_i}(p)\to\pm\infty\}$ and $\Omega^0
  := \{ p : \sup_{i\geq 1}|f_{\tau_i}(p)| < \infty \}$.  Then $M$ is a
  disjoint union $M = \Omega^0 \cup \Omega^+ \cup \Omega^-$.  The set
  $\Sigma:=\del\Omega^-\setminus \del^+ M$ consists of marginally
  trapped surfaces with $\theta^+ =0$ with respect to the normal
  pointing into $\Omega^-$.
\end{proposition}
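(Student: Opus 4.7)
The approach is to combine the uniform curvature estimate of Proposition~\ref{p:curv_est} with the sup bounds of Propositions~\ref{p:sup-est} and~\ref{p:boundary-est} to extract a subsequential limit, and then to exploit the vertical translation invariance of $(\bar M,\bar g,\bar K)$ to analyze its structure.

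First I would establish uniform $C^2$-control on $\graph f_\tau$ over $M_{\eps'}$. The sup estimate gives $\tau|f_\tau|\leq 3\|K\|_{C^0}$, and combined with the boundary gradient estimate of Proposition~\ref{p:boundary-est}, Proposition~\ref{p:sup-est} also yields a uniform bound on $\tau|\nabla f_\tau|$. Hence $F:=\tau f_\tau$ satisfies $\|F\|_{C^1}\leq C$ independently of $\tau$, so Proposition~\ref{p:curv_est} produces a uniform bound on the second fundamental form of $\graph f_\tau\subset M_{\eps'}\times\IR$. Viewing the graphs as embedded surfaces in the complete ambient $\bar M$, standard compactness from this curvature bound together with uniform local area control produces a subsequence $\tau_i\to 0$ along which $\graph f_{\tau_i}$ converges in $C^{1,\alpha}_{\mathrm{loc}}$ on $M_{\eps'}\times\IR$ to an embedded $C^{1,\alpha}$ limit surface $N_0$. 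A further diagonal selection arranges that $f_{\tau_i}(p)$ has a limit in $[-\infty,\infty]$ for every $p\in M_{\eps'}$; by Lemma~\ref{lem:bdry_blowup} the same holds automatically on $M\setminus M_{\eps'}$, giving the disjoint decomposition $M=\Omega^0\cup\Omega^+\cup\Omega^-$ and in particular the inclusion of neighborhoods of $\partial^\pm M$ in $\Omega^\mp$.

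Next I would verify that $N_0$ is smooth and satisfies $H+P=0$, and classify its components. Since the heights $f_{\tau_i}$ at points of $\graph f_{\tau_i}$ converging into any compact subset of $N_0$ stay bounded, $\tau_i f_{\tau_i}\to 0$ there, and passage to the limit in $\CH[f_{\tau_i}]-\CP[f_{\tau_i}]=\tau_i f_{\tau_i}$ yields the corresponding geometric equation on $N_0$; elliptic regularity for this quasilinear prescribed-mean-curvature equation upgrades $N_0$ to smooth. Since $\bar g$ and $\bar K$ are invariant under $z\mapsto z+c$, the vector field $\partial_z$ is a symmetry of the equation. If a connected component of $N_0$ is tangent to $\partial_z$ at even one point, the strong maximum principle applied to the difference with a vertical translate forces tangency throughout the component, so it is a cylinder $\Sigma\times\IR$ over a compact surface $\Sigma\subset M_{\eps'}$; otherwise $\partial_z$ is nowhere tangent and the component projects diffeomorphically to $M$ as a graph.

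Finally I would identify the cylindrical components with MOTS. A cylinder $\Sigma\times\IR$ can only arise above the boundary between a region where $f_{\tau_i}$ stays bounded and one where it blows up; the components lying above $\partial\Omega^-\setminus\partial^+M$ are therefore approached by $\graph f_{\tau_i}$ from the $\Omega^-$-side, where $f_{\tau_i}\to-\infty$. On that side $\nabla f_{\tau_i}$ points outward from $\Omega^-$, so the upward graph normal $(\partial_z-\nabla f_{\tau_i})/\sqrt{1+|\nabla f_{\tau_i}|^2}$ converges to the horizontal unit vector pointing into $\Omega^-$. Relative to this limiting normal, the cylinder $\Sigma\times\IR$ has mean curvature in $(\bar M,\bar g)$ equal to the mean curvature $H$ of $\Sigma$ in $M$, while $\tr_{\Sigma\times\IR}\bar K=\tr_\Sigma K=P$; the limit equation $H+P=0$ therefore reads $\theta^+[\Sigma]=0$ with the normal pointing into $\Omega^-$. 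The main technical obstacle is passing the equation to the limit near points where $N_0$ is tangent to $\partial_z$, where the representation of $\graph f_{\tau_i}$ as a graph over $M$ degenerates; following \cite{Schoen-Yau:1981}, this is handled by re-expressing the graphs locally as graphs over rotated reference planes adapted to the tangent plane of $N_0$, which is permissible thanks to the $\partial_z$-translation invariance and the uniform curvature estimate.
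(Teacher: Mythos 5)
Your overall strategy---uniform bounds on $\tau f_\tau$ and $\tau|\nabla f_\tau|$ feeding into proposition~\ref{p:curv_est}, subsequential convergence of the graphs as submanifolds of bounded curvature, classification of the components of $N_0$ via the vertical translation invariance and the strong maximum principle, and the sign bookkeeping for the limiting normal---is exactly the route the paper takes (deferring, as you do, to Section~4 of \cite{Schoen-Yau:1981}), and your orientation computation for $\theta^+$ agrees with the paper's. There is, however, one genuine gap: your last paragraph tacitly assumes that every component $\Sigma'$ of $\del\Omega^-\setminus\del^+M$ sits under a cylindrical component $\Sigma'\times\IR$ of $N_0$. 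That is only guaranteed when $\Sigma'$ is the interface between $\Omega^+$ and $\Omega^-$. When $\Sigma'$ separates $\Omega^0$ from $\Omega^-$, the limit $N_0$ near $\Sigma'$ is a graph over $\Omega^0$ which merely \emph{asymptotes} to the cylinder $\Sigma'\times\IR$; the cylinder itself need not be a component of $N_0$, so your graph/cylinder dichotomy does not by itself place a solution of $\CH-\CP=0$ on $\Sigma'\times\IR$. Closing this requires the extra step the paper spells out for this second case (in \cite{Schoen-Yau:1981} it is the renormalization by vertical translations): translate the asymptotically cylindrical end of $N_0$ downward, use the $z$-invariance of the equation and the uniform curvature bound to extract a further limit, and conclude that $\Sigma'\times\IR$ itself solves the equation, whence $\theta^+[\Sigma']=0$; this same step is what yields smoothness and compactness of this part of $\del\Omega^-$, which your argument otherwise leaves unaddressed.

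Two smaller points. The trichotomy $M=\Omega^0\cup\Omega^+\cup\Omega^-$ cannot be arranged by ``a further diagonal selection'', since that controls only countably many points; it follows instead from the convergence of the graphs as submanifolds: for each $p$, either the vertical line over $p$ meets a graphical point of $N_0$, in which case $f_{\tau_i}(p)$ converges, or the graphs eventually leave every compact vertical interval over $p$, in which case $f_{\tau_i}(p)\to\pm\infty$ with the sign determined by connectedness. Similarly, your tangency argument ``by comparing with a vertical translate'' does not quite parse as stated, because the translated surface need not touch $N_0$ at the point of vertical tangency; the clean version of the same idea is to observe that $u=\bar g(\bar\nu,\del_z)$ has a sign on $N_0$ and satisfies the linearized equation because $\del_z$ is a symmetry of $(\bar g,\bar K)$, so the strong maximum principle applied to $u$ gives precisely the graph-or-cylinder alternative you want.
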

The fact that $\Sigma$ satisfies $\theta^+ =0$, can be seen as
follows. Since the $f_{\tau_i}$ converge to $-\infty$ in $\Omega^-$
and are bounded below outside of $\Omega^-$, there are just two
possibilities for the convergence of $N_{\tau_i} = \graph
f_{\tau_i}$to $N_0$ near each component $\Sigma'$ of $\Sigma$.  The
first possibility is that $\Sigma'$ is the interface between
$\Omega^+$ and $\Omega^-$. Then $N_0$ has a cylindrical component
$\Sigma'\times\IR$, and the convergence is such that the downward
normal $\bar\nu_\tau$ of $N_{\tau_i}$ converges to the normal of
$\Sigma'$ pointing out of $\Omega^-$. As $N_0$ satisfies $\CH[N_0] -
\CP[N_0] =0$ with respect to the limit of $\bar\nu_{\tau_i}$, this
implies that $H - P =0$ on $\Sigma'$ with respect to the outward
pointing normal, and hence $\theta^+ = P + H =0$ with respect to the
inward pointing normal as claimed. The second possibility is that
$\Sigma'$ is an interface between $\Omega^0$ and $\Omega^-$. Then near
$\Sigma'$, $N_0$ is a graph over $\Omega^0$ which asymptotes to
$\Sigma'\times\IR$, and since $f_{\tau_i}\to-\infty$ in $\Omega^-$,
this graph goes to $-\infty$ near $\Sigma'$ as well. Again we can
conclude that $\bar\nu_{\tau_i}$ converges to the normal of $N_0$
pointing out of $\Omega^-$.  Furthermore, $\CH - \CP = 0$ on
$\Sigma'\times\IR$ with respect to this normal, as it is the limit of
$N_0$, which satisfies $\CH -\CP =0$.  Hence we again conclude that
$\theta^+[\Sigma']=0$.

From Lemma~\ref{lem:bdry_blowup} we know that $\Omega^+$ contains a
neighborhood of $\del^-M$ and $\Omega^-$ contains a neighborhood of
$\del^+ M$, so neither one of them is trivial. In particular
$\del\Omega^-$ is the disjoint union $\del\Omega^- = \Sigma \cup
\del^+M$, where $\Sigma \subset M$ is contained in the interior of
$M$.

Recall that we had to modify the data for the
existence proof. We now show that $\Sigma$ can not enter the region
where we modified the data. To see this, note that a neighborhood of
$\del^-M$ is foliated by surfaces $\Sigma^-_s$ with
$\theta^+[\Sigma^-_s]<0$. If $\Sigma$ enters this region there
is a minimal $s$, with $\Sigma^-_s\cap\Sigma\neq\emptyset$. This
surface touches $\Sigma$ with their 
outward normals pointing in the same
direction. Thus, by the strong maximum principle, $\Sigma=\Sigma^-_s$, a
contradiction.  Furthermore, there is a neighborhood of $\del^+M$
foliated by surfaces $\Sigma^+_s$ with $\theta^+[\Sigma^+_s]>0$. We
can then proceed analogously to get a contradiction to $\Sigma$
entering this neighborhood. As data set is modified only 
in the neighborhoods discussed above, 
we find that $\Sigma$ lies entirely
in the region where the data is unchanged.

We thus conclude the proof of theorem~\ref{thm:schoen} by finding our
solution $\Sigma$ in the unmodified region of $(M,g,K)$.

It is an interesting possibility that the existence theory developed
here for the Dirichlet problem for Jang's equation can be used to
generalize Yau's result in \cite[Theorem 5.2]{Yau:2001} to more general boundary
geometries. This possibility will be investigated by the authors in
future work.


%
\section{Blowup surfaces are stable} 
\label{sec:stab-constr-mots}
While not actually necessary for the main result of the paper, we
present an extension of the results of
section~\ref{sec:theorem-schoen}.  From the arguments in
\cite{Schoen-Yau:1981} it is clear that $\Sigma$ has only components
which are symmetrized stable, where symmetrized stable refers to
non-negativity of the operator (cf. \cite{Galloway-Schoen:2006})
\begin{equation*}
  \tilde L_M f
  =
  -\lapSig f
  + f\big( \tfrac{1}{2}\ScalSig - \tfrac{1}{2}|\chi|^2  - \mu + J(\nu) \big).
\end{equation*}
Here we want to show that they are in fact stable in the sense of MOTS.
\begin{theorem}
  \label{thm:schoen-stable}
  The surface $\Sigma$ constructed in the proof of
  theorem~\ref{thm:schoen} is a \emph{stable} MOTS.
\end{theorem}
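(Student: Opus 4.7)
The plan is to prove $\lambda_1(L_M) \geq 0$ by exhibiting a positive function $\phi > 0$ on $\Sigma$ satisfying $L_M \phi = 0$. By the Krein--Rutman theorem applied to $L_M$ (a second order uniformly elliptic operator on the compact surface $\Sigma$, albeit non-self-adjoint), any such positive solution is the principal eigenfunction and its eigenvalue $0$ is the principal eigenvalue, so $\lambda_1(L_M) = 0$.

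The function $\phi$ will arise from the vertical translation invariance of Jang's equation. Both $\bar g = g + dz^2$ and $\bar K = \pi^* K$ are $z$-independent, so the operator $\CJ[f] = \CH[f] - \CP[f]$ satisfies $\CJ[f+s] = \CJ[f]$ for every constant $s$. Differentiating gives $\delta\CJ[f_\tau](1) = 0$ at the regularized solution $f_\tau$, even though $\CJ[f_\tau] = \tau f_\tau \neq 0$ (the identity is about the symmetry of the equation, not the particular surface). Passing from $f$-variations to normal variations on the graph $N_\tau = \graph f_\tau$ (the constant $f$-variation $1$ corresponds to the normal variation $-u_\tau$ with $u_\tau := 1/\sqrt{1+|\nabla f_\tau|^2} > 0$), this yields the exact identity
\[
\mathcal L^{N_\tau} u_\tau = 0,
\]
where $\mathcal L^{N_\tau}$ denotes the normal-variation Jacobi operator of $\CJ$ on $N_\tau$.

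Now I pass to the limit. On the cylindrical component $\Sigma\times\IR$ of the blowup $N_0$, one has $u_\tau \to 0$ pointwise because $|\nabla f_\tau| \to \infty$ there. Rescale $\phi_\tau := u_\tau / a_\tau$ with $a_\tau := \sup u_\tau$ taken over a fixed tubular neighborhood of $\Sigma$ in $N_\tau$, and reparametrize $N_\tau$ in this region as a small $C^{2,\alpha}$-perturbation of $\Sigma\times\IR$ inside $\bar M$. Deploying the uniform curvature estimate of Proposition~\ref{p:curv_est}, Arzel\`a--Ascoli, and elliptic regularity for $\mathcal L^{N_\tau}\phi_\tau = 0$, one extracts a subsequential limit $\phi \geq 0$ on $\Sigma\times\IR$ which is non-trivial by normalization and $z$-invariant (the limiting Jacobi operator $\mathcal L^{\Sigma\times\IR}$ commutes with $z$-translations), hence descends to a function on $\Sigma$. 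Since $\bar g$ and $\bar K$ both split as products along $\Sigma\times\IR$, a direct second variation calculation identifies the restriction of $\mathcal L^{\Sigma\times\IR}$ to $z$-independent variations with the linearization of $H - P = -\theta^+$ with respect to the blowup normal $\nu_{\text{out}}$ pointing out of $\Omega^-$ (opposite to the MOTS outer normal $\nu = -\nu_{\text{out}}$), which by the linearity of $L_M$ is precisely $L_M$. Thus $L_M\phi = 0$, and the strong maximum principle upgrades the non-negative, non-trivial $\phi$ to $\phi > 0$.

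The main technical obstacle is the rescaling and convergence step, since $a_\tau \to 0$ and naive pointwise limits of $u_\tau$ on the cylindrical blowup vanish. A non-trivial positive limit is extracted only after reparametrizing the near-vertical graphs $N_\tau$ as sections over the reference cylinder $\Sigma\times\IR$ and applying the $C^{2,\alpha}$-estimates of Proposition~\ref{p:curv_est} together with Arzel\`a--Ascoli and elliptic regularity to the renormalized $\phi_\tau$. A subsidiary (though routine) calculation is the identification of the restricted Jacobi operator on the limit cylinder with $L_M$; careful bookkeeping of normal orientations and signs is required.
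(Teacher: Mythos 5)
Your strategy is the classical ``vertical translation Jacobi field'' heuristic, which is genuinely different from the route taken in the paper, but as written it has two gaps, one of which is serious. First, the claimed exact identity $\mathcal{L}^{N_\tau}u_\tau=0$ is false: the translation invariance gives only that the linearization of the \emph{function-space} operator $\CJ$ annihilates the constant $1$. Passing to the normal-variation (geometric) linearization on $N_\tau$ requires splitting $\del_z$ into its normal part $u_\tau\bar\nu$ and a tangential part $X$, and since $N_\tau$ is \emph{not} a solution of $\CH-\CP=0$ but of $\CH-\CP=\tau f_\tau$, the tangential part contributes $X(\tau f_\tau)=\tau(1-u_\tau^2)$, so $\mathcal{L}^{N_\tau}u_\tau=O(\tau)$ only. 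This matters because you then divide by $a_\tau=\sup u_\tau\to 0$: the renormalized function satisfies $\mathcal{L}^{N_\tau}\phi_\tau=O(\tau/a_\tau)$, and you have no control of $\tau/a_\tau$. Near a cylindrical interface between $\Omega^+$ and $\Omega^-$ the region where $u_\tau$ is bounded below is at intrinsic distance of order $\delta/\tau$ along $N_\tau$, so Harnack chains only give lower bounds on $a_\tau$ that are exponentially small in $1/\tau$; nothing guarantees $\tau\ll a_\tau$, and hence nothing guarantees that the limit $\phi$ solves the homogeneous equation. Second, the assertion that the limit is $z$-invariant ``because the limiting operator commutes with $z$-translations'' is a non sequitur: positive solutions on a cylinder of a translation-invariant operator need not be translation invariant. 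This second gap is repairable without $z$-invariance: with $\mathcal{L}^{\Sigma\times\IR}=L_M-\del^2/\del z^2$ (which your product computation does give), pair a positive solution $\phi$ with the principal eigenfunction $\psi^*>0$ of the adjoint $L_M^*$ and set $F(z)=\int_\Sigma\phi(\cdot,z)\,\psi^*\dmu$; then $F''=\lambda_1 F$ with $F>0$ on all of $\IR$, which forces $\lambda_1\geq 0$. But this repair still presupposes a genuine positive solution on the cylinder, i.e.\ it does not close the first gap. Indeed, the reason the translation-invariance argument is usually credited only with \emph{symmetrized} stability (as remarked at the start of section~\ref{sec:stab-constr-mots}) is precisely that turning the approximate positive Jacobi field $u_\tau$ into information about the non-self-adjoint operator $L_M$ requires more than the limit you take.

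For comparison, the paper argues in the opposite direction: assuming a component $\Gamma$ of $\Sigma$ is unstable, the negative principal eigenvalue is used to build a foliation $\Gamma_s$ near $\Gamma$ with $\theta^+[\Gamma_s]$ of controlled sign and size, from which an explicit subsolution $w$ of Jang's operator with $\CJ[w]\geq\eta>0$ is constructed (three matched profiles on the annuli $A(-\delta,0)$, $A(0,\eps)$, $A(\eps,2\eps)$, the last tending to $-\infty$). The quasilinear comparison principle applied to $w+m$ and $f_\tau$ then bounds $f_\tau$ uniformly from below on $A(-\delta,\eps)$ as $\tau\to0$, contradicting $\Gamma\subset\del\Omega^-$. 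That argument never needs to renormalize a degenerating Jacobi field, which is exactly the step your proposal leaves open; if you wish to pursue your route, the missing ingredient is a quantitative lower bound on $a_\tau$ relative to $\tau$ (or an alternative limiting procedure that tolerates the $O(\tau)$ error), together with the adjoint-pairing argument above in place of the unjustified $z$-invariance.
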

\begin{remark}
  By the same argument we can prove that any blow-up surface obtained
  by the capillarity term regularization of Jang's equation is a
  stable surface, in particular those in \cite{Schoen-Yau:1981}. Note
  that all of these surfaces are MOTS provided one chooses the right
  orientation of the normal.
\end{remark}
\begin{proof}
  The stability of $\Sigma$ will follow from a barrier argument.
  Assume that $\Gamma$ is an unstable component of $\Sigma$. We will
  show that in this case the functions $f_{\tau_i}$ are bounded below
  $+\infty$ in a neighborhood of $\Gamma$. Hence $\Gamma$ lies in the
  interior of $\Omega^+\cup\Omega^0$ and can not be part of
  $\del\Omega^-$, which contradicts the assumption that $\Gamma$ is a
  component of $\Sigma$.

  If $\Gamma$ is unstable, let $\phi>0$ be a suitably scaled
  eigenfunction to the principal eigenvalue. We can extend the vector
  field $\phi\nu$ to a neighborhood of $\Gamma$, and flow $\Gamma$ by
  this vector field. This yields a map $F: \Gamma\times [-1,1] \to M$
  and constant $\Lambda>0$ with the following properties. We will
  denote $\Gamma_s = F(\Gamma,s)$.
\begin{enumerate}
\item $\Gamma_0 = \Gamma$.
\item $\Gamma_s \subset \Omega^+$ if $s\in[-1,0)$ and
  $\Gamma_s\cap\Omega^+=\emptyset$ if $s\in(0,1]$.
\item $\dd{F}{s} = \beta\nu$, where $\nu$ is the normal to $\Gamma_s$
  extending the outward pointing normal $\nu$ on $\Gamma$, and $\beta$
  satisfies the estimates
  \begin{equation*}
    \Lambda^{-1} \leq \beta \leq \Lambda,
    \qquad\text{and}\qquad
    \left|\dd{\beta}{s}\right|\leq \Lambda.  
  \end{equation*}
\item Outside of $\Omega^+$ we have $\theta^+[\Gamma_s] <0$ and inside
  $\theta^+[\Gamma_s]>0$ and
  \begin{equation*}
    \Lambda^{-1}s \leq |\theta^+[\Gamma_s]| \leq
    \Lambda s \qquad\text{for all}\qquad s\in[-1,1].
  \end{equation*}
\item
  We can assume that $\|K\|_{C^0(M)} \leq \Lambda$.
\end{enumerate}
For an interval $(s_1,s_2)\subset [-1,1]$ we denote by $A(s_1,s_2)$
the annular region $F\big(\Gamma\times(s_1,s_2)\big)$, which is
foliated by the $\Gamma_s$ for $s\in(s_1,s_2)$ and has boundary $\del
A(s_1,s_2) = \Gamma_{s_1} \cup \Gamma_{s_2}$.

We will construct a subsolution $w$ of Jang's equation, satisfying
$\CJ[w]\geq \eta>0$. The function $w$ will be constant on the
$\Gamma_s$, that is $w = \phi(s)$. We will later use the positivity of
$\eta$ to infer that $w+m_\tau$ are in fact subsolutions for
$\CJ_\tau$, where $m_\tau$ is a suitably chosen constant.
\begin{lemma}
  For $w = \phi(s)$ we can compute Jang's operator to be the following
  expression
  \begin{equation}
    \label{eq:11}
    \CJ[w] = \frac{\phi'}{\beta\sigma} \theta^+
    - \left( 1+ \frac{\phi'}{\beta\sigma}\right) P
    - \sigma^{-2} K(\nu,\nu)
    + \frac{\phi''}{\beta^{2}\sigma^{3}}
    -\frac{\phi'}{\beta^3\sigma^3}\dd{\beta}{s}.  
  \end{equation}
  Here $\sigma^2 = 1 + \beta^{-2}\phi'^2$.
\end{lemma}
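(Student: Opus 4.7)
The plan is to split Jang's operator as $\CJ[w]=\CH[w]-\CP[w]$ and compute each piece separately, exploiting the fact that $w=\phi(s)$ is constant on the leaves $\Gamma_s$ of the flow. Since the flow satisfies $\tdd{F}{s}=\beta\nu$, the function $s$ has gradient $\nabla s=\beta^{-1}\nu$ in the annular region, so
\begin{equation*}
\nabla w=\frac{\phi'}{\beta}\nu,\qquad 1+|\nabla w|^2=\sigma^2,\qquad \frac{\nabla w}{\sqrt{1+|\nabla w|^2}}=\frac{\phi'}{\beta\sigma}\nu.
\end{equation*}
This is the basic input for both computations.

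For the mean curvature term I would expand
\begin{equation*}
\CH[w]=\div\!\left(\tfrac{\phi'}{\beta\sigma}\nu\right)=\tfrac{\phi'}{\beta\sigma}\div\nu+\nu\!\left(\tfrac{\phi'}{\beta\sigma}\right).
\end{equation*}
Here $\div\nu$ equals the mean curvature $H[\Gamma_s]=\theta^+[\Gamma_s]-P[\Gamma_s]$, which supplies the $\theta^+$ contribution and part of the coefficient of $P$. For the second term I would rewrite $\nu=\beta^{-1}\tdd{}{s}$ and differentiate, using
\begin{equation*}
\sigma\,\dd{\sigma}{s}=\frac{\phi'\phi''}{\beta^2}-\frac{\phi'^2}{\beta^3}\dd{\beta}{s}.
\end{equation*}
After substitution, the recurring combination $1-\phi'^2/(\beta^2\sigma^2)$ collapses to $1/\sigma^2$, and the remaining terms assemble into $\phi''/(\beta^2\sigma^3)-(\phi'/\beta^3\sigma^3)\tdd{\beta}{s}$.

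For the $\CP$ term I would use the formula $\bar g^{ij}=g^{ij}-\nabla^i w\nabla^j w/(1+|\nabla w|^2)$ together with the fact that $\bar K$ is the $\pi$-pullback of $K$, which gives
\begin{equation*}
\CP[w]=\bar g^{ij}K_{ij}=\tr K-\sigma^{-2}K(\nabla w,\nabla w).
\end{equation*}
Decomposing $\tr K=K(\nu,\nu)+P[\Gamma_s]$ and inserting $K(\nabla w,\nabla w)=(\phi'^2/\beta^2)K(\nu,\nu)$, the same identity $1-\phi'^2/(\beta^2\sigma^2)=1/\sigma^2$ yields $\CP[w]=P+\sigma^{-2}K(\nu,\nu)$. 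Subtracting the two expressions and grouping the $P$ contributions into $(1+\phi'/(\beta\sigma))P$ produces the claimed formula.

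The only real obstacle is the bookkeeping in the $s$-derivative of $\phi'/(\beta\sigma)$, where the contribution from $\sigma_s$ must be tracked carefully; once that is handled, the algebraic identity $1-\phi'^2/(\beta^2\sigma^2)=1/\sigma^2$ causes everything to collapse to the stated expression. Nothing in the argument uses the eigenfunction $\phi$ or the instability of $\Gamma$; this lemma is a purely algebraic rewriting of Jang's operator for any function that is constant on the leaves of the flow.
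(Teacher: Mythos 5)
Your computation is correct and is exactly the calculation the paper leaves implicit (the lemma is stated without proof inside the proof of theorem~\ref{thm:schoen-stable}): splitting $\CJ[w]=\CH[w]-\CP[w]$, using $\nabla w=\beta^{-1}\phi'\nu$ so that $1+|\nabla w|^2=\sigma^2$, the identity $1-\phi'^2/(\beta^2\sigma^2)=\sigma^{-2}$, and $H=\theta^+-P$ reproduces \eqref{eq:11} term by term. No gaps; your closing remark that the lemma holds for any function constant on the leaves, independently of the stability discussion, is also accurate.
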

To construct $w$ we will proceed in three steps, which amount to
constructing $w$ on the annuli $A_1:=A(-\delta,0)$, $A_2:=A(0,\eps)$,
and $A_3:=A(\eps,2\eps)$, where $\delta$ and $\eps$ will be fixed
during the construction.

We start with the construction of $\phi$ in $A_2 = A(0,\eps)$, which
will fix $\eps$, but not quite $\phi$. In this region all we know is
that $\theta^+[\Gamma_s]\leq 0$, so we we make the assumption
$\phi'\leq -\mu < 0$, where we will fix $\mu$ in the course of the
argument. This renders the first term in \eqref{eq:11} to be
non-negative. We can thus estimate that
\begin{equation}
  \label{eq:13}
  \CJ[w] \geq -\frac{c_1}{\mu^2} + c_2 \frac{\phi''}{|\phi'|^3},
\end{equation}
for constants $c_1,c_2>0$ depending only on $\Lambda$, provided we
choose $\mu\geq \Lambda$. To see this, note that $\sigma$ is
comparable to $|\phi'|$ provided the latter is bounded away from zero.
The fact that the term containing $P$ in \eqref{eq:11} is of the form
$c_1/\mu^2$ follows from the Taylor expansion of the square root. To
get that the right hand side of \eqref{eq:13} is positive we must
satisfy
\begin{equation}
  \label{eq:17}
  \frac{\phi_2''}{|\phi'|^3} \geq \frac{c_0}{\mu^2},
\end{equation}
where $c_0= \frac{c_1 + 1}{c_2} + 1$ is a positive constant depending only
on $\Lambda$. We will later use $c_0 > 1$ and $c_0 c_2 > 1$.

We make the following ansatz for $\phi$ in $[0,\eps]$:
\begin{equation}
  \label{eq:12}
  \phi_2(s) = a_2\left(1+\frac{s}{\eps}\right)^{2/3} + b_2
\end{equation}
for constants $a_2,b_2$ to be determined. We compute that
\begin{align}
  \label{eq:16}
  &
  \phi_2'(s) = \frac{2 a_2}{3\eps}\left(1+\frac{s}{\eps}\right)^{-1/3}
  \\
  \label{eq:21}
  &
  \phi_2''(s)
  =
  - \frac{2a_2}{9\eps^2} \left(1+\frac{s}{\eps}\right)^{-4/3}
  =
  - \frac{9\eps^2}{8a_2^3} \phi_2'(s)^4.
\end{align}
As we want to have $\phi_2'<0$, we must choose $a_2<0$ which renders
$\phi_2''(s)>0$. So in order to get $\phi'(s)\leq -\mu$ it is
sufficient to take
\begin{equation*}
  -\mu = \phi_2'(\eps) = \frac{a_2}{3\eps} 2^{2/3},
\end{equation*}
as $|\phi'|$ is increasing. This implies
\begin{equation}
  \label{eq:15}
  a_2^2 =  2^{-4/3}9\eps^2\mu^2.
\end{equation}
To satisfy \eqref{eq:17}, we require that
\begin{equation*}
  \frac{c_0}{\mu^2} \leq \frac{\phi''(\eps)}{|\phi'(\eps)|^3} =
  \frac{9\eps^2}{8a_2^3} \phi'(\eps) = \frac{3\eps}{a_2^2}
  2^{-7/3}.
\end{equation*}
This is equivalent to
\begin{equation}
  \label{eq:18}
  a_2^2 \leq \frac{3\eps \mu^2}{c_0} 2^{-7/3}.
\end{equation}
Combining with \eqref{eq:15} we find the condition
\begin{equation}
  \label{eq:19}
  9\eps^2\mu^2 2^{-4/3} \leq \frac{3\eps \mu^2}{c_0} 2^{-7/3}
\end{equation}
or
\begin{equation*}
  \eps \leq \frac{1}{6c_0}.
\end{equation*}
Thus we choose $\eps = \frac{1}{6c_0}$.  Note that since $c_0>1$,
$\eps<\frac{1}{6} <\frac{1}{2}$. Modulo fixing $\mu$ and the vertical
shift, we are done with $\phi$ on $(0,\eps)$. Note that $\eps$ does
not depend on $\mu$ which is important in view of the fact that we
will later choose $\mu$ as a function of $\eps$. Note further that
$\CJ[w]\geq \frac{1}{\mu^2}$ on $A_2$ by construction.

For $A_3 : = A(\eps,2\eps)$ we will make the ansatz $w = \phi_3(s)$,
with $s\in[\eps,2\eps)$. As we are in the region $s>\eps$, where
$\eps$ has been fixed by the construction in $A_2$, we have $\theta^+
\leq -\Lambda^{-1}\eps$ and thus the first term in \eqref{eq:11} is
estimated by $\kappa:=\frac{\eps}{\sqrt{2}\Lambda}>0$ from below. We
can estimate the whole expression as follows:
\begin{equation}
  \label{eq:29}
  \CJ[w] \geq \kappa - \frac{c_1}{\mu^2} - c_2 \frac{|\phi_3''(s)|}{|\phi_3'(s)|^3}
\end{equation}
where we again assumed $|\phi'(s)|\geq \mu\geq \Lambda$, and $c_1$
and $c_2$ are constants depending only on $\Lambda$. We can ensure
that the second term is small, that is
\begin{equation*}
  \frac{c_1}{\mu^2} \leq \frac{\kappa}{4}
\end{equation*}
provided
\begin{equation}
  \label{eq:20}
  \mu^2 \geq \frac{4c_1}{\kappa}.
\end{equation}
It remains to find a function,
which allows us to choose $\mu$ large while keeping the term
\begin{equation}
  \label{eq:22}
  c_2 \frac{|\phi_3''(s)|}{|\phi_3'(s)|^3} < \frac{\kappa}{4}.
\end{equation}
We make the ansatz
\begin{equation}
  \label{eq:23}
  \phi_3(s) = a_3 \log\left(1 - \frac{s-\eps}{\eps}\right) + b_3
\end{equation}
and compute
\begin{align*}
  \phi_3'(s) = -\frac{a_3}{\eps}\left(1 -
    \frac{s-\eps}{\eps}\right)^{-1},
  \\
  \phi_3''(s) = -\frac{a_3}{\eps^2}\left(1 -
    \frac{s-\eps}{\eps}\right)^{-2}. 
\end{align*}
As we need $\phi_3'(\eps)=-\mu$, to be able to fit $\phi_3$ to
$\phi_2$, we compute $-\mu = \phi_3'(\eps) = -\frac{a_3}{\eps}$ or
$a_3 = \eps\mu > 0$. Hence $\phi_3''(s)<0$ and $\phi_3'(s) \leq\mu$
for $s\in(\eps,2\eps)$, as desired. We still have to fix $\mu$. The
goal is to simultaneously satisfy \eqref{eq:20} and \eqref{eq:22}.
Compute
\begin{equation*}
  \frac{|\phi_3''(s)|}{|\phi_3'(s)|^3}
  =
  \frac{1}{\mu^2\eps}\left(1 - \frac{s-\eps}{\eps}\right)
  \leq
  \frac{1}{\mu^2\eps}.
\end{equation*}
Thus we can ensure \eqref{eq:22} provided $\mu^2 \geq
\frac{4c_2}{\eps\kappa}$.  We choose
\begin{equation*}
  \mu = \max \left\{ \sqrt{\frac{4c_1}{\kappa}},
    \sqrt{\frac{4c_2}{\eps\kappa}}, \Lambda \right\} 
\end{equation*}
and are done constructing $\phi_3$ up to fixing $b_3$ in such a way to
ensure $\phi_2(\eps) = \phi_3(\eps)$. Note that we have that
$\phi_3(s) \to -\infty$ as $s\to 2\eps$, which is the desired
behavior. Furthermore we have $\CJ[w]\geq \frac{\kappa}{2}>0$ in
$A_3$.

In the region $A_1 = A(-\delta,0)$, where $0<\delta<1$ will be chosen
later, we set $w(s) = \phi_1(s)$. Then we estimate from \eqref{eq:11}
that
\begin{equation}
  \label{eq:26}
  J[w] \geq -c_3 + c_4\frac{\phi_1''(s)}{|\phi_1'(s)|^3},  
\end{equation}
where $c_3$, and $c_4>0$ are again constants depending only on
$\Lambda$.  Here we assumed that $|\phi_1'(s)|\geq \Lambda$ as before.
The only chance to get the right hand side of this expression positive
is to take $\phi_1(s)$ to be a function with
\begin{equation*}
  \frac{\phi_1''(s)}{|\phi_1'(s)|^3} \geq \frac{c_3 + 1}{c_4} := c_5.
\end{equation*}
We make the ansatz
\begin{equation*}
  \phi_1(s) = a_1 \left(1 + \frac{s}{2\delta} \right)^{1/2} + b_1,
\end{equation*}
and compute
\begin{align*}
  \phi_1'(s)
  &=
  \frac{a_1}{4\delta} \left(1 + \frac{s}{2\delta}\right)^{-1/2}
  \\
  \phi_1''(s)
  &=
  -\frac{a_1}{16\delta^2} \left(1 + \frac{s}{2\delta}\right)^{-3/2}  
\end{align*}
We fix $b_1$ such that $\phi_1(-\delta) = 0$. This then fixes $b_2$
and $b_3$ by the requirement that $w$ is continuous on
$A(-\delta,2\eps)$. From the requirement $\phi_1'(0)=\phi_2'(0) =: -
\mu'$, we infer that
\begin{equation}
  \label{eq:27}
  a_1 = -4\mu'\delta.
\end{equation}
Recall that $-\mu'$ is fixed and can not be chosen freely. From
$\phi_1''(s)>0$ we find that $|\phi_1'(s)|\geq |\phi_1'(0)|=\mu' =
2^{1/3}\mu \geq\mu=|\phi_2(\eps)|\geq \Lambda$, so $\phi_1'$ is
automatically large enough to justify \eqref{eq:26}. To get positivity
of the right hand side of \eqref{eq:26} we need that
\begin{equation*}
  c_5 \leq \frac{\phi''(s)}{|\phi'(s)|^3} = \frac{4\delta}{a_1^2}.
\end{equation*}
Solving for $a_1^2$ yields the condition
\begin{equation}
  \label{eq:45}
  a_1^2 \leq \frac{4\delta}{c_5}.
\end{equation}
As we already fixed $a_1$ in \eqref{eq:27}, we infer the condition
\begin{equation*}
  \delta \leq \frac{1}{4c_5\mu'^2}.
\end{equation*}
So we fix $\delta = \frac{1}{4c_5\mu'^2}$ and are done. Note that
$\CJ[w]\geq 1$ by construction.

To summarize, we have constructed a function $w$ on $A(-\delta,2\eps)$
with the following properties:

\begin{mynum}
\item $w$ is $C^{1,1}$ up to the boundary in every $A(-\delta,s)$ with
  $s\in (-\delta,2\eps)$. Hence $w\in W^{2,\infty}\cap C^{1,1}$ away from
  $\Gamma_{2\eps}$,
\item there exists $\eta>0$ such that $J[w] \geq \eta$,
\item $w \equiv 0$ on $\Gamma_{-\delta}$, $w \leq 0$ on $A(-\delta,2\eps)$,
\item there exists $C_1 <\infty$ such that $0 \geq w \geq -C_1$ in
  $A(-\delta,\eps)$, and
\item $w|_{\Gamma_s} \to -\infty$ as $s\to 2\eps$.
\end{mynum}
Here $\eta$ and $C_1$ are constants that only depend on $\Lambda$, as
do $\delta$ and $\eps$.

With this subsolution $w$, we can get a lower bound of the functions
$f_\tau$ solving $\CJ[f_\tau] = \tau f_\tau$ near $\Gamma$ as
follows. Set
\begin{equation*}
  m:= \min \left\{ \inf_{\Gamma_{-\delta}} f_\tau , \frac{\eta}{\tau} \right\},
\end{equation*}
and consider the function
\begin{equation}
  \label{eq:55}
  w_m := w + m.
\end{equation}
The goal is to apply the comparison principle for the quasilinear
operator $\CJ$ to show that $w_m \leq f_\tau$ in $A(-\delta,2\eps)$.
To this end let $U$ be the region where $f_\tau \leq m$. From the
equation we conclude that
\begin{equation*}
  \CJ[f_\tau] =\tau f_\tau \leq \tau m \leq \eta
\end{equation*}
in $U$, and moreover $f_\tau = m$ on $\del U$. As $f_\tau \geq
-\frac{C}{\tau}$ is bounded below as in proposition~\ref{p:sup-est},
we can choose $\bar s\in(\eps,2\eps)$ such that $w_m|_{\Gamma_{\bar s}} \leq
\inf_M f_\tau$.

Set $V := U \cap A(-\delta,\bar s)$. Then, as $\del V \subset \del U
\cup \Gamma_{-\delta} \cup \Gamma_{\bar s}$, we find that $w_m \leq
f_\tau$ on $\del V$. An application of the comparison principle
\cite[Chapter 10]{Gilbarg-Trudinger:1998} allows
us to conclude that $w_m \leq f_\tau$ in $V$ and thus
\begin{equation*}
  w_m \leq f_\tau \quad \text{in}\quad A(-\delta, 2\eps).
\end{equation*}
By construction, there is a constant $C_1$ such that $w + C_1 \geq 0$
in $A(-\delta,\epsilon)$ and hence
\begin{equation*}
  m - C_1 \leq w_m \quad\text{in}\quad A(-\delta,\eps).
\end{equation*}
Thus we infer the estimate
\begin{equation}
  \label{eq:28}
  f_\tau \geq \min \left\{ \inf_{\Gamma_{-\delta}} f_\tau , \frac{\eta}{\tau} \right\}
  - C_1\quad\text{in}\quad A(-\delta,\eps).
\end{equation}
We can now conclude the argument. Take the sequence $\tau_i$ and the
functions $f_{\tau_i}$ from proposition~\ref{p:limit}. By construction
$f_{\tau_i}$ is uniformly bounded below on $\Gamma_{-\delta}$ as
$\Gamma_{-\delta}$ is compactly contained in $\Omega^+\cup\Omega^0$,
hence as $\tau_i\to 0$ the term on the right hand side of
\eqref{eq:28} is bounded below as $\tau_i\to 0$. Thus
$A(-\delta,\eps)\subset \Omega^+\cup\Omega^0$, which is a
contradiction, since we assumed that $\Gamma\subset A(-\delta,\eps)$
was a boundary component of $\del\Omega^-$.

This concludes the proof of theorem~\ref{thm:schoen-stable}.
\end{proof}


%
\section{Weak barriers}
\label{sec:constr-marg-trapp}
In this section we will slightly improve theorem~\ref{thm:schoen}
to allow interior boundaries where we just have the weak inequality
$\theta^+[\del^-M]\leq 0$, instead of the strict inequality assumed in
theorem~\ref{thm:schoen}.
\begin{theorem}
  \label{thm:schoen-weak}
  Let $(M,g,K)$ be a smooth, compact initial data set with $\del M$ the
  disjoint union $\del M = \del^-M \cup \del^+ M$ such that $\del^\pm
  M$ are non-empty, smooth, compact surfaces without boundary and
  $\theta^+[\del^-M]\leq 0$ with respect to the normal pointing into
  $M$ and $\theta^+[\del^+M]>0$ with respect to the normal pointing
  out of $M$.

  Then there exists a smooth, embedded, stable MOTS $\Sigma\subset M$
  homologous to $\del^+M$. $\Sigma$ may have components which agree
  with components of $\del^-M$ that satisfy $\theta^+ = 0$.
\end{theorem}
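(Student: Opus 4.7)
The plan is to reduce to Theorem~\ref{thm:schoen}. I would perturb $K$ so that the inner boundary becomes a strict barrier, apply Theorem~\ref{thm:schoen} together with Theorem~\ref{thm:schoen-stable} to produce a stable MOTS $\Sigma_\eps$ for each perturbation, and pass to a subsequential limit using the curvature estimate of Theorem~\ref{thm:curv-est}. Concretely, for small $\eps > 0$ set $K_\eps := K - \eps g$. Along any two-sided surface $\Sigma'$ the trace of $\eps g$ tangent to $\Sigma'$ equals $2\eps$, so $P_\eps = P - 2\eps$ while $H$ is unchanged, giving $\theta^+_\eps[\Sigma'] = \theta^+[\Sigma'] - 2\eps$. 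In particular $\theta^+_\eps[\del^-M] \leq -2\eps < 0$ strictly, and by continuity $\theta^+_\eps[\del^+M] > 0$ for all sufficiently small $\eps$. Applying Theorem~\ref{thm:schoen} and Theorem~\ref{thm:schoen-stable} to $(M,g,K_\eps)$ then yields a smooth embedded stable MOTS $\Sigma_\eps$ in the perturbed data, homologous to $\del^+M$.

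Next I would pass to the limit $\eps \to 0$ using uniform estimates. The bounds $\|K_\eps\|_{C^1(M)} \leq \|K\|_{C^1(M)} + C\eps$ and $\|\RiemM\|_{C^0(M)}$ are uniform in $\eps$, and $\inj_\rho(M,g,K_\eps;\del M)$ is bounded below by the injectivity radius of $(M,g)$ (viewed inside a slightly larger ambient Riemannian manifold) independently of $\eps$. Theorem~\ref{thm:curv-est} therefore provides a uniform estimate $\|A_{\Sigma_\eps}\|_\infty \leq C$. Combined with a uniform area bound---which follows from the curvature bound and the containment of each $\Sigma_\eps$ in the fixed compact $M$ via a standard monotonicity/covering argument---a subsequence $\Sigma_{\eps_i}$ converges in $C^{1,\alpha}$ (and weakly in $C^2$) to a smooth embedded limit $\Sigma \subset M$, which is still homologous to $\del^+M$. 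Since $\theta^+_{\eps_i}[\Sigma_{\eps_i}] = 0$ and $\theta^+_{\eps_i} \to \theta^+$ in $C^0$, the limit satisfies $\theta^+[\Sigma] = 0$, so $\Sigma$ is a MOTS in $(M,g,K)$. Stability of $\Sigma$ follows from continuity of the principal eigenvalue of the MOTS stability operator under $C^{1,\alpha}$ convergence of the surface and $C^1$ convergence of $(g,K)$: the inequality $\lambda_{\eps_i} \geq 0$ passes to the limit.

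The main obstacle I anticipate is the behavior of $\Sigma_\eps$ near $\del^-M$. As $\eps \to 0$ the perturbed barrier distance $\rho_\eps$ may degenerate on any component $N \subset \del^-M$ on which the unperturbed $\theta^+$ vanishes, and the $\Sigma_\eps$ may accumulate on such an $N$. The strong maximum principle (Proposition~\ref{prop:max_princ}) applied to the unperturbed data then forces any component of the limit $\Sigma$ touching such an $N$ to coincide with $N$, which is precisely the outcome allowed by the theorem statement. Additional care is needed in the compactness step to handle possible multiplicity in the limit and to verify that the uniform area bound really does persist as $\eps \to 0$.
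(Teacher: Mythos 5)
Your reduction to Theorem~\ref{thm:schoen} via $K_\eps = K - \eps g$ is fine as far as producing, for each small $\eps>0$, a smooth embedded stable MOTS $\Sigma_\eps$ for the perturbed data (indeed $\theta^+_\eps = \theta^+ - 2\eps$), and the remedy of measuring injectivity radius in a slight extension of $(M,g)$ to keep the constant in Theorem~\ref{thm:curv-est} uniform as the barrier distance $\rho_\eps$ degenerates is in the spirit of what the paper does elsewhere. The genuine gap is the compactness step: you assert a uniform area bound for $\Sigma_\eps$ ``from the curvature bound and the containment of each $\Sigma_\eps$ in the fixed compact $M$ via a standard monotonicity/covering argument.'' No such implication holds. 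A bound on $|A|$ controls each local sheet but not the number of sheets passing through a fixed ball, and the introduction of this very paper recalls the Colding--Minicozzi type examples of compact $3$-manifolds containing sequences of \emph{stable} minimal surfaces (hence MOTS for $K=0$) with uniformly bounded curvature but unbounded area. The compactness theorem for stable MOTS that you implicitly invoke requires \emph{both} a curvature bound and an area bound, and obtaining an a priori area bound is precisely the hard content of Theorem~\ref{thm:area-bound}, which is proved later via the surgery construction and is not available here. Without it, your sequence $\Sigma_{\eps_i}$ need not subconverge at all, and even granting convergence you would still have to rule out multiplicity and self-touching of the limit to get an \emph{embedded} MOTS; you flag these but do not resolve them, and they are not minor.

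For comparison, the paper avoids any limiting argument: it deforms the inner boundary components themselves. Components with $\theta^+\leq 0$, $\theta^+\not\equiv 0$ are pushed strictly inside $M$ by the flow of Lemma~\ref{lemma:flow}; components with $\theta^+\equiv 0$ are treated according to the sign of the principal eigenvalue of $L_M$ --- unstable ones are pushed inward along the principal eigenfunction to become strictly trapped, while (strictly or marginally) stable ones are pushed \emph{outward} into an extension of the data, with a $C^{1,1}$ modification of $K$ in the marginally stable case, so that the swept-out region is foliated by surfaces with $\theta^+<0$. Theorem~\ref{thm:schoen} is then applied once to the modified data, and the strong maximum principle against that foliation keeps the resulting MOTS out of the modified region, so it lies in $M$; stability comes from Theorem~\ref{thm:schoen-stable}. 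This yields the conclusion (including the possibility that components of $\Sigma$ coincide with components of $\del^-M$ with $\theta^+=0$) without ever needing an area bound or a compactness theorem. If you want to salvage your approach, you must either supply a uniform area bound for the specific blowup surfaces produced by Jang's equation or replace the limiting step by a barrier/maximum-principle argument of the above kind.
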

In this case we can not use the strong maximum principle to exclude
that $\Sigma$ touches $\del^-M$ as in
lemma~\ref{thm:boundary_as_barrier}.  For the proof of
theorem~\ref{thm:schoen-weak} we shall need the following lemma.
\begin{lemma}
  \label{lemma:flow}
  Let $\Sigma$ be a connected, two-sided, compact, embedded surface
  with $\theta^+ \leq 0$ and $\theta^+ \not\equiv 0$.  Then for every
  $\eps>0$ there exists a smooth, embedded surface $\Sigma'$ in the
  $\eps$-neighborhood of $\Sigma$, which lies to the outside of
  $\Sigma$ but does not touch $\Sigma$, is a graph over $\Sigma$, and
  satisfies $\theta^+ <0$.
\end{lemma}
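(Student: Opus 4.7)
The plan is to construct $\Sigma'$ as a normal graph $\Sigma_{sf} := \{\exp_p(sf(p)\nu(p)) : p\in\Sigma\}$ over $\Sigma$, with $f>0$ a smooth positive function on $\Sigma$ and $s>0$ a small parameter. Any such $\Sigma_{sf}$ automatically lies strictly outside $\Sigma$ as a graph in the $\eps$-neighborhood once $s$ is small enough; the substantial requirement is to arrange $\theta^+[\Sigma_{sf}]<0$ pointwise.

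First I would write down the first variation of $\theta^+$ along this one-parameter family,
\begin{equation*}
  \theta^+[\Sigma_{sf}] = \theta^+[\Sigma] + s\,\CL f + O(s^2),
\end{equation*}
where $\CL$ is the linearization of $\theta^+$ at $\Sigma$ viewed as an operator on functions on $\Sigma$. A direct computation, generalizing the formula for $L_M$ from section~\ref{sec:preliminaries} to a surface which need not be a MOTS, shows that $\CL$ is a second-order linear elliptic operator with leading symbol $-\lapSig$ and smooth coefficients determined by the geometry of $(M,g,K)$ along $\Sigma$. Since $\theta^+\leq 0$ by hypothesis, if one can find $f>0$ with $\CL f$ strictly negative on $\Sigma$, then for all sufficiently small $s>0$, $\theta^+[\Sigma_{sf}] \leq s\,\CL f + O(s^2) < 0$ on the whole surface, and $\Sigma_{sf}$ is the desired $\Sigma'$.

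The heart of the argument is producing such an $f$. I would invoke the Krein--Rutman theorem for the elliptic operator $\CL$ on the compact surface $\Sigma$, obtaining a real principal eigenvalue $\lambda$ with positive eigenfunction $\phi>0$ satisfying $\CL\phi=\lambda\phi$. If $\lambda<0$ the choice $f=\phi$ works immediately. The main obstacle is the complementary case $\lambda\geq 0$, where no positive $f$ with $\CL f<0$ exists; this case should be excluded by the hypothesis $\theta^+\not\equiv 0$, through pairing against the positive adjoint eigenfunction $\phi^*$ of $\CL^*$ and using the explicit dependence of the zeroth-order term of $\CL$ on $\theta^+$ to produce an integral identity that forces $\lambda<0$ whenever $\theta^+\leq 0$ is not identically zero on the connected surface $\Sigma$. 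An alternative that bypasses this eigenvalue bookkeeping is to run the parabolic normal flow $\partial_t F = (-\theta^+[\Sigma_t]+\eta)\nu_t$ for a small constant $\eta>0$: it moves $\Sigma$ strictly outward, $\theta^+$ evolves along it by a uniformly parabolic equation with leading term $\lapSig \theta^+$, and the strong parabolic maximum principle together with $\theta^+[\Sigma_0]\leq 0$, $\theta^+\not\equiv 0$ forces $\theta^+[\Sigma_t]<0$ strictly for all $t>0$; choosing $t$ sufficiently small so that $\Sigma_t$ stays inside the $\eps$-neighborhood of $\Sigma$ then gives $\Sigma'$.
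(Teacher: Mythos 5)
Your main route contains a genuine gap: the key claim that $\theta^+\leq 0$, $\theta^+\not\equiv 0$ forces the principal eigenvalue $\lambda$ of the linearization $\CL$ to be negative is false, and no pairing with the adjoint eigenfunction can rescue it. Concretely, take a strictly stable MOTS $\Sigma_0$ with principal eigenvalue $\lambda_0>0$ and eigenfunction $\phi>0$, and deform it inward with lapse $-\eps\phi$: the resulting surface has $\theta^+=-\eps\lambda_0\phi+O(\eps^2)<0$, so it satisfies your hypotheses, while the coefficients of its linearization differ from those of $L_M$ on $\Sigma_0$ only by $O(\eps)$, so its principal eigenvalue is still positive. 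The reason there is no integral identity of the kind you invoke is that the zeroth-order coefficient of $\CL$ is not governed by $\theta^+$: besides the terms $-\tfrac12(\theta^+)^2+\theta^+\tr K$ it contains $\divSig S-\tfrac{1}{2}|\chi^+|^2-|S|^2+\tfrac{1}{2}\ScalSig-\mu+J(\nu)$, which has no sign. Thus in the case $\lambda\geq 0$ your construction produces no admissible $f$, even though the lemma is still true in that case; the hypothesis $\theta^+\not\equiv 0$ enters through a maximum principle, not through the sign of any eigenvalue.

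Your sketched alternative is essentially the paper's proof, which flows with speed exactly $-\theta^+$, but your modification of adding the constant $\eta$ breaks the decisive step. Under a normal variation with lapse $\beta$ one has $\partial_t\theta^+=\CL_t\beta$, so for $\beta=-\theta^+ +\eta$ the evolution reads $\partial_t\theta^+=\lapSig\theta^+-2S(\nabSig\theta^+)-Q\,\theta^+ +\eta\,Q$, where $Q$ is the zeroth-order coefficient above. The inhomogeneous term $\eta Q$ has no sign, so nonpositivity of $\theta^+$ need not be preserved (if $\theta^+\equiv 0$ and $Q>0$ on an open set, $\theta^+$ becomes positive there immediately), and the strong maximum principle no longer gives $\theta^+[\Sigma_t]<0$. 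The $\eta$ is also unnecessary: with speed $-\theta^+$ the strong maximum principle applied to $u=e^{-as}\theta^+$, with $a$ chosen larger than $\sup|Q|$ to absorb the sign-indefinite zeroth-order term (a point your sketch also passes over), shows that $\theta^+$ becomes strictly negative instantly; from then on the flow moves strictly outward, so for small $s>0$ the surface $\Sigma_s$ is an embedded graph strictly outside $\Sigma$ and within the $\eps$-neighborhood. With those two corrections — drop the $\eta$ and use the exponential rescaling — your alternative coincides with the argument in the paper.
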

\begin{proof}
  Consider the following equation for a function
  $F:\Sigma\times[0,\bar s)\to M$
  \begin{equation}
    \label{eq:5}
    \begin{cases}
      \DD{F}{s} = - \theta^+ \nu \\
      F(\cdot,0) = \id_{\Sigma}.     
    \end{cases}
  \end{equation}
  Here, $\nu$ is the outer normal as usual. This is a weakly parabolic
  equation for $F$, in fact it is a generalization of the mean
  curvature flow. To see this, recall that $\theta^+ = H + P$, where
  $H$ is the mean curvature, and $P = \trM K - K(\nu,\nu)$ is a term
  only depending on first derivatives of $F$. Thus the flow in
  equation \eqref{eq:5} is
  \begin{equation*}
    \DD{F}{s} = - H \nu - \text{lower order}
  \end{equation*}
  Hence it has the same symbol as the mean curvature flow and thus is
  a quasilinear parabolic equation.

  The theory of parabolic equations guarantees the existence of a
  solution for a small time interval $[0,\bar s)$, see for example
  \cite[Section 7]{huisken-polden:1999}.  Furthermore, any surface
  $\Sigma_s = F(\Sigma, s)$ for $s\in(0,\bar s)$ is smooth. From a
  standard argument using the strong maximum principle we conclude
  that $\theta^+ <0$ instantly. To see this, recall that the evolution
  equation for $\theta^+$ has the form
  \begin{equation*}
    \dd{\theta^+}{s}
    =
    - L_s \theta^+
    =
    \Delta \theta^+ - 2 S(\nabla \theta^+) - \theta^+ Q,
  \end{equation*}
  where $L_s$ is the linearization of $\theta^+$ along $\Sigma_s$,
  with
  \begin{equation*}
    Q = \divSig S -\tfrac{1}{2}|\chi^+|^2 - |S|^2 +
    \tfrac{1}{2}\ScalSig - \mu + J(\nu) - \half (\theta^+)^2 +
    \theta^+ \tr K,
  \end{equation*}
  where all geometric quantities are computed on $\Sigma_s$. Note that
  $L_s$ equals $L_M$ on MOTS. By smoothness we have that $Q$ is
  bounded for a short time, whence we can choose
  \begin{equation*}
    a> \max_{s\in [0,\bar s/2], x\in \Sigma_s} |Q(x,s)|.
  \end{equation*}
  Let $u=e^{-as} \theta^+$ and compute
  \begin{equation*}
    \left(\tfrac{\del}{\del s} - \Delta\right) u = -2S(\nabla u) - (Q+a) u.
  \end{equation*}
  The coefficient of the zeroth order term is negative. Hence the strong
  maximum principle from~\cite{Lieberman:1996} is applicable to $u$
  and implies that $u$ instantly becomes negative, implying that
  $\theta^+$ instantly becomes negative.

  If $s$ is small enough, $\Sigma_s$ will also be embedded. As
  $\theta^+\leq 0$, the flow \eqref{eq:5} moves the surface in the
  direction of $\nu$ everywhere, and hence outward, in particular
  $\Sigma_s\cap\Sigma =\emptyset$. As the initial speed is given by
  $|\theta^+|$, which is bounded, the surfaces $\Sigma_s$ will be
  arbitrarily close to $\Sigma$, as long as $s>0$ is small enough.
  Hence we can choose $\Sigma'$ to be one of the $\Sigma_s$.
\end{proof}
\begin{proof}[Proof of theorem~\ref{thm:schoen-weak}]
  The main difficulty here is that $\del^-M$ may have multiple
  connected components $\del^-M = \Gamma_1 \cup \ldots \cup\Gamma_N$
  where some of the $\Gamma_k$ satisfy $\theta^+ = 0$, to which we can
  not apply lemma~\ref{lemma:flow} directly.
  
  Lemma~\ref{lemma:flow} allows us to flow the boundary components
  $\Gamma_k$ with $\theta^+\leq 0$ and $\theta^+\not\equiv 0$ in
  direction of their outer normal $\nu$, that is into $M$, to replace
  $M$ by a manifold $M_1$ which is such that $\del^-M_1$ is still
  embedded and each component of $\del^-M_1$ either has $\theta^+ <0$
  or $\theta^+=0$. As the boundary components with $\theta^+=0$ do not
  allow the application of theorem~\ref{thm:schoen}, we have to tweak
  them a little.

  Pick one such component $\Gamma$ of $\del^-M$ with
  $\theta^+[\Gamma]=0$, then there are three cases.  Either, as a
  MOTS, $\Gamma$ is not stable, $\Gamma$ is stable, but not strictly
  stable, or $\Gamma$ is strictly stable.

  When $\Gamma$ is not stable, let $\phi>0$ be an eigenfunction for
  the principal eigenvalue $\lambda <0$ for the operator $L_M$ on
  $\Gamma$. Extend the vector field $\phi\nu$ to a neighborhood of
  $\Gamma$ and flow $\Gamma$ for a short time interval along this
  vector field. This yields a foliation $\{\Gamma_s\}_{s\in[0,\eps)}$
  of a neighborhood of $\Gamma$, such that $\Gamma_0 = \Gamma$ and
  $\Gamma_s$ lies inside of $M$ and has $\theta^+<0$ when $s>0$.
  Hence, we push $\Gamma$ a little inward and obtain a strictly
  trapped surface.

  In the other two cases we need to flow the components with respect
  to the vector field $-\phi\nu$, where $\phi>0$ is again the
  principal eigenfunction of $L_M$ on $\Gamma$. So we have to assume
  that there is an extension $(M',g',K')$ of $(M,g,K)$ with $M\subset
  M'$, $g=g'|_M$ and $K=K'|_M$ such that $\del^-M$ lies in the
  interior of $M'$. Such an extension can be constructed by simply
  gluing $[0,1] \times \del^-M$ to $M$ along $\del^-M$ and smoothly
  extending $g$ and $K$ to the added piece. Keeping this in mind, we
  can now move the other boundary components $\Gamma$ inwards in the
  following way.

  If $\Gamma$ is strictly stable, then by flowing in direction
  $-\phi\nu$, we construct a foliation $\{\Gamma_s\}_{s\in(-\eps,0]}$
  of a neighborhood of $\Gamma$, such that $\Gamma_0 = \Gamma$ and
  $\Gamma_s$ lies in direction $-\nu$, that is outside of $M$ and has
  $\theta^+<0$ if $s<0$. We choose one of the $\Gamma_s$ as new inner
  boundary. We will later use the fact that the region between the
  former boundary $\Gamma$ and the new boundary $\Gamma_s$ is foliated
  by surfaces with $\theta^+<0$ to ensure that the constructed MOTS
  does not enter this region.

  The last case is where $\Gamma$ is stable but not strictly stable.
  In this case we also flow $\Gamma$ in direction $-\phi\nu$ and
  construct a foliation $\{\Gamma_s\}_{s\in(-\eps,0]}$ of a
  neighborhood of $\Gamma$, such that $\Gamma_0 = \Gamma$ and
  $\Gamma_s$ lies outside of $M$ and
  \begin{equation}
    \label{eq:7}
    \ddeval{}{s}{s=0}\theta^+[\Gamma_s] = 0.
  \end{equation}
  We will change the data $K'$ along the surfaces $\Gamma_s$ by replacing
  $K'$ by
  \begin{equation*}
    \tilde K = K' - \half \phi(s) h_s,
  \end{equation*}
  where $h$ is the metric on $\Gamma_s$ and $\phi:\IR\to\IR$ is a $C^1$
  function with $\phi(s) = 0$ for $s>0$. Note that
  $\tilde\theta^+[\Gamma_s]$, which means the quantity $\theta^+$
  computed with respect to the new data $(M',g',\tilde K)$, satisfies
  \begin{equation*}
    \tilde\theta^+[\Gamma_s] = \theta^+[\Gamma_s] -\phi(s).
  \end{equation*}
  As $\theta^+[\Gamma_s]$ vanishes to first order in $s$ at $s=0$ by
  \eqref{eq:7}, we can extend $\phi$ as a $C^{1,1}$ function to
  $\tilde M$ such that $\theta^+<0$ on \emph{all} $\Gamma_s$, if $s<0$
  is close enough to zero. Hence, this case is similar to the strictly
  stable case. It is clear that we can choose $\Gamma_s$ in such a way
  that $\|\tilde K\|_{C^1(\tilde M)} \leq 2 \|K\|_{C^1(M)}$.
  
  In summary, by this construction we have replaced $(M,g,K)$ by a
  manifold $(\tilde M,\tilde g, \tilde K)$ which are both embedded in
  a data set $(M',g',K')$. The outer boundaries of $M$ and $\tilde M$
  agree and have $\theta^+>0$, while the inner boundary of $\tilde M$
  has $\theta^+[\del^- \tilde M]<0$. The data $\tilde K$ is $C^{1,1}$
  in $\tilde M$.
  
  The set $U := M\setminus\tilde M \subset M'$, corresponding to the
  boundary components we moved out of $M$, is foliated by surfaces
  $\Sigma_s$ with $\theta^+[\Sigma_s] <0$ with respect to the data
  $(\tilde g, \tilde K)$.

  We can now invoke theorem~\ref{thm:schoen} to find a smooth,
  embedded, stable MOTS $\Sigma$ in $\tilde M$, which bounds with
  respect to $\del^-\tilde M$. Note that it is only necessary to
  assume $K\in C^{1,\alpha}$ for some $0<\alpha\leq 1$ for the theorem
  to apply. If one of the components $\Sigma'$ of $\Sigma$ enters $U$,
  say the component $U'$ of $U$, then let $\bar s := \min \{ s :
  \Sigma_s\cap\Sigma_k \neq\emptyset\}$, where the $\Sigma_s$
  constitute the foliation of $U'$ by strictly trapped surfaces, as
  above. At the point, where the minimum is assumed, the outward
  normals of $\Sigma'$ and $\Sigma_{\bar s}$ point into the same
  direction, and hence the strong maximum principle implies that
  $\Sigma_k=\Sigma_{\bar s}$, a contradiction.  Thus $\Sigma\cap U =
  \emptyset$, and $\Sigma\subset M$ is the desired solution.  Note
  that some components of $\Sigma$ might agree with
  components of $\del^-M$ which have $\theta^+=0$.
  
  The assertion that $\Sigma$ is stable then follow from
  theorem~\ref{thm:schoen-stable}.
\end{proof}
As an immediate consequence of theorem~\ref{thm:schoen-weak}, we infer
the following corollary.
\begin{corollary}
  \label{cor:outermost}
  Let $(M,g,K)$ be such that $\del M$ is the disjoint union $\del M =
  \del^- M \cup\del^+ M$, where $\del^+ M$ is non-empty with
  $\theta^+[\del^+M]>0$ and $\del^-M$ is possibly empty.  If $\Sigma$
  is an outermost MOTS homologous to $\del^+M$, then there do not
  exist outer trapped surfaces enclosing $\Sigma$. In particular,
  $\Sigma$ is a stable MOTS.
\end{corollary}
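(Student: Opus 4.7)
The plan is to prove the two assertions in order, using the existence theorem and its stability companion essentially as black boxes. For the first claim I would argue by contradiction: suppose there is an outer trapped surface $\Sigma'$ enclosing $\Sigma$, that is, $\Sigma'$ bounds a region $\Omega' \subsetneq \Omega$ with respect to $\del^+M$, where $\Omega$ is the region bounded by $\Sigma$. Then the compact manifold with boundary $\overline{\Omega'}$ has inner boundary $\Sigma'$ with $\theta^+[\Sigma'] < 0$ (with respect to the normal pointing into $\Omega'$) and outer boundary $\del^+M$ with $\theta^+[\del^+M] > 0$. Theorem~\ref{thm:schoen} therefore yields a smooth embedded MOTS $\tilde\Sigma \subset \overline{\Omega'}$ homologous to $\del^+M$. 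By lemma~\ref{thm:boundary_as_barrier}, $\tilde\Sigma$ is at positive distance from $\del\overline{\Omega'} = \Sigma' \cup \del^+M$, so $\tilde\Sigma$ bounds a region $\tilde\Omega$ with respect to $\del^+M$ which is strictly contained in $\Omega'$, hence strictly contained in $\Omega$. This contradicts the outermost property of $\Sigma$.

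For the stability statement I would argue, again by contradiction, that if $\Sigma$ were unstable then the principal eigenvalue $\lambda$ of the MOTS stability operator $L_M$ would be negative. Let $\phi > 0$ be the corresponding principal eigenfunction and consider a normal variation $F: \Sigma\times(-\eps,\eps)\to M$ with lapse $f = \phi$. The first variation formula for $\theta^+$ recalled in section~\ref{sec:preliminaries} gives
\begin{equation*}
  \ddeval{\theta^+[F(\Sigma,s)]}{s}{s=0} = L_M \phi = \lambda \phi < 0,
\end{equation*}
so for sufficiently small $s > 0$ the surface $\Sigma_s := F(\Sigma,s)$ lies strictly outside $\Sigma$ and satisfies $\theta^+[\Sigma_s] < 0$. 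Thus $\Sigma_s$ is an outer trapped surface enclosing $\Sigma$, contradicting the first part already established.

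The main obstacle, if any, is verifying that theorem~\ref{thm:schoen} is genuinely applicable on the subregion $\overline{\Omega'}$: this amounts to checking that the induced data $(\overline{\Omega'}, g, K)$ satisfies the hypotheses of that theorem, namely smoothness of $\Omega'$ up to the two boundary components, non-emptiness, and strict barrier inequalities on both sides. The last point is immediate from our choice of $\Sigma'$ and the standing assumption $\theta^+[\del^+M] > 0$; smoothness of $\Sigma'$ is built into the notion of an outer trapped surface. A small wrinkle is that, in principle, one could imagine $\tilde\Sigma$ having components coinciding with $\Sigma'$, but since $\theta^+[\Sigma'] < 0$ strictly and $\tilde\Sigma$ is a MOTS, no such coincidence occurs; the strong maximum principle, proposition~\ref{prop:max_princ}, also rules out tangential contact. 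Once this is checked, both conclusions follow directly from the theorems already established in sections~\ref{sec:theorem-schoen} and~\ref{sec:stab-constr-mots}.
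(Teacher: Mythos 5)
Your proof is correct and follows essentially the route the paper intends: the nonexistence claim comes from applying the blow-up existence theorem between the hypothetical enclosing trapped surface and $\del^+M$ to produce a MOTS bounding a strictly smaller region, contradicting outermostness, and stability then follows from the deformation remark in section~\ref{sec:preliminaries} (an unstable MOTS can be pushed outward to a surface with $\theta^+<0$). The only cosmetic difference is that you invoke theorem~\ref{thm:schoen} where the paper points to theorem~\ref{thm:schoen-weak}; this is harmless here since your inner barrier $\Sigma'$ satisfies the strict inequality $\theta^+<0$.
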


%
\section{Surgery}
\label{sec:surgery}
In this section we describe a surgery procedure to construct an outer
trapped surfaces outside of a MOTS $\Sigma$ with small $i^+(\Sigma)$
and bounded curvature. In view of the existence part in
theorem~\ref{thm:schoen-weak}, we infer a lower bound on $i^+(\Sigma)$
for outermost MOTS. This implies an area estimate.

Moreover, the surgery procedure guarantees that a fixed amount of the
volume outside of $\Sigma$ is consumed. By iterating the surgery
procedure and application of theorem~\ref{thm:schoen-weak}, we then infer
that after a finite number of steps we arrive at a MOTS $\Sigma'$
outside of $\Sigma$ with a lower bound on $i^+(\Sigma')$.

As usual, we assume that $\Sigma$ is homologous to $\del^+ M$ and denote the
region bounded by $\Sigma$ and $\del^+M$, that is the outside of $\Sigma$, by
$\Omega$.
\subsection{Neck geometry}
\begin{figure}[!h]
  \centering
  \resizebox{.6\linewidth}{!}{\input{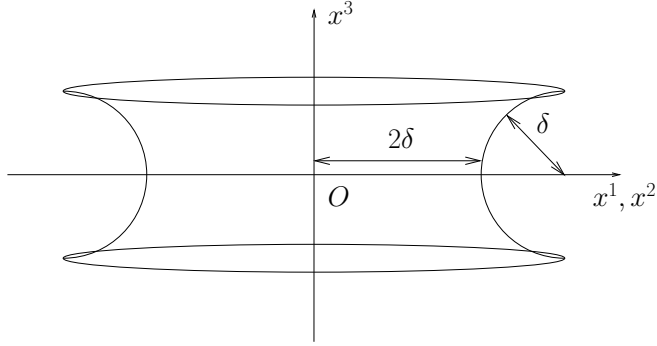}}
  \caption{The $\delta$-standard neck.}
\label{fig:neck}
\end{figure}
The surgery procedure works by inserting necks with negative
$\theta^+$. We start by constructing a suitable neck in Euclidean
space, and transfer it to the geometry of $M$ in normal coordinates.
Let $\delta>0$ and consider the map
\begin{equation*}
  F : [0,2\pi] \times [-\tfrac{\pi}{2},\tfrac{\pi}{2}] \to \IR^3 :
  (\phi,\theta)
  \mapsto
  \begin{pmatrix}
    \delta\sin\phi(3-\cos\theta) \\
    \delta\cos\phi(3-\cos\theta) \\
    \delta\sin\theta
  \end{pmatrix}.
\end{equation*}
The image of $F$ is shown in figure~\ref{fig:neck}, we will call it
the $\delta$-standard neck. Denote by the interior $I_\delta$ of the
neck the points $(x^1,x^2,x^3)$ with $x^3\in(-\delta,\delta)$, $x^3 =
\delta\sin\theta$ and
\begin{equation*}
  (x^1)^2 + (x^2)^2 \leq  \delta^2(3-\cos\theta^2).
\end{equation*}
Clearly, the open ball $B^{\IR^3}_\delta(0)$ is contained in $I_\delta$.

The Euclidean mean curvature of the standard neck with respect to the
normal pointing out of $I_\delta$ is
\begin{equation*}
  H^e
  =
  -\delta^{-1} \big( 1 - (3-\cos\theta)^{-1}\cos\theta\big)
  \leq
  - (2\delta)^{-1}.  
\end{equation*}
Thus the Euclidean mean curvature of the $\delta$-standard neck can be
arbitrarily negative if $\delta$ is chosen small enough.
Let $r_0$ be such that at any point $O\in M$ with $\dist(O, \del
M)\geq \rho(M,g,K;\del M)/2$ we have geodesic normal coordinates
$\{x^i\}$ such that for $r\leq r_0$ we have
\begin{equation*}
  r^{-2} |g_{ij}-\delta_{ij}| + r^{-1} |\del_k g_{ij}| + |\del_k\del_l
  g_{ij}| \leq C
\end{equation*}
where $r$ is the Euclidean distance in $x$-coordinates. Then, the
image of the standard neck in these coordinates will have
$H<-(4\delta)^{-1}$ if $\delta<r_0$ is small enough. Thus, choosing
$\delta^{-1}$ large compared to $\|K\|_{C^0(M)}$, we can ensure that
the $\delta$-standard neck has $\theta^+ < 0$.

\subsection{Point selection}
The goal is to consume a fixed amount of volume by application of the
surgery. To this end, we have to insert a neck with $\delta$ bounded
away from zero in terms of the geometry of $M$. Hence, it is not
sufficient to do surgery at the points $p,q$ which realize
$i^+(\Sigma)$. Instead, we have to find points $p,q$ such that there
is a point $O$ with $\dist(O,\del M)\geq \rho(M,g,K;\del M)/2$ such
that $B^M_\delta(O)$ touches $\Sigma$ at $p$ and $q$, and the angle of
the segments joining $O$ to $p$ and $q$ at $O$ is close to $\pi$.
\begin{figure}[!b]
  \centering
  \resizebox{.6\linewidth}{!}{\input{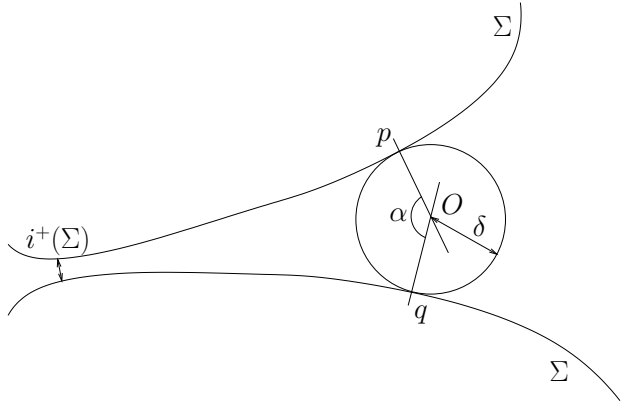}}
  \caption{Selecting the points $p$ and $q$ where a ball $B_\delta(O)$
  touches $\Sigma$.}
\label{fig:pointselect}
\end{figure}

These points $p,q,O$ can be found as follows. Let $r_0$ be as
above. There exist $r_1<r_0$ and $C>0$ depending only on
$\|\RiemM\|_{C^0}$, such that $\del B^M_r(O)$ has second fundamental
form $A_r\geq \frac{C}{r} \gamma_r$ where $\gamma_r$ is the induced
metric on $\del B^M_r(O)$ (use the Hessian comparison theorem for the
distance function to $O$ \cite{schoen-yau:1994}). Furthermore, there
exists $0<r_2<r_1/2$, depending additionally on $\sup_\Sigma |A|$ with
the following property. If $O$ and $r<r_2$ are such that $\del
B^M_r(O)$ touches $\Sigma$ at $p$, then the $\Sigma$-ball
$B_{r_2}^\Sigma(p)$ does not intersect the interior of $B^M_r(O)$. The
important point to note is that the radius of the $\Sigma$-ball does
not depend on $r$.

Now fix $r<r_2$ and consider the set $U_r\subset\Sigma$ of points which
can be touched by a ball which lies completely outside of $\Sigma$,
that is, 
\begin{equation*}
  U_r
  :=
  \big\{ p \in \Sigma :
  \exists O\in \Omega\ \text{s.t.}\
  B^M_r(O) \subset\Omega\ \text{and}\ p \in \del B^M_r(O)
  \big\}.
\end{equation*}
Clearly $U_r$ is non-empty if $2r<\dist(\Sigma,\del^+ M)$, as then the
point $p_1\in\Sigma$ which realizes $\dist(\Sigma,\del^+ M)$ is in
$U_r$. Let $\Sigma_1$ be the component of $\Sigma$ containing $p_1$. If
$\Sigma_1\subset U_r$, then $\dist(\Sigma_1,
\Sigma\setminus\Sigma_1)\geq 2r$. We then select $p_2 \in
\Sigma\setminus\Sigma_1$ such that $p_2$ realizes the distance
$\dist(\Sigma\setminus\Sigma_1, \del^+M \cup\Sigma_1)$, clearly
$p_2\in U_r$. We can continue this process until either we found a
component $\Sigma_k$ of $\Sigma$ with $\Sigma_k\not\subset U_r$ and
$U_r\cap\Sigma_k \neq\emptyset$, or we showed that $\Sigma=U_r$.
However, the latter can not happen if $i^+(\Sigma) < r$, as the points
$p,q$ from lemma~\ref{lemma:almost_touch} are not in $U_r$. Thus,
there is a component $\Sigma_k$ of $\Sigma$ which contains a point $p
\in \del U_r$, the boundary of $U_r$ relative to $\Sigma$.

As $U_r$ is closed in $\Sigma$, there exists $O\in\Omega$ such that
$p\in \del B^M_r(O)$ and $B_r^M(O)\subset\Omega$. We claim that there
exists $q\in\Sigma\cap\del B^M_r(O)$, $q\neq p$. This $q$ can be
constructed as follows. Choose a sequence of points
$p_k\in\Sigma\setminus U_r$ with $p_k \to p$. Consider the geodesic
normal to $\Sigma$ emanating from $p_k$ outward. Let $O_k$ be the
point at distance $r$ from $p_k$ on this geodesic. As $p_k$ is not in
$U_r$, the ball $B_r(O_k)$ intersects $\Sigma$ in a point $q_k$ with
$\dist(q_k,O_k)< r$ and $\dist_\Sigma(p_k,q_k)\geq r_2$, by our choice
of $r$. By compactness we can assume that the $q_k$ converge to $q$
with $\dist(q,O) \leq r$ and $\dist_\Sigma(p,q)\geq r_2$. As $p\in
U_r$, the open ball $B^M_r(O)$ does not intersect $\Sigma$ and thus
$\dist(q,O)=r$.

Thus we find that, if $r<r_2$ and $i^+(\Sigma)<r$, there exist points
$p\neq q\in \Sigma$ and $O\in\Omega$ such that $p,q\in \del
B_r(O)$. Denote the geodesic segment joining $O$ and $p$ by $\gamma_p$
and the segment joining $O$ and $q$ by $\gamma_q$. We now want to show
that the angle between $\gamma_p$ and $\gamma_q$ at $O$ is close to
$\pi$ if $r$ is small enough.

Consider geodesic normal coordinates around $O$. Then the segments
$\gamma_p$ and $\gamma_q$ are straight lines emanating from $O$. let
$L_p$ be the plane orthogonal to $\gamma_p$ through $p$. As the
curvature of $\Sigma$ is bounded, $B^\Sigma_{r_3}(p)$ is the graph of
a function $u_p$ over $L_p$ with
\begin{equation}
  \label{eq:1}
  r^{-2} u_p + r^{-1} |\del_k u_p| + |\del_k\del_l u_p| \leq C
\end{equation}
for $r<r_3$ where $r_3>0$ and $C<\infty$ depend only on
$\inj_\rho(M,g,K;\del M)^{-1}$, $\|\RiemM\|_{C^0(M)}$ and $\sup_\Sigma
|A|)$. In particular, $B^\Sigma_{r_3}(p)$ is contained in a small
tubular neighborhood of $L_p$. Similarly, $B^\Sigma_{r_3}(q)$ is
contained in a neighborhood of $L_q$.

Let $\alpha$ be the angle of $\gamma_p$ and $\gamma_q$ at $O$.  We
claim that for each $\eta>0$ there exists $r>0$ such that
$|\alpha-\pi|<\eta$. Otherwise, if $\alpha$ is not close to
$\pi$, the planes $L_p$ and $L_q$ intersect at distance $d$ with $ d =
\frac{r}{\cos(\alpha/2)}\leq \frac{r}{\eps}$. Thus, choosing $r$ small
enough, we can make $L_p$ and $L_q$ intersect withing $d\leq
r_3/2$. This implies that $B^\Sigma_{r_3}(p)$ and $B^\Sigma_{r_3}(q)$
must also intersect. This is a contradiction, as $\Sigma$ is assumed
to be embedded.

\subsection{Surgery}
With the previous preparations, we can carry out the surgery
procedure. We choose $r$ so small that the above considerations apply,
giving the following properties.
\begin{enumerate}
\item\label{item:11}
  The $(2\delta)$-standard neck in normal coordinates around any point
  $O\in M$ with $\dist(O,\del M) > \inj_\rho(M,g,K;\del M)$ has
  $\theta^+ < 0$ in $(M,g,K)$.
\item\label{item:12}
  The $M$-ball $B_\delta^M(O)$ is contained in the interior of the
  image of the $(2\delta)$-standard neck.
\item\label{item:13}
  If $i^+(\Sigma)<\delta$, then there exist points $p,q\in\Sigma$
  and $O\in\Omega$ such that $B_\delta(O)\subset \Omega$ and
  $p,q\in\del B_\delta(O)$.
\item\label{item:14}
  The angle $\alpha$ of $\gamma_p$ and $\gamma_q$ at $O$
  satisfies $|1/\cos\alpha + 6\tan\alpha| \leq 3/2$. 
\end{enumerate}
Now assume that $i^+(\Sigma)<\delta$ and pick $p,q,O$ as in
condition~\ref{item:13} above, and consider geodesic normal
coordinates around $O$ such that $\gamma_q$ lies on the negative
$x^3$-axis. Let $N$ be the image of the $(2\delta)$-neck centered at
$O$ with its axis aligned with the $x^3$-coordinate axis, as in
figure~\ref{fig:surgery}.
\begin{figure}[!t]
  \centering
  \resizebox{\linewidth}{!}{\input{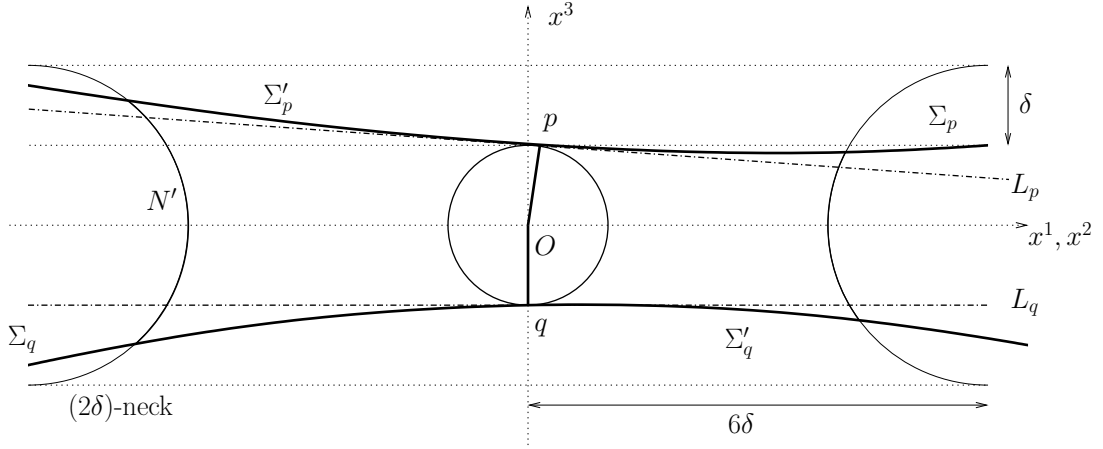}}
  \caption{The surgery in geodesic normal coordinates.}
\label{fig:surgery}
\end{figure}
Condition~\ref{item:14} on $\alpha$ implies that the plane $L_p$ is
such that 
\begin{equation*}
  L_p\cap\{(x^1)^2 + (x^2)^2 \leq 6\delta\}
  \subset
  \{ -\frac{3}{2}\delta \leq x^3 \leq \frac{3}{2}\delta]\}.
\end{equation*}
Recall that the component $\Sigma_p$ of $\Sigma\cap \{-2\delta\leq x^3 \leq 2\delta\}$
containing $p$ is the graph over $L_p$ of a function $u_p$ with
\begin{equation*}
  r^{-2} u_p + r^{-1} |\del_k u_p| + |\del_k\del_l u_p| \leq C ,
\end{equation*}
where $C$ is as in equation~\eqref{eq:1}. Thus, we can choose $\delta$,
depending only on $C$ so small, that first $\Sigma_p \subset
\{-2\delta\leq x^3 \leq 2\delta\}$, and second $\Sigma_p$ and $N$
intersect transversely (note that the angle of $\Sigma$ and $L_p$ is
of order $\delta$, whereas the angle between the neck and $L_p$ is
uniformly bounded away from zero).  We can similarly argue for $\Sigma_q$, so that we
find that figure~\ref{fig:surgery} is indeed accurate.

The surgery can now be performed as follows. Let $\Sigma_p'$ be the
component of $\Sigma\setminus N$ that contains $p$ and $\Sigma_q'$ be
the component that contains $q$. Let $N'$ be the component of
$N\setminus\Sigma$ between $\Sigma_p$ and $\Sigma_q$. Construct a
non-smooth surface $\Sigma_N$ by removing $\Sigma_p'$ and $\Sigma_q$
and adding $N'$. By construction this surface is homologous to
$\Sigma$, and hence to $\del^+M$. By condition~\ref{item:11}, we find
that the inserted neck has $\theta^+<0$. Condition~\ref{item:12}
implies that $B_\delta(O)$ is indeed contained in the neck we
added. Furthermore, at the corner $\Sigma\cap N'$, the normals $\nu_N$
of $N'$ and $\nu$ of $\Sigma$ enclose an angle $<\pi$.

We proceed by using lemma~\ref{thm:smooth_corner} to smooth out this
corner, thereby constructing a surface $\Sigma'$. This $\Sigma'$ lies
outside of $\Sigma_N$, and agrees with $\Sigma_N$ except in an
arbitrarily small neighborhood of the corner and has $\theta^+\leq 0$
and $\theta^+\not\equiv 0$. Note that in particular, the component of
$\Sigma'$, which contains part of $N'$ has $\theta^+<0$ somewhere.

\subsection{Results}
By the previous surgery procedure we arrive at the following
proposition
\begin{proposition}
  \label{thm:trapped_outside}
  Let $(M,g,K)$ be a data set such that $\del M$ is the disjoint union $\del M
  = \del^+ M \cup \del^- M$ of smooth compact surfaces without
  boundary. Assume that $\theta^+(\del^+ M) >0$ and if $\del^- M$ is
  non-empty, that $\theta^+(\del^-M)<0$. 
  
  There exists $\delta>0$ depending only on $\inj_\rho(M,g,K;\del
  M)^{-1}$, $\|\RiemM\|_{C^0}$ and $\|K\|_{C^1}$ with the following
  property. If $\Sigma\subset M$ is a stable MOTS, homologous to
  $\del^+ M$, bounding $\Omega$ together with $\del^+ M$, and
  $i^+(\Sigma) < \delta$, then there exists a MOTS $\Sigma'$ outside
  of $\Sigma$, homologous to $\del^+M$ and bounding $\Omega'$ together
  with $\del^+ M$ such that
  \begin{equation*}
    \Vol (\Omega') \leq Vol(\Omega) - v_0.
  \end{equation*}
  where $0 < v_0 := \inf \{ \Vol B^M_\delta(p) : \dist(p,\del M) \geq
  \delta \}$.
\end{proposition}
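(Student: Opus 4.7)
My plan is to assemble the three preceding subsections together with theorem \ref{thm:schoen-weak}; the proposition is essentially a clean summary of the surgery machinery, so the proof is mostly a bookkeeping exercise once the right $\delta$ is fixed.

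\emph{Choice of $\delta$.} I would first pick $\delta$ small enough that conditions \ref{item:11}--\ref{item:14} of subsection 5.3 hold uniformly. These impose upper bounds on $\delta$ in terms of $r_0$, the scales $r_2,r_3$ from the point-selection argument, $\|K\|_{C^0}$, and the barrier distance $\rho(M,g,K;\del M)$. The only surface-dependent quantity appearing in $r_2,r_3$ is $\sup_\Sigma|A|$, but for a stable MOTS this is controlled by the curvature estimate of theorem \ref{thm:curv-est} in terms of $\|K\|_{C^1}$, $\|\RiemM\|_{C^0}$, and $\inj_\rho(M,g,K;\del M)^{-1}$. Combined with lemma \ref{lem:barrdist}, this makes $\delta$ depend only on the quantities listed in the statement. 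I would also require $\delta < \rho(M,g,K;\del M)/2$, which guarantees $\dist(O,\del M)\geq\delta$ for the base point $O$ below, so that $v_0 > 0$.

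\emph{Surgery step.} Assuming $i^+(\Sigma)<\delta$, the point-selection procedure of subsection 5.2 produces $p,q\in\Sigma$ and $O\in\Omega$ with $B^M_\delta(O)\subset\Omega$, $p,q\in\del B^M_\delta(O)$, and with the geodesic segments $\gamma_p,\gamma_q$ meeting at $O$ at an angle arbitrarily close to $\pi$. Inserting the $(2\delta)$-standard neck $N$ around $O$ and applying lemma \ref{thm:smooth_corner} to the corner along $\Sigma\cap N'$, I obtain a smooth embedded surface $\tilde\Sigma$, homologous to $\del^+M$, which coincides with $\Sigma$ outside an arbitrarily small neighborhood of the neck and satisfies $\theta^+[\tilde\Sigma]\leq 0$ with $\theta^+\not\equiv 0$ (the negative sign being inherited from the neck by condition \ref{item:11}). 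Since both outer normals point away from $B^M_\delta(O)$ along the intersection, the smoothing pushes $\tilde\Sigma$ off that ball, and the region $\tilde\Omega$ that $\tilde\Sigma$ bounds with $\del^+M$ lies inside $\Omega\setminus B^M_\delta(O)$.

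\emph{Existence of $\Sigma'$ and the volume bound.} Treating $\tilde\Sigma$ as a weak inner barrier of $\tilde\Omega$, I apply theorem \ref{thm:schoen-weak} to obtain a smooth, embedded, stable MOTS $\Sigma'\subset\tilde\Omega$ homologous to $\del^+M$ and bounding a region $\Omega'$ with $\del^+M$. Then $\Omega'\subset\tilde\Omega\subset\Omega\setminus B^M_\delta(O)$, so
\[
\Vol\Omega' \;\leq\; \Vol\Omega - \Vol B^M_\delta(O) \;\leq\; \Vol\Omega - v_0,
\]
and $v_0>0$ follows from standard volume comparison together with compactness of the set of admissible centers. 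The only genuinely delicate point is the \emph{a priori} choice of $\delta$: its independence of the particular stable MOTS $\Sigma$ rests entirely on theorem \ref{thm:curv-est}, since otherwise the geometric scales $r_2,r_3$ entering the point-selection argument could not be controlled uniformly across the class of stable MOTS under consideration.
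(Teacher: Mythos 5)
Your proof is correct and follows essentially the same route as the paper: stability yields the curvature bound of theorem~\ref{thm:curv-est}, this makes the point selection and neck insertion of section~\ref{sec:surgery} applicable with a $\delta$ depending only on the stated quantities, and the volume drop comes from the ball $B^M_\delta(O)$ enclosed by the inserted neck. Your explicit use of theorem~\ref{thm:schoen-weak} to upgrade the smoothed, weakly outer trapped surface to an actual MOTS is precisely how the paper itself employs the surgery output (cf.\ corollary~\ref{thm:outermost_iplus} and theorem~\ref{thm:uniform_trapped_outside}), so it is a clarification of the paper's terse proof rather than a different argument.
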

\begin{proof}
  The fact that $\Sigma$ is stable yields a curvature bound in view of
  theorem~\ref{thm:curv-est}. Then the above surgery procedure can be
  applied to construct $\Sigma'$.
\end{proof}
An immediate corollary of the above proposition is the following.
\begin{corollary}
  \label{thm:outermost_iplus}
  Let $(M,g,K)$ and $\delta$ be as in
  proposition~\ref{thm:trapped_outside}. If $\Sigma$ is an outermost
  MOTS in $M$, then $i^+(\Sigma)\geq \delta$.
\end{corollary}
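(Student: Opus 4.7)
The plan is a short proof by contradiction. Suppose toward contradiction that $\Sigma$ is an outermost MOTS with $i^+(\Sigma) < \delta$, where $\delta$ is the constant supplied by proposition~\ref{thm:trapped_outside}. The goal is to produce a MOTS strictly inside the region $\Omega$ bounded by $\Sigma$ and $\del^+M$, which directly contradicts outermost-ness as formulated in definition~\ref{def:outermost}.

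First I would invoke corollary~\ref{cor:outermost}, which says that an outermost MOTS is automatically stable. This is exactly the hypothesis needed to apply the surgery result. Next I would apply proposition~\ref{thm:trapped_outside} to $\Sigma$: since $\Sigma$ is stable, homologous to $\del^+ M$, bounds $\Omega$ together with $\del^+ M$, and has $i^+(\Sigma) < \delta$, the proposition yields a MOTS $\Sigma'$ lying outside $\Sigma$ (i.e. in $\Omega$), homologous to $\del^+ M$, and bounding a region $\Omega'$ with $\del^+ M$ such that
\begin{equation*}
  \Vol(\Omega') \leq \Vol(\Omega) - v_0,
\end{equation*}
where $v_0 > 0$ depends only on the data.

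From the inclusion $\Sigma' \subset \Omega$ together with the bounding relation, one has $\Omega' \subset \Omega$, and the strict volume drop by $v_0 > 0$ forces $\Omega' \neq \Omega$. Thus $\Omega' \subsetneq \Omega$, violating the definition of outermost for $\Sigma$. This contradiction proves $i^+(\Sigma) \geq \delta$.

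The argument contains no substantive obstacle once proposition~\ref{thm:trapped_outside} is available; the only point worth checking is that the surface $\Sigma'$ produced by the surgery-plus-existence argument genuinely lies in the open region $\Omega$ (so that $\Omega' \subset \Omega$), but this is built into the proposition's statement that $\Sigma'$ is outside $\Sigma$. Hence the corollary follows immediately.
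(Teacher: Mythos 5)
Your argument is correct and follows essentially the same route as the paper: assume $i^+(\Sigma)<\delta$, use stability of the outermost MOTS (corollary~\ref{cor:outermost}) to apply proposition~\ref{thm:trapped_outside}, and contradict definition~\ref{def:outermost}. The only cosmetic difference is that the paper phrases the output of the surgery as a barrier surface and explicitly re-invokes theorem~\ref{thm:schoen-weak} to obtain the MOTS outside $\Sigma$, whereas you take the proposition's stated conclusion (a MOTS $\Sigma'$ with $\Omega'\subsetneq\Omega$) at face value; both readings yield the same contradiction.
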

\begin{proof}
  If $i^+(\Sigma)<\delta$, then proposition~\ref{thm:trapped_outside},
  guarantees the existence of a barrier surface outside of $\Sigma$,
  and theorem~\ref{thm:schoen-weak} implies the existence of a MOTS
  outside of $\Sigma$. Thus $\Sigma$ is not outermost.
\end{proof}
More importantly, as already indicated, the fact that a surgery takes
away a uniform amount of volume, gives a finiteness result, which
allows us to prove the following theorem.
\begin{theorem}
  \label{thm:uniform_trapped_outside}
  Let $(M,g,K)$ be a data set such that $\del M$ is the disjoint union $\del M
  = \del^+ M \cup \del^- M$ of smooth compact surfaces without
  boundary. Assume that $\theta^+(\del^+ M) >0$ and if $\del^- M$ is
  non-empty, that $\theta^+(\del^-M)<0$. Let $\delta$ be as in
  proposition~\ref{thm:trapped_outside}.

  If $\Sigma\subset M$ is a MOTS homologous to $\del^+ M$, then there
  exists a stable MOTS $\Sigma'$, with
  \begin{equation*}
    i^+(\Sigma') \geq \delta.
  \end{equation*}
  such that $\Sigma'$ does not intersect the region bounded by
  $\Sigma$ (and $\del^-M$ if non-empty).
\end{theorem}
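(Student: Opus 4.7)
The strategy is an iterative procedure that alternates between invoking the existence result theorem~\ref{thm:schoen-weak} to promote a weak barrier to a stable MOTS, and invoking the surgery proposition~\ref{thm:trapped_outside} to either terminate with a good outer injectivity bound or strictly consume a definite amount of volume.

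Step 1 (base case). First I would apply theorem~\ref{thm:schoen-weak} to the region $\Omega_0$ bounded by $\Sigma$ (as inner boundary, where $\theta^+\equiv 0$ and hence $\Sigma$ is a weak barrier) and $\del^+M$ (as outer boundary, where $\theta^+>0$ by hypothesis). This yields a smooth, embedded, stable MOTS $\Sigma_0$ homologous to $\del^+M$, bounding a region $\Omega_0'\subset\Omega_0$. By the last clause of theorem~\ref{thm:schoen-weak}, some components of $\Sigma_0$ may coincide with components of $\Sigma$ (precisely those on which $\theta^+=0$ is realized), which is allowed. By construction, $\Sigma_0$ does not enter the region bounded by $\Sigma$ (and $\del^-M$).

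Step 2 (induction). Given a stable MOTS $\Sigma_k$ outside $\Sigma$ bounding a region $\Omega_k$ with $\del^+M$, check whether $i^+(\Sigma_k)\geq\delta$. If so, set $\Sigma':=\Sigma_k$ and stop. Otherwise $i^+(\Sigma_k)<\delta$, and since $\Sigma_k$ is stable, proposition~\ref{thm:trapped_outside} supplies a MOTS $\Sigma_{k+1}$ outside $\Sigma_k$ (bounding a region $\Omega_{k+1}\subset\Omega_k$) satisfying
\begin{equation*}
\Vol(\Omega_{k+1})\leq\Vol(\Omega_k)-v_0.
\end{equation*}
The MOTS $\Sigma_{k+1}$ delivered by proposition~\ref{thm:trapped_outside} is itself produced via the surgery-plus-theorem~\ref{thm:schoen-weak} machinery, so by theorem~\ref{thm:schoen-stable} it is stable; this is what allows the induction to continue.

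Step 3 (termination). Because each iteration strictly decreases the enclosed volume by a fixed amount $v_0>0$ that depends only on the geometry of $(M,g,K)$ and on $\delta$, and since $\Vol(\Omega_k)\geq 0$, the process must terminate after at most $\lfloor\Vol(M)/v_0\rfloor$ steps. The only way the process can terminate is by the stopping criterion $i^+(\Sigma_k)\geq\delta$, producing the required $\Sigma'$.

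The main conceptual point, and the only nontrivial step, is to verify that the induction is well set up: at every stage, the current surface $\Sigma_k$ is a \emph{stable} MOTS (so that both proposition~\ref{thm:trapped_outside} applies to it and theorem~\ref{thm:curv-est} provides the curvature estimate needed inside that proposition), and the resulting $\Sigma_{k+1}$ lies strictly in $\Omega_k$, so that volume monotonicity holds and the new inner weak barrier $\Sigma_{k+1}$ together with $\del^+M$ still satisfies the hypotheses of theorem~\ref{thm:schoen-weak} for the next round. Stability along the iteration is guaranteed by combining theorem~\ref{thm:schoen-weak} with theorem~\ref{thm:schoen-stable}, and the containment $\Omega_{k+1}\subset\Omega_k$ is built into proposition~\ref{thm:trapped_outside}. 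No other obstacle arises.
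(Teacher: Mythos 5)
Your proposal is correct and follows essentially the same route as the paper: iterate theorem~\ref{thm:schoen-weak} (to obtain a stable MOTS outside the current barrier) with proposition~\ref{thm:trapped_outside} (to consume volume at least $v_0$ when $i^+<\delta$), terminating after finitely many steps by the volume bound. Your handling of the only delicate point---ensuring each iterate is stable by noting that the surgery output is a barrier fed into theorem~\ref{thm:schoen-weak}, whose output is stable---is exactly how the paper's proof resolves it.
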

\begin{proof}
  If $\Sigma$ is not stable we use theorem~\ref{thm:schoen-weak} with
  inner boundary $\Sigma$ to construct a stable MOTS $\Sigma_1$
  outside of $\Sigma$. If $i^+(\Sigma_1)<\delta$, then
  proposition~\ref{thm:trapped_outside} applies and yields a barrier
  outside of $\Sigma_1$ which can be fed into
  theorem~\ref{thm:schoen-weak} to construct a stable MOTS $\Sigma_2$
  outside of $\Sigma_1$. The region bounded by $\Sigma_1$ and
  $\Sigma_2$ has volume bounded below by $v_0$, where $v_0$ is from
  proposition~\ref{thm:trapped_outside}. If $i^+(\Sigma_2) <\delta$,
  we can iterate. As each step consumes at least volume $v_0$ outside
  of $\Sigma$, this procedure must end after a finite number of steps
  with a surface $\Sigma_k$ with $i^+(\Sigma_k)\geq \delta$.
\end{proof}
A lower bound on $i^+(\Sigma)$ can be used to estimate the area of
$\Sigma$. This area estimate is crucial to get the compactness of the
class of stable MOTS with $i^+(\Sigma)$ bounded below.
\begin{proposition}
  \label{prop:area-bound}
  Let $(M,g)$ be a compact Riemannian manifold with boundary, and
  $\Sigma\subset M$ an embedded, two-sided surface with bounded
  curvature $|A|\leq C$. Let
  \begin{equation*}
    \delta := \min \{ i_0^+(\Sigma), i^+(\Sigma)\}.
  \end{equation*}
  Then there exists an absolute constant $c$ such that the following
  area estimate holds:
  \begin{equation}
    |\Sigma| \leq  c (\delta^{-1} + \sup_\Sigma |A|) \Vol(M)
  \end{equation}
\end{proposition}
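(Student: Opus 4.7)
The plan is to bound the area by computing the volume of a thin outward tubular neighborhood of $\Sigma$ via the normal exponential map. Since $\delta \le i^+(\Sigma)$, the map $G_\Sigma$ is injective on $\Sigma\times[0,\delta)$; since $\delta \le i_0^+(\Sigma)$, Lemma~\ref{lemma:local_inj} guarantees that each parallel sheet $\Sigma^s$ is smooth with $\sup_{\Sigma^s}|A|\le 2\sup_\Sigma|A|$. Hence the image $T_{s_0} := G_\Sigma(\Sigma\times[0,s_0))$ is an embedded tube for any $s_0 \le \delta$, whose volume one can compute in ``Fermi'' coordinates $(p,s)$ as
\begin{equation*}
  \Vol(T_{s_0}) = \int_0^{s_0}\!\!\int_\Sigma J(p,s)\,d\mu(p)\,ds,
\end{equation*}
where $J(p,s)$ is the Jacobian of $G_\Sigma$.

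The key step is to control $J$ from below. The Jacobian satisfies $\partial_s \log J(p,s) = -H_s(G_\Sigma(p,s))$, where $H_s$ is the mean curvature of $\Sigma^s$ with respect to the outward normal. Since $|H_s|\le 2|A_s|\le 4\sup_\Sigma|A|$ for $s\in[0,i_0^+(\Sigma))$ by Lemma~\ref{lemma:local_inj}, integrating gives
\begin{equation*}
  J(p,s) \geq \exp\bigl(-4s\sup_\Sigma|A|\bigr).
\end{equation*}
In particular, setting
\begin{equation*}
  s_0 := \min\bigl\{\delta,\, (8\sup_\Sigma|A|)^{-1}\bigr\}
\end{equation*}
(with the convention $1/0 = \infty$ if $|A|\equiv 0$), we have $J(p,s) \geq e^{-1/2} > 1/2$ on the whole tube $T_{s_0}$.

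Combining the two observations, since $T_{s_0}\subset M$,
\begin{equation*}
  \Vol(M) \geq \Vol(T_{s_0}) \geq \tfrac12\, s_0\, |\Sigma|,
\end{equation*}
hence
\begin{equation*}
  |\Sigma| \leq \frac{2\,\Vol(M)}{s_0} = 2\,\Vol(M)\,\max\{\delta^{-1},\,8\sup_\Sigma|A|\} \leq 16\,\bigl(\delta^{-1} + \sup_\Sigma|A|\bigr)\Vol(M),
\end{equation*}
which yields the claim with $c=16$. There is essentially no obstacle here: the only point that requires attention is invoking Lemma~\ref{lemma:local_inj} correctly so that the curvature bound (and hence the mean-curvature bound used for the Jacobian ODE) propagates to all parallel sheets on $[0,\delta)$, after which the argument is a standard tube-volume computation.
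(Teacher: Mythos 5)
Your argument is correct and is essentially the paper's proof in a different packaging: both compare $|\Sigma|$ with the volume of the outward collar $G_\Sigma(\Sigma\times[0,\cdot))$, using injectivity from $i^+(\Sigma)$ and the curvature bound on the parallel sheets from lemma~\ref{lemma:local_inj} to control their mean curvature; the paper phrases this via the divergence theorem applied to $N=-\phi(s)\nu$ with a cut-off $\phi$, while you integrate the equivalent Jacobian ODE for $G_\Sigma$ directly. (Note only that with the paper's convention $H=\divSig\nu$ the Jacobian satisfies $\partial_s\log J=+H_s$ rather than $-H_s$, but since you only use $|H_s|\leq 4\sup_\Sigma|A|$, your lower bound on $J$ and the resulting estimate with $c=16$ are unaffected.)
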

\begin{proof}
  Let $\nu$ be the outward pointing normal to the surfaces $\Sigma^s:=
  G_\Sigma(\Sigma,s)$ for $s\in[0,\delta]$, where $G_\Sigma$ is as in
  equation \eqref{eq:24}. Then $\divM(\nu) = H^s$, where $H^s$ denotes
  the mean curvature of $\Sigma^s$. As $\delta \leq i_0^+(\Sigma)$,
  the estimate
  \begin{equation*}
    |\divM \nu| \leq 2\sup_{\Sigma^s} |A| \leq 4 \sup_\Sigma |A|
  \end{equation*}
  follows from the definition of $i_0^+(\Sigma)$ (which has the bound
  on $\sup_{\Sigma^s} |A|$ built in).
  
  Let $\phi$ be a cut-off function with $\phi(s) = 1$ for $s\leq
  \delta/4$, $\phi=0$ for $s\geq \delta/2$ and $0\leq \phi'(s) \leq
  8\delta^{-1}$. Using the divergence theorem for the vector field $N=
  -\phi(s)\nu$ in the volume $U := G(\Sigma,[0,\delta))$, we infer that
  \begin{equation*}
    |\Sigma| = \int_\Sigma \la N,\nu \ra \dmu = \int_U \divM N \leq
    \Vol(U) |\div N|.
  \end{equation*}
  This yields the desired area estimate.
\end{proof}
As outermost MOTS are stable, and thus have bounded curvature, we can
combine this proposition with corollary~\ref{cor:outermost} to infer
the following area bound for outermost MOTS.
\begin{theorem}
  \label{thm:area-bound}
  Let $(M,g,K)$ be a smooth, compact initial data set with $\del M$
  the disjoint union $\del M = \del^-M \cup \del^+ M$, where $\del^+
  M$ is non-empty and has $\theta^+[\del^+ M]>0$, and $\theta^-[\del^-
  M] <0$ if $\del^-M$ is non-empty. Then, if $\Sigma$ is an
  outermost MOTS, we have the estimate
  \begin{equation*}
    |\Sigma| \leq C,
  \end{equation*}  
  where $C$ depends only on $\|\RiemM\|_{C^0(M)}$, $\|K\|_{C^1(M)}$,
  $\inj_\rho(M,g,K,\del M)^{-1}$, and $\Vol(M)$.
\end{theorem}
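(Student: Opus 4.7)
The proof will be a short assembly of the pieces already developed in the paper. The goal is to reduce the area estimate to Proposition~\ref{prop:area-bound}, whose hypotheses are a pointwise bound on $|A|$ together with a lower bound on the quantity $\min\{i_0^+(\Sigma), i^+(\Sigma)\}$. Both hypotheses will be supplied, with constants of the claimed form, by combining the stability/curvature estimate with the surgery-based lower bound on $i^+$.

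First I would invoke Corollary~\ref{cor:outermost}: any outermost MOTS $\Sigma$ homologous to $\del^+ M$ is stable. Theorem~\ref{thm:curv-est} then gives an a priori bound
\begin{equation*}
  \|A\|_\infty \leq C_1\bigl(\|\RiemM\|_{C^0(M)},\,\|K\|_{C^1(M)},\,\inj_\rho(M,g,K;\del M)^{-1}\bigr).
\end{equation*}
With $\sup_\Sigma|A|$ under control, Lemma~\ref{lemma:local_inj} provides a uniform lower bound on $i_0^+(\Sigma)$, depending only on $\|\RiemM\|_{C^0}$, $\inj(M,g;\Sigma)$, and $\sup_\Sigma|A|$; since Lemma~\ref{thm:boundary_as_barrier} together with Definition~\ref{def:injectivity_radius} controls $\inj(M,g;\Sigma)$ through $\inj_\rho(M,g,K;\del M)^{-1}$, this lower bound depends only on the listed quantities.

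Next I would invoke Corollary~\ref{thm:outermost_iplus} applied to the outermost MOTS $\Sigma$ to get $i^+(\Sigma)\geq \delta$, where $\delta$ is the constant from Proposition~\ref{thm:trapped_outside} and depends only on $\inj_\rho(M,g,K;\del M)^{-1}$, $\|\RiemM\|_{C^0}$ and $\|K\|_{C^1}$. (Here it is important that to apply Corollary~\ref{thm:outermost_iplus} we need the hypothesis $\theta^+[\del^- M]<0$, which is provided by the assumption $\theta^-[\del^- M] < 0$ read against the inner-normal sign convention, and which implies that $\Sigma$ cannot touch $\del^- M$ and hence is genuinely outermost in the sense of Definition~\ref{def:outermost}.)

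Finally I would combine these two uniform lower bounds into a lower bound on $\min\{i_0^+(\Sigma),i^+(\Sigma)\}$ and feed this, along with the curvature bound, into Proposition~\ref{prop:area-bound} to obtain
\begin{equation*}
  |\Sigma|\leq c\bigl(\min\{i_0^+(\Sigma),i^+(\Sigma)\}^{-1} + \sup_\Sigma|A|\bigr)\Vol(M) \leq C,
\end{equation*}
with $C$ depending only on $\|\RiemM\|_{C^0(M)}$, $\|K\|_{C^1(M)}$, $\inj_\rho(M,g,K;\del M)^{-1}$, and $\Vol(M)$, as claimed. The main conceptual obstacle has already been overcome in Section~\ref{sec:surgery}: namely the surgery argument that furnishes the lower bound on $i^+(\Sigma)$; everything else in this final step is bookkeeping of the constants.
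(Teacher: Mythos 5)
Your proposal is correct and is essentially the paper's own argument: the paper proves the theorem by noting that outermost MOTS are stable (corollary~\ref{cor:outermost}), hence have bounded curvature by theorem~\ref{thm:curv-est}, have $i^+(\Sigma)\geq\delta$ by corollary~\ref{thm:outermost_iplus}, and then feeding these into proposition~\ref{prop:area-bound}. Your explicit bookkeeping of the $i_0^+(\Sigma)$ bound via lemma~\ref{lemma:local_inj} and lemma~\ref{thm:boundary_as_barrier}, and of the sign convention at $\del^-M$, only spells out what the paper leaves implicit.
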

As the proof of the previous theorem does not assume that $\Sigma$ is
connected, it also implies a bound on the number of components of an
outermost MOTS.
\begin{corollary}
  \label{cor:comp-bound}
  Let $(M,g,K)$ as above. Then there exists a constant $N$, depending
  only on $\|\RiemM\|_{C^0(M)}$, $\|K\|_{C^1(M)}$,
  $\inj_\rho(M,g,K;\del M)^{-1}$, and $\Vol(M)$ such that any
  outermost MOTS has at most $N$ components.
\end{corollary}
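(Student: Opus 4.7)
The plan is to combine the area upper bound from Theorem~\ref{thm:area-bound} with a uniform \emph{lower} bound on the area of each connected component, so that the number of components is controlled by a ratio of these quantities.

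Let $\Sigma$ be an outermost MOTS and write $\Sigma = \Sigma_1 \cup \cdots \cup \Sigma_m$ for its connected components. By Corollary~\ref{cor:outermost}, $\Sigma$ is stable, and since the stability operator $L_M$ is local, each component $\Sigma_j$ is a stable MOTS as well. Applying Theorem~\ref{thm:curv-est} to each $\Sigma_j$ gives a uniform bound $\|A_{\Sigma_j}\|_\infty \leq C_A$ depending only on $\|K\|_{C^1(M)}$, $\|\RiemM\|_{C^0(M)}$, and $\inj_\rho(M,g,K;\del M)^{-1}$.

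Next I would show that any closed embedded surface $\Sigma_j$ with $\sup_{\Sigma_j}|A| \leq C_A$ satisfies an area lower bound $|\Sigma_j| \geq a_0$ for some $a_0 > 0$ depending only on $C_A$, $\|\RiemM\|_{C^0(M)}$, and $\inj_\rho(M,g,K;\del M)^{-1}$. This follows by picking any $p \in \Sigma_j$ and working in geodesic normal coordinates of $M$ at $p$: because $|A|$ and the ambient curvature are bounded, a neighborhood of $p$ in $\Sigma_j$ is the graph over $T_p\Sigma_j$ of a function $u$ with $u(0) = 0$, $\nabla u(0) = 0$, and a uniform bound on $|\nabla^2 u|$. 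Hence $\Sigma_j$ contains a graphical disc of Euclidean radius $r_0$ with $r_0$ depending only on the quantities listed above, and this disc has area at least $c r_0^2$ for an absolute constant $c$. Taking $a_0 := c r_0^2$ yields the claim.

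Combining the two bounds: by Theorem~\ref{thm:area-bound}, $|\Sigma| = \sum_{j=1}^m |\Sigma_j| \leq C$ for a constant $C$ depending only on $\|\RiemM\|_{C^0(M)}$, $\|K\|_{C^1(M)}$, $\inj_\rho(M,g,K;\del M)^{-1}$, and $\Vol(M)$. On the other hand $|\Sigma| \geq m\, a_0$, so $m \leq C/a_0 =: N$, where $N$ has the asserted dependence. The main obstacle is really just making the local graph representation rigorous so that the area lower bound depends only on the allowed geometric data; this is standard once one has the curvature estimate of Theorem~\ref{thm:curv-est} together with the bound on $\inj_\rho$ which keeps the ambient normal coordinate chart under control.
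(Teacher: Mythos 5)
Your proposal is correct and follows essentially the same route as the paper: stability of the outermost MOTS (corollary~\ref{cor:outermost}) gives the curvature bound of theorem~\ref{thm:curv-est}, which yields a uniform area lower bound per component, and combining with the area upper bound of theorem~\ref{thm:area-bound} bounds the number of components. Your graphical-disc argument simply fills in the detail the paper leaves implicit in the sentence ``This implies a lower bound on the area of each component.''
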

\begin{proof}
  Since outermost MOTS are stable, their curvature is bounded in view
  of theorem~\ref{thm:curv-est}. This implies a lower bound on the area of
  each component. From theorem~\ref{thm:area-bound} we thus infer a
  bound on the number of components.
\end{proof}

%
\section{The trapped region}
\label{sec:trapped-region}
In this section we examine the weakly outer trapped region, or more
precisely the boundary of the weakly outer trapped region. We make the
usual assumptions on $(M,g,K)$, that is $(M,g,K)$ is a smooth initial
data set with $\del M$ the disjoint union $\del M = \del^-M \cup
\del^+M$, where $\del^- M$ may be empty, but $\del^+ M$ is non-empty,
such that $\del^\pm M$ are, smooth, compact surfaces without boundary
and $\theta^+[\del^-M] < 0$ with respect to the normal pointing into
$M$ and $\theta^+[\del^+M]>0$ with respect to the normal pointing out
of $M$.

The definition of a trapped set and the trapped region below make
sense only if $\theta^+[\del^- M] < 0$. However, we can circumvent
this requirement for the main theorem as discussed in
remark~\ref{rem:weak-barrier} below.

To define the weakly outer trapped region, we introduce the notion of
a weakly outer trapped set.
\begin{definition}
  An open set $\Omega\subset M$ with smooth embedded boundary
  $\del\Omega$ is called \todef{weakly outer trapped set} if
  $\del\Omega$ is the disjoint union $\del\Omega = \del^-M \cup
  \del^+\Omega$ where $\del^+\Omega$ is a smooth, compact surface
  without boundary and $\theta^+[\del^+\Omega] \leq 0$ with respect to
  the normal pointing out of $\Omega$.
\end{definition}
Note that $\del^+\Omega$ is homologous to $\del^+ M$ in this
definition.
\begin{definition}
  The \todef{weakly outer trapped region} is the union of all weakly outer trapped
  sets enclosing $\del^-M$:
  \begin{equation}
    \label{eq:8}
    T := \bigcup_{\Omega\ \text{is outer trapped}} \Omega.
  \end{equation}
\end{definition}
We will henceforth refer to $T$ simply as the trapped region.  If
$\del^- M$ is non-empty, then the trapped region is non-empty as well,
but if $\del^-M$ is empty it might happen that $T$ is empty. In this
case the statements below are void.

Let $\del^-T : = \del T \cap \del^- M$ and $\del^+T = \del T \setminus
\del^-M$. The definition of $T$ is is analogous to the set
$\top_{\text{out},M}$ in \cite[Definition 3]{Kriele-Hayward:1997}. It
is known in the literature that provided $\del^+ T$ is smooth, it
satisfies $\theta^+=0$ \cite{Hawking-Ellis:1973,Kriele-Hayward:1997}.
The most general result about $\del^+ T$ we are aware of is
\cite[Proposition 7]{Kriele-Hayward:1997}, which asserts that if
$\del^+ T$ is $C^0$ and piecewise smooth, then it is smooth and
satisfies $\theta^+=0$. In contrast, we do not assume any initial
regularity for $\del^+ T$ for the following theorem.
\begin{theorem}
  \label{thm:trapped-region}
  Let $(M,g,K)$ be such that $\del M$ is the disjoint union $\del M =
  \del^+ M \cup \del^-M$ such that $\theta^+[\del^-M] <0$ if $\del^-M$
  is non-empty, and $\del^+ M$ is non-empty and has
  $\theta^+[\del^+M]>0$. Let $T$ be the trapped region in $M$. If $T$
  is non-empty, then $\del T$ is the disjoint union $\del T = \del^-T
  \cup \del^+T$ of smooth, compact surfaces without boundary, where
  $\del^-T = \del^-M$ and $\del^+T$ is a smooth stable MOTS homologous
  to $\del^+ M$.
\end{theorem}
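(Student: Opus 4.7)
The identity $\del^-T = \del^-M$ is essentially a matter of definitions: since every weakly outer trapped set is open and contains $\del^-M$ in its topological boundary, we have $\del^-M \subseteq \del T$ and $\del^-M \cap T = \emptyset$; and since $\theta^+[\del^+M] > 0$, lemma~\ref{thm:boundary_as_barrier} forces every weakly outer trapped surface to stay a definite distance away from $\del^+M$. The real content is that $\del^+T$ is a smooth stable MOTS, which I propose to prove by constructing an outermost MOTS $\Sigma_*$ inside $\overline{T}$ and identifying it set-theoretically with $\del^+T$.

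The construction of $\Sigma_*$ goes through compactness. Let $\mathcal{C}$ be the class of stable MOTS $\Sigma \subset M$, homologous to $\del^+M$, which together with $\del^-M$ bound a weakly outer trapped set $\Omega_\Sigma \subset T$ and satisfy $i^+(\Sigma) \geq \delta$, where $\delta$ is the constant from theorem~\ref{thm:uniform_trapped_outside}. By theorem~\ref{thm:uniform_trapped_outside}, starting from any weakly outer trapped set one finds an element of $\mathcal{C}$ whose region contains it; in particular $T$ is the union of the $\Omega_\Sigma$ for $\Sigma \in \mathcal{C}$. Theorem~\ref{thm:curv-est} yields a uniform $C^2$-bound on elements of $\mathcal{C}$, while proposition~\ref{prop:area-bound} together with the lower bound on $i^+$ yields a uniform area bound. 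Taking a sequence $\Sigma_i \in \mathcal{C}$ with $\Vol(\Omega_{\Sigma_i})$ tending to $\sup_{\Sigma \in \mathcal{C}} \Vol(\Omega_\Sigma)$ and extracting a $C^2$-subsequential limit produces a smooth embedded stable MOTS $\Sigma_* \in \mathcal{C}$ whose enclosed region $\Omega_*$ realizes this supremum.

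It remains to show $\Omega_* = T$, which identifies $\Sigma_* = \del^+T$ and gives the homology claim. Suppose for contradiction that some weakly outer trapped $\Omega'$ satisfies $\Omega' \not\subseteq \Omega_*$. The boundary of $\Omega_* \cup \Omega'$ is smooth away from $\del^+\Omega_* \cap \del^+\Omega'$, and at the intersection curve the outer normals of the two pieces point into the union, so the angular condition $g(\nu^+,\nu^-) \leq 0$ of lemma~\ref{thm:smooth_corner} is fulfilled. Applying that lemma smooths the inward corners to produce a smooth surface $\tilde\Sigma$ with $\theta^+[\tilde\Sigma] \leq 0$ and $\theta^+[\tilde\Sigma] \not\equiv 0$, enclosing a region strictly larger than $\Omega_*$. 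Lemma~\ref{lemma:flow} then produces a smooth strictly outer trapped surface slightly outside $\tilde\Sigma$, which serves as an inner barrier in theorem~\ref{thm:schoen-weak}, and a second application of theorem~\ref{thm:uniform_trapped_outside} restores the lower bound on $i^+$. The resulting stable MOTS lies in $\mathcal{C}$ and bounds a region strictly containing $\Omega_*$, contradicting maximality.

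The main obstacle will be the reduction to the smoothing step. Two subtleties have to be addressed: first, the possibility that $\del^+\Omega_*$ and $\del^+\Omega'$ touch tangentially with coinciding outer normals, in which case lemma~\ref{thm:smooth_corner} does not apply; here the strong maximum principle, proposition~\ref{prop:max_princ}, forces the two surfaces to coincide locally, and a covering argument using $C^2$-compactness rules out a genuinely new piece. Second, the intersection $\del^+\Omega_* \cap \del^+\Omega'$ need not be a smooth transverse curve a priori; but by making a small generic perturbation of $\Omega'$ outward (using lemma~\ref{lemma:flow} on $\del^+\Omega'$, which becomes strictly trapped), transversality is achieved while keeping the resulting region weakly outer trapped and strictly larger than $\Omega_*$, so the construction in the previous paragraph carries through.
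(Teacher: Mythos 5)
Your overall route is genuinely different from the paper's: you try to produce a single volume-maximal region $\Omega_*$ bounded by a stable MOTS and then identify $\Omega_*$ with $T$, whereas the paper shows that every point of $\del^+T$ lies on the outer boundary of a uniformly controlled trapped set and then proves these boundary MOTS are disjoint. In principle your route uses the same toolkit, but two steps have genuine gaps. The first is the compactness step: the curvature bound (theorem~\ref{thm:curv-est}), the area bound (proposition~\ref{prop:area-bound}) and $i^+\geq\delta$ give a subsequential limit of the $\Sigma_i$, and $i^+\geq\delta$ prevents the limit from touching itself from the \emph{outside}, but nothing about a volume-maximizing sequence prevents the limit surface from touching itself from the \emph{inside} (a thin neck of $\Omega_{\Sigma_i}$ pinching off). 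In that case the limit is not embedded, hence not in $\mathcal{C}$, and your supremum need not be attained, so $\Omega_*$ is not defined. The paper meets exactly this difficulty and resolves it by replacing the sequence by an increasing one, $\Omega_n'\subset\Omega_{n+1}'$, built with lemma~\ref{lemma:no_intersect}, theorem~\ref{thm:schoen-weak} and theorem~\ref{thm:uniform_trapped_outside}; monotonicity makes the boundaries converge as graphs from one side, which excludes inside touching. That step, or a substitute for it, is missing from your argument.

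The second gap is the reduction to the transverse-corner lemma. Your transversality fix, flowing $\del^+\Omega'$ outward by lemma~\ref{lemma:flow}, requires $\theta^+\not\equiv0$ on each component being moved, so it gives nothing in the principal case where the offending component of $\del^+\Omega'$ is itself a MOTS (for instance when $\Omega'$ is the region bounded by another MOTS). Moreover your treatment of tangential contact is not correct as stated: proposition~\ref{prop:max_princ} forces local coincidence only when the weakly trapped surface lies \emph{inside} the MOTS $\Sigma_*$; its hypothesis $\sup_{\Sigma_1}\theta^+\leq\inf_{\Sigma_2}\theta^+$ fails whenever a surface with $\theta^+<0$ somewhere touches $\Sigma_*$ from the outside (in time-symmetric flat data the graph of $|x|^4$ over a minimal plane touches it from above with $\theta^+\leq0$ without coinciding), and contact where the two outer normals point in opposite directions, which the paper stresses is not covered by any maximum principle, is not addressed at all. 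The paper's device for all of this is lemma~\ref{lemma:transverse_intersect}: deform the stable MOTS boundary \emph{inward} by the principal eigenfunction (modifying $K$ to $K'$ near marginally stable components), obtain transversality for almost every leaf by Sard's theorem, and use the trapped foliation of the abandoned sliver to guarantee that the MOTS subsequently produced by theorem~\ref{thm:schoen-weak} still encloses the union. Your proposal needs this inward-deformation step or an equivalent; the outward flow does not supply it.
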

\begin{remark}
  \label{rem:weak-barrier}
  If $(M',g',K')$ is a data set where $\del^-M'$ is only a weak
  barrier $\theta^+[\del^-M']\leq 0$, then $(M',g',K')$ can be
  modified to $(\tilde M, \tilde g, \tilde K)$ such that $\del^-\tilde
  M$ is a strong barrier $\theta^+[\del^-\tilde M]< 0$. This
  construction was already used in
  section~\ref{sec:constr-marg-trapp}. The trapped region $\tilde T
  \subset \tilde M$ of this extension is such that $\del^+\tilde T
  \subset M'$, that is, it lies in $M'$, since the region bounded by
  $\del^-\tilde M$ and $\del^- M'$ is a trapped set. However, it might
  be possible that $\del^+ \tilde T \cap \del^- M' \neq \emptyset$. In
  this case the intersection $\del^+ \tilde T \cap \del^- M'$ is a
  sub-collection of the components of $\del^-M'$ which are stable
  MOTS.
\end{remark}
\begin{remark}
  If the dominant energy condition holds, then $\del^+T$ is a
  collection of spheres or tori
  \cite{Hawking-Ellis:1973,Ashtekar-Krishnan:2003,Galloway-Schoen:2006}.
\end{remark}
The proof is along the lines of \cite[Section
4]{Huisken-Ilmanen:2001}. Before we begin the proof of the theorem
we prove some lemmas, which essentially replace the maximum
principle, which is not as powerful for MOTS, as it is for minimal
surfaces.
\begin{lemma}
  \label{lemma:transverse_intersect}
  Let $(M,g,K)$ be an initial data set as in
  theorem~\ref{thm:trapped-region} and $\delta>0$ be given. Let
  $\Omega_1\subset M$ and $\Omega_2\subset M$ be open sets such that
  $\del\Omega_i$ is the disjoint union $\del\Omega_i = \del^-M
  \cup\del^+\Omega_i$ where $\del^+\Omega_i$ is smooth, embedded, and
  $\del^+\Omega_i=\bigcup_{j=1}^{N_i}\Sigma_i^j$ is the union of
  disjoint, stable, connected MOTS $\Sigma_i^j$, $i=1,2$. Then for any
  $\delta>0$, there exists $\Omega_1'\subset \Omega_1$ and data $K'$
  on $M$ with the following properties:
  \begin{enumerate}
  \item \label{item:1} $\del\Omega_1' =\del^-M\cup\del^+\Omega_1'$.
  \item \label{item:2}
    $\del^+ \Omega_1'$ and $\del^+\Omega_2$ intersect transversally,
  \item \label{item:3}
    $\dist(\del^+\Omega_1',\del^+\Omega_1) <\delta$, 
  \item \label{item:4}
    $K'\in C^{1,1}(M)$ and $K'=K$ on $M\setminus\Omega_1$,
  \item \label{item:5}
    $\theta^+$ on $\del^+\Omega_2\cap M\setminus\Omega_1'$ computed
    with respect to $K'$ is at most its value with respect to $K$, and
  \item \label{item:6}
    there exists a foliation $\Sigma_s$, $s\in(-\eps,0]$ of
    $\Omega_1\setminus\Omega_1'$ such that $\Sigma_0=\del^+\Omega_1$ and
    $\theta^+[\Sigma_s]<0$ with respect to the data $K'$.
  \end{enumerate}
\end{lemma}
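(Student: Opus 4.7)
The plan is to produce $\Omega_1'$ by pushing each component $\Sigma_1^j$ of $\del^+\Omega_1$ slightly inward into $\Omega_1$, modifying $K$ in a one-sided collar of $\Sigma_1^j$ if necessary, in direct analogy with the stability-to-strictly-trapped trick used in the proof of theorem~\ref{thm:schoen-weak}.

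First, for each $j$, let $\phi_j>0$ be a principal eigenfunction of $L_M$ on $\Sigma_1^j$ with eigenvalue $\lambda_j\ge 0$, and let $\nu^j$ be the outer normal to $\del^+\Omega_1$, pointing out of $\Omega_1$. Flowing $\Sigma_1^j$ in direction $-\phi_j\nu^j$ produces a foliation $\{\Sigma^j_s\}_{s\in(-\eps,0]}$ of a one-sided collar inside $\Omega_1$, with $\Sigma^j_0=\Sigma_1^j$ and
\begin{equation*}
  \ddeval{\theta^+[\Sigma^j_s]}{s}{s=0}
  = L_M(-\phi_j)=-\lambda_j\phi_j\le 0.
\end{equation*}
If $\lambda_j>0$ the foliation already satisfies $\theta^+[\Sigma^j_s]<0$ on a left neighborhood of $0$. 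If $\lambda_j=0$, I would mimic theorem~\ref{thm:schoen-weak} and replace $K$ on the collar by $K-\tfrac12\chi_j(s)h$, where $h=g-\nu_s^\flat\otimes\nu_s^\flat$ is the leaf metric extended via the foliation normal $\nu_s$, and $\chi_j\in C^{1,1}(\IR)$ is non-negative, vanishes for $s\ge 0$, and is strictly positive for $s<0$ close to zero. This subtracts $\chi_j(s)$ from $\theta^+[\Sigma^j_s]$, and since $\theta^+$ vanishes to first order at $s=0$, a suitable choice of $\chi_j$ makes $\theta^+_{K'}[\Sigma^j_s]<0$ on the collar. Taking $\eps$ so small that the collars are pairwise disjoint lets the local modifications glue to a global $K'\in C^{1,1}(M)$ with $K'=K$ off the union of collars, hence on $M\setminus\Omega_1$.

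Next, I would pick final levels $s_j\in(-\delta,0)$ and set $\del^+\Omega_1':=\bigcup_j\Sigma^j_{s_j}$. Properties~\ref{item:1},~\ref{item:3} and~\ref{item:4} are then automatic for $|s_j|$ small; property~\ref{item:6} is the foliation $\{\Sigma^j_s\}_{s\in(s_j,0]}$, reparameterized into a single parameter; and property~\ref{item:5} follows from the shape of the modification, since at any $x\in\del^+\Omega_2\cap(M\setminus\Omega_1')$ either $K'=K$ or $K'-K=-\tfrac12\chi_j(s)h$ with $\chi_j(s)\ge 0$ and $h$ positive-semidefinite as a symmetric $2$-tensor, so tracing over $\del^+\Omega_2$ gives $\theta^+_{K'}\le\theta^+_K$ at $x$.

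The hardest part will be property~\ref{item:2}, simultaneous transversality of $\del^+\Omega_1'$ with $\del^+\Omega_2$ across all components. I would handle it with Sard's theorem: parameterize each collar by a diffeomorphism $F_j:\Sigma_1^j\times(-\eps,0]\to M$ and let $\pi_j:=\mathrm{pr}_2\circ F_j^{-1}$ restricted to $\del^+\Omega_2\cap\im F_j$; the regular values of $\pi_j$ are precisely those $s$ for which $\Sigma^j_s\pitchfork\del^+\Omega_2$. Each set of regular values has full measure in $(-\eps,0)$, and since there are only finitely many components, any $s_j$ chosen from the intersection of these full-measure sets with $(-\delta,0)$ simultaneously realizes all the required properties.
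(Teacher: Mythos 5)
Your proposal is correct and takes essentially the same route as the paper: the paper's proof simply pushes the components of $\del^+\Omega_1$ into $\Omega_1$ exactly as in the proof of theorem~\ref{thm:schoen-weak}, modifying $K$ by subtracting the non-negative definite tensor $\tfrac{1}{2}\chi(s)h_s$ in a collar near the stable-but-not-strictly-stable components (which gives property~\ref{item:5}), and then invokes Sard's theorem to select a level at which the intersection with $\del^+\Omega_2$ is transversal. The only blemish is a bookkeeping slip in the sign: with your parametrization $s\in(-\eps,0]$, $\Sigma^j_0=\Sigma_1^j$ and the inward leaves at $s<0$, the $s$-derivative of $\theta^+$ at $s=0$ is $L_M(+\phi_j)=\lambda_j\phi_j$, so $\theta^+[\Sigma^j_s]\approx s\lambda_j\phi_j<0$ for $s<0$; the formula $L_M(-\phi_j)$ you wrote is the derivative with respect to flow time in the direction $-\phi_j\nu^j$, and combined with your stated parametrization it would give the opposite sign to first order, though the intended construction and conclusion are clearly the correct ones.
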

\begin{proof}
  By pushing the components of $\del^+\Omega_1$ into $\Omega_1$, as in
  the proof of theorem~\ref{thm:schoen-weak}, while changing the data
  $K$ to $K'$ near components of $\del\Omega_1$ which are stable but
  not strictly stable, we can construct $K'$ and a foliation
  $\Sigma_s$ near $\del\Omega_1$ such that each $\Sigma_s$ has
  $\theta^+[\Sigma_s]<0$, thus satisfying properties~\ref{item:1},
  \ref{item:4} and~\ref{item:6}. By Sard's theorem, $\Sigma_s$ and
  $\del^+\Omega_2$ intersect transversally for almost every
  $s\in(-\eps,0)$. Hence we can pick one such $s$, for which also
  properties~\ref{item:2} and~\ref{item:3} are satisfied.
  Property~\ref{item:5} follows by construction, as we were
  subtracting a non-negative definite tensor from $K$ to obtain $K'$.
\end{proof}
Subsequently, for two sets $\Omega_1,\Omega_2$ we denote by $\Omega_1
\triangle\Omega_2$ the symmetric difference, defined by $\Omega_1
\triangle\Omega_2 = ( \Omega_1 \setminus \Omega_2) \cup ( \Omega_2
\setminus \Omega_1)$.
\begin{lemma}
  \label{lemma:no_intersect}
  Let $(M,g,K)$, $\Omega_1$ and $\Omega_2$ be as in the previous
  lemma. Assume furthermore that $\Omega_1 \triangle
  \Omega_2\neq\emptyset$. Then there exists
  $\Omega\supset\Omega_1\cup\Omega_2$, such that $\del\Omega$ is the
  disjoint union $\del\Omega=\del^-M\cup\del^+\Omega$ where
  $\del^+\Omega$ is an embedded stable MOTS. Any connected component
  of $\del^+\Omega_1$ which intersects $\Omega_2$, lies in the
  interior of $\Omega$.
\end{lemma}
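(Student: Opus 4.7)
The plan is to enclose $\Omega_1\cup\Omega_2$ by a piecewise smooth outer trapped surface, smooth its corners to obtain a weak barrier, and then apply Theorem~\ref{thm:schoen-weak} to the exterior. The main subtlety will be that the construction must be carried out with respect to the modified data $K'$ produced by Lemma~\ref{lemma:transverse_intersect}, so the last step is a barrier argument, based on the strong maximum principle, to show that the MOTS obtained is in fact a MOTS with respect to the original $K$.

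First I would invoke Lemma~\ref{lemma:transverse_intersect} with some small $\delta>0$. This supplies $\Omega_1'\subset\Omega_1$, modified data $K'$ with $K'\equiv K$ on $M\setminus\Omega_1$, a foliation $\{\Sigma_s\}_{s\in(-\eps,0]}$ of the shell $\Omega_1\setminus\Omega_1'$ with $\theta^+[\Sigma_s]<0$ strictly with respect to $K'$ (and $\Sigma_0=\del^+\Omega_1$), and the transverse intersection of $\del^+\Omega_1'$ with $\del^+\Omega_2$ along a smooth curve $\gamma$. Property~\ref{item:5} of the lemma also guarantees that on the piece of $\del^+\Omega_2$ lying outside $\Omega_1'$ one still has $\theta^+\leq 0$ with respect to $K'$. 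The outer part of $\del(\Omega_1'\cup\Omega_2)$ is then the piecewise smooth union of pieces of $\del^+\Omega_1'$ (with $\theta^+<0$) and of $\del^+\Omega_2$ (with $\theta^+\leq 0$), meeting along $\gamma$.

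Next, I would apply Lemma~\ref{thm:smooth_corner} along each component of $\gamma$, in a small enough neighborhood, to smooth the inward corner of $\Omega_1'\cup\Omega_2$ outward into its complement. This produces a smooth embedded surface $\Sigma^\sharp$ which together with $\del^-M$ bounds a region $\Omega^\sharp\supset\Omega_1'\cup\Omega_2$, with $\theta^+[\Sigma^\sharp]\leq 0$ everywhere with respect to $K'$. The hypothesis $\Omega_1\triangle\Omega_2\neq\emptyset$ ensures $\gamma\neq\emptyset$ (otherwise one of $\del^+\Omega_1'$, $\del^+\Omega_2$ would already enclose the other and the lemma would be immediate). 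Now Theorem~\ref{thm:schoen-weak}, applied to the Cauchy data set $(M\setminus\Omega^\sharp,g,K')$ with weak inner barrier $\Sigma^\sharp$ and unchanged outer barrier $\del^+M$, yields a smooth, embedded, stable MOTS $\Sigma$ with respect to $K'$, homologous to $\del^+M$.

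The hard part will be showing $\Sigma\cap\bar\Omega_1=\emptyset$, after which $\Sigma$ becomes a MOTS with respect to the original $K$ because $K=K'$ on $M\setminus\Omega_1$. This is a standard barrier sweep: if a component $\Sigma^c$ of $\Sigma$ entered the shell $\Omega_1\setminus\Omega_1'$ (it cannot enter $\Omega_1'$, as $\Sigma$ lies outside $\Omega^\sharp\supset\Omega_1'$), then at a point of deepest penetration $\Sigma^c$ would be tangent from outside to some leaf $\Sigma_{s_*}$ with agreeing outward normals, and Proposition~\ref{prop:max_princ} applied with $\sup_{\Sigma_{s_*}}\theta^+<0=\inf_{\Sigma^c}\theta^+$ would force $\Sigma^c$ and $\Sigma_{s_*}$ to coincide locally, contradicting the strict inequality on $\theta^+$. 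The same argument at $s=0$ excludes $\Sigma$ from touching $\del^+\Omega_1$. Letting $\Omega$ be the region bounded by $\Sigma$ and $\del^-M$, a path-connectedness argument inside $\Omega_1$ (which $\Sigma$ cannot cross) gives $\Omega_1\subset\Omega$, while $\Omega_2\subset\Omega^\sharp\subset\Omega$ by construction, so $\Omega\supset\Omega_1\cup\Omega_2$. Finally, any component of $\del^+\Omega_1\subset\bar\Omega_1$ is disjoint from $\del\Omega=\del^-M\cup\Sigma$ and therefore lies in the interior of $\Omega$, which covers in particular the stated property for components intersecting $\Omega_2$.
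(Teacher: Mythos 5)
Your overall skeleton (Lemma~\ref{lemma:transverse_intersect}, corner smoothing via Lemma~\ref{thm:smooth_corner}, Theorem~\ref{thm:schoen-weak} on the exterior, then a barrier sweep through the foliated shell to return to the original $K$) is the same as the paper's, but you have dropped the one step the paper cannot do without, and the two claims you use to justify dropping it are both false. First, $\Omega_1\triangle\Omega_2\neq\emptyset$ does \emph{not} force $\gamma\neq\emptyset$: components of $\del^+\Omega_1$ and $\del^+\Omega_2$ can touch tangentially with \emph{opposite} outward normals (back-to-back contact) without any transversal crossing and without either region enclosing the other; this is exactly the configuration the strong maximum principle cannot handle, and it is why the paper, after passing to $\Omega_1'$, applies the neck surgery of Proposition~\ref{thm:trapped_outside} to join the pushed-in copies of such touching components to $\del\Omega_2$ (followed by Lemma~\ref{lemma:flow}) before invoking Theorem~\ref{thm:schoen-weak}. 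Second, your assertion that ``the same argument at $s=0$ excludes $\Sigma$ from touching $\del^+\Omega_1$'' does not work: since $K'=K$ on $M\setminus\Omega_1\supset\del^+\Omega_1$, the outermost leaf $\Sigma_0=\del^+\Omega_1$ has $\theta^+=0$ with respect to $K'$, so the strict inequality needed for a contradiction in Proposition~\ref{prop:max_princ} is unavailable. Touching with agreeing normals then only yields coincidence of components (which Theorem~\ref{thm:schoen-weak} explicitly permits, since components of the inner barrier with $\theta^+\equiv 0$ may reappear in the output), and touching with opposite normals is not controlled by the maximum principle at all.

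Concretely, consider a component $S_2'$ of $\del^+\Omega_2$ that is disjoint from $\Omega_1'$ and untouched by the corner smoothing, but touches a component of $\del^+\Omega_1$ back-to-back. In your construction $S_2'$ enters the inner barrier $\Sigma^\sharp$ with $\theta^+\equiv 0$, so the MOTS produced by Theorem~\ref{thm:schoen-weak} may contain $S_2'$ as a component; then $\Sigma\cap\bar\Omega_1\neq\emptyset$, your key intermediate claim fails, and your final argument (``every component of $\del^+\Omega_1$ is disjoint from $\del\Omega$'') collapses. The paper avoids this by using the surgery to strictly enclose such touching pairs inside $\Omega''$, so that they cannot survive as boundary components of the final $\Omega$. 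To repair your proof you would have to reinstate that surgery/joining step (and, where a component of $\del^+\Omega_1$ both crosses $\del^+\Omega_2$ and touches another component of it externally, apply the neck construction locally there as well), rather than argue that tangential contact cannot occur.
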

\begin{proof}
  There is nothing to prove if $\del (\Omega_1 \cup \Omega_2)$ is a
  smooth embedded manifold.  Thus we can assume that $\del^+\Omega_1$
  and $\del^+\Omega_2$ intersect.  Fix $\delta>0$ to be the distance
  at which we can apply proposition~\ref{thm:trapped_outside} in
  $(M,g,K)$. We use lemma~\ref{lemma:transverse_intersect}, to deform
  $\Omega_1$ and $K$ to $\Omega_1'$ and $K'$ with the stated
  properties for this choice of $\delta$. As $\del^+\Omega_1'$ and
  $\del^+\Omega_2$ intersect transversally,
  lemma~\ref{thm:smooth_corner} allows us to smooth out the corner of
  $\del (\Omega_1'\cup\Omega_2)$ in the outward direction.
  
  Furthermore, all stable components of $\del^+\Omega_1$ which were
  touching $\del\Omega_2$ but not intersecting $\Omega_2$ give rise to
  components of $\del^+\Omega_1'$, which are disjoint of
  $\del^+\Omega_2$ and at a distance at most $\delta$ to
  $\del^+\Omega_2$.  Thus we can apply the surgery procedure of
  proposition~\ref{thm:trapped_outside} to join these components to
  $\del\Omega_2$. This yields an open set $\Omega'$ with
  $\Omega'\supset\Omega_1'\cup\Omega_2$ and $\del\Omega'$ is the
  disjoint union $\del\Omega' = \del^-\Omega'\cup\del^+\Omega'$ where
  $\del^-\Omega'= \del^- M$ and $\del^+\Omega'$ is $C^{1,1}$ and has
  $\theta^+[\del^+\Omega']\leq 0$ and
  $\theta^+[\del^+\Omega']\not\equiv 0$, as $\theta^+\not\equiv 0$ on
  the components of $\del^+\Omega'$ which were created from joining a
  component of $\del^+\Omega_1'$ to a component of $\del^+\Omega_2$.
  We can then use the flow from lemma~\ref{lemma:flow} to smooth out
  the boundary of $\Omega'$, yielding
  $\Omega''\supset\Omega'\supset\Omega_1'\cup\Omega_2$ with $\Omega''$
  an open set. Note, by construction all components of
  $\del^+\Omega_1'$ and all components of $\del^+\Omega_2$ which were
  joined with components from $\del^+\Omega_1'$ are contained in the
  interior of $\Omega''$.

  Now an application of theorem~\ref{thm:schoen-weak} to the data
  $(M\setminus \Omega'',g,K)$, with inner boundary $\del^-(M\setminus
  \Omega'') = \del^+\Omega''$, and
  outer boundary $\del^+M$ yields a set $\Omega\supset\Omega''$ with
  boundary $\del\Omega$ the disjoint union
  $\del\Omega=\del^-M\cup\del^+\Omega$ where $\del^+\Omega$ is a smooth, stable MOTS. 

  By construction all components of $\del^+\Omega_1'$ and
  $\del^+\Omega_2$ are in the interior of $\Omega$.  Furthermore, an
  application of the strong maximum principle as in the proof of
  theorem~\ref{thm:schoen-weak} implies that $\del^+\Omega$ can not
  penetrate the region $\Omega_1\setminus\Omega_1'$ as this is
  foliated by trapped surfaces. In particular all components of
  $\del^+\Omega_1$ which meet $\del^+\Omega_2$ are contained in the
  interior of $\Omega$.
\end{proof}
\begin{remark}
  The preceding lemma implies the uniqueness of outermost MOTS.
\end{remark}
\begin{proof}[Proof of theorem~\ref{thm:trapped-region}]  
  Subsequently we assume that $T$ is non-empty, and therefore
  $(M,g,K)$ contains trapped regions, as otherwise there is nothing to
  prove.  We first show that we can define $\del^+ T$ by a collection
  of sets with much more well-behaved boundaries. We define $\CT$ to
  be the collection of all outer trapped sets $\Omega$, such that the
  outer boundary $\del^+\Omega$ satisfies the following four
  assumptions:
  \begin{enumerate}
  \item \label{item:7}
    $\theta^+[\del^+\Omega] = 0$,
  \item \label{item:8} every component of $\del^+\Omega$ is stable,
    and thus satisfies $\sup |A| \leq C$, where $C$ is the constant
    from theorem~\ref{thm:curv-est}, and depends only on
    $\|\RiemM\|_{C^0(M)}$, $\|K\|_{C^1(M)}$ and
    $\inj_\rho(M,g,K;\del M)$.
  \item \label{item:10} $i^+(\del^+\Omega) \geq \delta$ where $\delta$
    depending on the same data as $C$ above is the $\delta$ from
    theorem~\ref{thm:uniform_trapped_outside}.
  \item \label{item:9} $|\del^+\Omega| \leq C$, 
    where $C$ is the area resulting from
    proposition~\ref{prop:area-bound} applied to $\del^+\Omega$ with
    $i^+(\del^+\Omega)\geq\delta$ for the above $\delta$. This $C$
    also depends only on $\inj_\rho(M,g,K;\del M)$, $\|\RiemM\|_{C^0(M)}$ and
    $\|K\|_{C^1(M)}$.
  \end{enumerate}
  To this end, assume that $\Omega$ is an outer trapped set, which
  does not lie in $\CT$. Then we construct a set
  $\Omega'\supset\Omega$ which lies in $\CT$ by applying
  theorem~\ref{thm:uniform_trapped_outside} and using
  proposition~\ref{prop:area-bound} to prove the area estimate.
  
  We thus see that
  \begin{equation*}
    T = \bigcup_{\Omega\in\CT} \Omega.
  \end{equation*}
  The first claim is that for each point $p\in\del^+ T$ there exists
  $\Omega\in\CT$ such that $p\in \del^+ \Omega$. Clearly, for every
  $n$ there exists $\Omega_n$ such that $\dist(\Sigma_n, p) <
  \frac{1}{n}$, where $\Sigma_n = \del^+\Omega_n$. We can now appeal
  to the compactness theorem \cite[Theorem
  1.3]{Andersson-Metzger:2005} for stable MOTS with bounded curvature
  and bounded area, which, after passing to a sub-sequence, yields a
  limit $\Sigma$ of $\del^+\Omega_n$ in $C^{1,\alpha}$. This $\Sigma$
  is a smooth stable MOTS with bounded curvature and bounded area.
  Furthermore, $\Sigma$ is the outer boundary of a set $\Omega$, as
  the $\del^+\Omega_n$ can eventually be represented as graphs over
  $\Sigma$.
  
  However, $\Sigma$ is not necessarily embedded, as the limit of
  embedded surfaces might meet itself. As
  $i^+(\del^+\Omega)\geq\delta$, the only thing that prevents $\Sigma$
  from being embedded are points where $\Sigma$ touches itself from
  the inside. To remedy this, we can replace the sequence of the
  $\Omega_n$ by a sequence $\Omega_n'$ which is increasing in the
  sense that $\Omega_n'\subset \Omega_{n+1}'$ for all $n$. We proceed
  inductively and let $\Omega_1' := \Omega_1$. Assume that we have
  constructed
  \begin{equation*}
    \Omega_1' \subset \Omega_2' \subset \ldots \subset \Omega_{n-1}'
  \end{equation*}
  with $\Omega_k'\in CT$ for $k = 1, \ldots, n-1$.  Consider the set
  $\Omega_n \cup \Omega_{n-1}'$. Either this set has a smooth embedded
  boundary, in which case we can use theorem~\ref{thm:schoen-weak} to
  ensure the existence of $\Omega_n' \supset \Omega_n \cup
  \Omega_{n-1}'$ or $\Omega_n \cup \Omega_{n-1}'$ does not have a
  smooth boundary.  Then lemma~\ref{lemma:no_intersect} yields a
  barrier for theorem~\ref{thm:schoen-weak} and allows us to construct
  $\Omega_n' \supset \Omega_n \cup \Omega_{n-1}'$. By eventually applying
  theorem~\ref{thm:uniform_trapped_outside}, we can assume that
  $\Omega_n'\in\CT$.

  We will now relabel $\Omega_n := \Omega_n'$ and
  $\Sigma_n:=\Sigma_n'$. As explained above, there is a subsequence of
  the $\Omega_n$ such that the $\Sigma_n$ converge in $C^{1,\alpha}$
  to a stable MOTS $\Sigma$ which is the outer boundary of a set
  $\Omega$ and has $i^+(\Sigma)\geq\delta$, thus $\Sigma$ can not
  touch itself on the outside. Since the $\Omega_n$ are increasing,
  $\Sigma$ can not touch itself on the inside either. This follows
  from the fact that the $\Sigma_n$ converge as graphs from the inside
  to $\Sigma$. Thus if $\Sigma$ touches itself on the inside, so would
  the $\Sigma_n$. But each $\Sigma_n$ is embedded, and hence $\Sigma$
  is embedded and $\Omega\in\CT$.
  
  Next we show that $\del^+ T$ consists of a smooth collection of
  MOTS. To this end assume first that $\Omega_1$ and $\Omega_2$ are
  such that the outer boundaries $\del^+\Omega_k$ meet $\del^+ T$ for
  $k=1,2$. Let $\Sigma_k$ be a component of $\del^+\Omega_k$ that
  meets $\del T$. From lemma~\ref{lemma:no_intersect} we infer that
  either $\Sigma_1=\Sigma_2$ or $\dist (\Sigma_1, \Sigma_2) > 0$.

  Furthermore, the surfaces which meet $\del^+ T$ have positive
  distance to each other. To this end suppose that $\Omega$ and
  $\Omega_i$, are such that a component $\Sigma$ of $\del^+\Omega$
  meets $\del T$ and a component $\Sigma_i$ of $\del^+\Omega_i$ which
  meets $\del^+T$, $i\geq 1$ is such that $\Sigma \cap \Sigma_i =
  \emptyset$, but the closure of $\bigcup_{i\geq 1} \Sigma_i$
  intersects $\Sigma$. Note that we cannot have $\Sigma_i\subset
  \Omega$ as by the strong maximum principle, $\Sigma_i$ would have to
  agree with $\Sigma$ in that case. Hence
  $\dist(\Sigma_i,\del\Omega)\to 0$ as $i\to\infty$ and $\Sigma_i$
  lies outside of $\Omega$. In this case we can eventually apply the
  surgery procedure of proposition~\ref{thm:trapped_outside} to join
  $\Sigma_i$ to $\Sigma$, which yields an open set $\Omega'\in\CT$
  such that $\Sigma_i\cup\Sigma$ is contained in the interior of
  $\Omega'$, which contradicts the fact that $\Sigma$ and $\Sigma_i$
  meet $\del T$.

It follows that $\del T$ is a collection of disjoint stable MOTS.
\end{proof}

%
%
\section*{Acknowledgements}
The authors wish to thank Walter Simon, Marc Mars, Greg Galloway, Rick
Schoen and Gerhard Huisken for helpful conversations. The second
author would also like to thank Michael Eichmair and Leon Simon for
their comments.


%
\bibliographystyle{amsalpha}
\bibliography{extern/references}
\end{document}